\documentclass[runningheads]{llncs}

\usepackage{amsmath,amssymb}
\usepackage{array,booktabs,float,graphicx,listings,multirow,xcolor,tikz,url}
\usepackage{longtable,capt-of,needspace}
\usetikzlibrary{arrows.meta,backgrounds,fit}
\definecolor{vanilla}{RGB}{75,105,145}
\definecolor{flexible}{RGB}{0,125,120}
\definecolor{slotblue}{RGB}{20,85,155}
\definecolor{slotorange}{RGB}{180,80,10}

\newcommand{\Seq}{\texttt{Seq}}
\newcommand{\Bag}{\texttt{Bag}}
\newcommand{\Set}{\texttt{Set}}
\newcommand{\TEmb}{\operatorname{TEmb}}
\newcommand{\TRen}{\operatorname{TRen}}
\newcommand{\EqCert}{\operatorname{EqCert}}

\providecommand{\leanref}[2]{}

\title{Typed Flexible-Arity Slotted E-Graphs:\\
A Soundness Construction and an Alloy Case Study}
\titlerunning{Typed Flexible-Arity Slotted E-Graphs}
\author{Guanxuan Wu\orcidID{0009-0006-8102-7329} \and
Allison Sullivan\orcidID{0000-0001-7400-2218}}
\authorrunning{G. Wu and A. Sullivan}
\institute{University of Texas at Arlington, Arlington, TX 76019, USA\\
\email{gxw6804@mavs.uta.edu, allison.sullivan@uta.edu}}

\begin{document}
\maketitle
\begin{abstract}
Slotted e-graphs represent open terms modulo consistent renaming, while
algebraic operators benefit from canonical sequence, bag, or set children.
In this paper, we compose the two at a specification level: typed slot-mapped invocations inhabit operator-declared
ports whose sibling quotient and recursive flattening licenses are certified
separately.  A generic finite quotient presentation proves exactness of its
least-orbit normal form, while certified records specify effective-support
kernel extraction and collision.  For abstract obligation traces carrying
local endpoint certificates, we prove finite-unfolding equational soundness.
We use an Alloy case study to compare seven
related pipeline arms on a frozen corpus and a controlled transformation
suite.  
\keywords{E-graphs \and Binding \and AC rewriting \and Alloy \and
Canonical intermediate representations}
\end{abstract}

\section{Introduction}

E-graphs provide a compact representation of many equivalent expressions, enabling equality saturation to explore alternative rewrites without discarding earlier forms and to defer the selection of a preferred expression to an application-specific cost model~\cite{ogEgraphs1,egraphsOG,eqsat,eqsat-lmcs,egg}. 
Orthogonal extensions complicate those keys: algebraic operators benefit
from sequence-, bag-, or set-valued children rather than explicit
associativity, commutativity, and idempotency (ACI) saturation; open terms
benefit from interfaces quotiented by consistent renaming.  Neither dimension
alone covers the case where an algebraic container holds a typed invocation
of an open e-class.

We give a proof-carrying composition of the typed slots and quotient space over the aforementioned algebraic properties in equality saturation. 
Each operator declares ports generated
by \texttt{One}, $K^J$ for $K\in\{\Seq,\Bag,\Set\}$, unary binding, and
binder blocks.  A child invocation $m*a$ embeds the exposed slots of e-class
$a$ into its caller.  The carrier determines only the certified sibling
quotient; a separate, operator- and path-specific certificate licenses
recursive same-head flattening.  Equality-certified permutations act on
exposed slots, while descriptor-certified automorphisms act locally inside a
binder block.  Rebuilding normalizes these actions jointly with union-find
leaders in the intended design without inferring child equality from parent
equality.  The checked trace theorem instead consumes local obligation
evidence and does not verify an operational rebuild implementation.

The proposed construction is quotient-first.  Leader finding may contract
support through a proper embedding, so canonicalization first extracts an
exact-support leader kernel and retains a source-to-kernel certificate. Such process 
then minimizes local leader and binder orbits before aggregating bags or
deduplicating sets, and only afterward minimizes the finite free-slot-renaming
orbit.  The checked mathematical layer proves exactness for any supplied
finite quotient presentation and validates supplied extraction and collision
records; showing that the concrete procedure constructs those records is a
separate refinement obligation.  A collision may create a parent edge only
when the composed certificate inhabits that edge's exact directed type.

We prove finite-unfolding soundness for abstract finite certified traces.
Their nine-step labels are classified as local obligation addition, rekeying,
composition, or removal, each carrying the needed typed endpoint evidence.
Under these premises, two certified unfoldings whose parent paths reach one
leader and whose caller maps align by a recorded symmetry word are equal in
every supplied model. 
To note, the result does not establish concrete transition
semantics, unrestricted ACI completeness, or refinement of the Java code.

To evaluate our e-graph construction on a concrete formal dataset, we explore an Alloy case study with seven related pipeline arms on a
frozen corpus and a targeted capability suite. The pipeline separates a targeted
binder-permutation capability from whole-pipeline structural consolidation;
the measurements are bounded implementation evidence, not a proof that
the adapter realizes the abstract construction. In our controlled benchmark, the slot-aware configurations recognized all 5,500 constructed equivalences, including all 500 binder-block permutations, compared with 22 for the De Bruijn baselines. Across 61,598 AST-distinct Alloy predicate pairs, the Certificate-Integrated IR recognized 4,088 correct student–oracle equivalences versus 2,160 for the De Bruijn baselines, with no observed zero-distance matches among pairs labeled incorrect.

Our contributions are: (i) a typed flexible-arity port grammar for slotted
invocations; (ii) a generic finite-orbit exactness theorem and certified
effective-support record interfaces with explicit parent-versus-interface
restriction; (iii) an abstract certificate-preserving trace contract and
finite-unfolding soundness theorem; and
(iv) a bounded Alloy study whose evidence boundary is stated separately from
the mathematical result.  Full statements and proof details appear in
Appendix~\ref{app:formal-details}. 

\section{Background and Scope}
\label{sec:background}

\paragraph{E-graphs and equality saturation.}
An e-graph compactly records expressions that have been shown equal under a
chosen set of equations. An \emph{e-node} stores an operator and references to
its child \emph{e-classes}; an e-class groups alternative expressions already
known to be equal. Thus a child reference can stand for several expressions,
and shared subexpressions need not be copied~\cite{ogEgraphs1,egraphsOG,egg}.

For instance, assume an arithmetic language in which $x\times 2=x+x$ is a valid
equation. Let $c_a=\{a\}$ and $c_2=\{2\}$ be singleton e-classes.
The term $a\times 2$ is represented by the e-node
$\times(c_a,c_2)$. Applying the equation does not replace this node:
it adds $+(c_a,c_a)$ and places both alternatives in one e-class,
$$
c=\{\times(c_a,c_2),\; +(c_a,c_a)\}.
$$
A parent e-node $g(c)$ therefore represents both $g(a\times 2)$ and
$g(a+a)$ while sharing their common subterms, as shown in Figure~\ref{fig:typed-slotted-egraphs}(a). 
\begin{figure}[t]
  \centering
  \includegraphics[width=\linewidth]{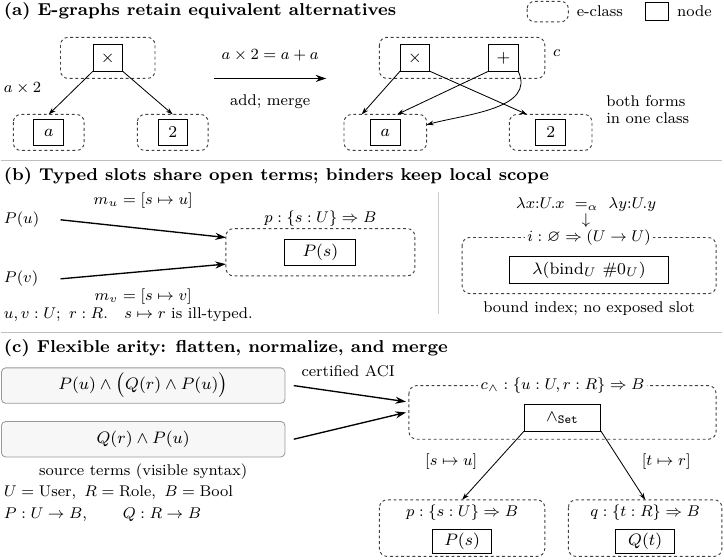}
  \caption{E-graphs, typed slots, and flexible arity.
  (a) Rewriting retains equivalent alternatives.
  (b) Typed maps share shapes without equating distinct free invocations;
  bound indices stay local.
  (c) Certified associativity flattens visible syntax; the Set quotient removes
  order and duplicates. Dashed boxes are e-classes; solid inner boxes are e-nodes.}
  \label{fig:typed-slotted-egraphs}
\end{figure}

This propagation from equal children to equal parent expressions is called 
\emph{congruence}. 
An implementation of it uses union--find to track merged
classes and \emph{hash-consing} to reuse nodes with the same operator and
child-class key. After a merge, a \emph{rebuilding} protocol updates parent keys and
merges any newly identical parents. To achieve \emph{equality saturation,} the
equation application and rebuilding process is repeated until no new nodes or equalities are found or a
resource limit is reached. The alternatives remain available until
\emph{extraction} selects a represented term, for example one with least
cost under a chosen cost model. In this paper, the e-graph provides the
shared structure on which flexible-arity operators and typed slot renamings
are combined.

\paragraph{Algebraic structure and binding.}
Ordinary positional nodes do not internalize associativity, commutativity,
and idempotence (ACI): explicit bidirectional rules may materialize many
parenthesizations and permutations. For conjunction, associativity permits
flattening nested conjunctions, commutativity forgets operand order, and
idempotence removes repeated operands. These are distinct choices: a
sequence preserves order and multiplicity, a bag forgets order but retains
multiplicity, and a set forgets both. Rewriting modulo AC and canonized
completion provides the established basis for flat bags and
sets~\cite{conchon2012canonized,kapur2021ModularAC}. Here we combine these
techniques with typed open-term interfaces.

Variable binding introduces a separate source of syntactic variation:
$\forall x{:}T.\,P(x)$ and $\forall y{:}T.\,P(y)$ differ only in a
consistently renamed bound variable. This is $\alpha$-equivalence, as a typed example is
shown in Figure~\ref{fig:typed-slotted-egraphs}(b). An open
term, such as the body $P(x)$ considered outside its quantifier, must retain
its dependence on $x$; sharing it in another scope therefore requires an
explicit account of which variable fills that position.

Prior work addresses these representation choices in several ways.
Egglog supports typed container values in a relational fixpoint
system~\cite{egglog,egglog_software}; E-graphs Modulo Theories supplies a
general interface to theory-specific canonical
domains~\cite{zucker2025omeletsneedonionsegraphs}. Lifting E-Graphs represents open
terms as functions over interfaces~\cite{zucker2026lifting}, while
Categorical E-Graphs treat binding through hierarchical
hypergraphs~\cite{egBindings}. Slotted E-Graphs instead expose a finite slot interface
$S_a$ for each class and represent an occurrence as $m*a$, where $m$ maps
$S_a$ into the caller~\cite{slottedEGraphs}. The slots record the class's
exposed variable positions; the map says which caller variables fill them.
Given a formal context with two defined types, \texttt{User} and \texttt{Photo}, 
there could be an arbitrary number of slots $u_1,...,u_k\in\texttt{User}, p_1,...,p_r\in\texttt{Photo}$
defined; 
in our typed setting, a \texttt{User} slot must map to a \texttt{User}
coordinate, and a \texttt{Photo} slot to a \texttt{Photo} coordinate.
The notation $m*a$ denotes an occurrence with that interface map, not a
claim that arbitrary assignments to its free variables give equal values.
Shapes quotient concrete slot
names; separately justified groups record symmetries, and a coordinate may be
removed only after independence is proved. De Bruijn indices normalize
ordinary alpha-renaming but do not express side-conditioned
permutations of declaration blocks~\cite{deBruijn,maziarz2021HashingAlpha}.

\paragraph{Alloy models and checks.}
Alloy is a typed relational specification language with temporal
operators~\cite{alloy,kodkod}. A model describes sets of objects and relations
between them. A \texttt{sig} declares a set of atoms, a field declares a
relation, \texttt{extends} declares a subsignature, and a \texttt{pred}
packages a formula. In the social-media model of
Figure~\ref{fig:alloy-certificate-example}, \texttt{User} and \texttt{Photo}
are sets, \texttt{Ad} is a subset of \texttt{Photo}, and
\texttt{follows: set User} associates each user with any number of users.
The multiplicity \texttt{one} in \texttt{date: one Day} requires exactly
one day per photo. Dot denotes relational join:
\texttt{u.follows.posts} is the set of photos posted by users whom
\texttt{u} follows. The formula \texttt{p in u.sees} tests whether
\texttt{u} sees \texttt{p}, and \texttt{all u: User |} universally
quantifies a user.

The Alloy Analyzer searches for instances satisfying a predicate with
\texttt{run}, or for counterexamples to an assertion with \texttt{check}.
The search uses a specified object scope and, for bounded temporal analysis,
trace bounds. Checking equivalence asks whether the two predicates can
disagree in a permitted instance; finding no disagreement is evidence within
those bounds. In our dataset, each exercise supplies an oracle predicate
expressing the intended constraint. A student predicate is overconstrained
when it excludes oracle-allowed instances, underconstrained when it admits
oracle-forbidden instances, or both. The evaluation inherits the dataset's
labels and reports its own bounded checks separately.

\paragraph{Scope of the construction.}
Our scope is the following composition: operator-declared
$\Seq/\Bag/\Set$ ports contain typed slot-mapped invocations, and their keys
are rebuilt jointly with leaders, invocation embeddings, certified slot
symmetries, and binder-block automorphisms. A port is an operator's declared
argument position or argument container. Flat construction opens only
visible applications of the same type-instantiated operator; it never unfolds
an opaque invocation. A carrier annotation contributes no semantic equation
without its endpoint-indexed certificate: the evidence must identify the
exact two well-typed expressions that the step equates.

In the Alloy study, \emph{canonical} means canonical under the implementation's
stored keys and declared rewrite vocabulary; the checked normal-form result
is instead generic over a supplied finite presentation. Neither layer claims
semantic completeness, unrestricted saturation termination, parser
correctness, full Java refinement, or full-corpus experimental replay.

\section{Typed Flexible-Arity Slotted E-Graphs}
\label{sec:formalism}

We fix one typed source signature and semantic profile.  Every quotient,
flattening law, binder automorphism, and graph equality used below carries an
endpoint-indexed certificate; annotations alone introduce no equation.

\subsection{Typed contexts, embeddings, and ports}

The familiar cost of materializing AC closure motivates a flat carrier.
\begin{lemma}[Size of a Complete Binary AC Closure]
\label{res:F01}
\leanref{TSG-FND-001}{TypedSlottedEGraphsPaper.ACClosureCounts.completeBinaryACClosureCounts}
\leanref{TSG-ATOM-001}{TypedSlottedEGraphsPaper.ACClosureCounts.completeBinaryACProductClassCount}
\leanref{TSG-ATOM-002}{TypedSlottedEGraphsPaper.ACClosureCounts.completeBinaryACOperatorNodeCount}
Represent a product class in the free commutative semigroup by a nonempty
square-free subset of $n$ generators, and a binary operator node by an ordered
pair of disjoint nonempty subsets.  For every $n$, these two finite types have
\leanref{TSG-ATOM-043}{TypedSlottedEGraphsPaper.ACClosureCounts.productClassCountThetaTwoPow}
\leanref{TSG-ATOM-044}{TypedSlottedEGraphsPaper.ACClosureCounts.binaryOperatorNodeCountThetaThreePow}
constructive codings of sizes $2^n-1$ and $3^n+1-2^{n+1}$, respectively.
Their counting functions are $\Theta(2^n)$ and $\Theta(3^n)$.
\end{lemma}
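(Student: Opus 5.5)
The plan is to build explicit bijections with simple finite types and count them; asymptotics then follow from elementary bounds. Fix the generator set $[n]=\{1,\dots,n\}$. A nonempty square-free subset is an element of the powerset of $[n]$ other than $\emptyset$. For the product classes I would use the bijection between subsets and characteristic functions $[n]\to\{0,1\}$, that is, $\mathrm{Fin}\,n\to\mathrm{Bool}$. This set has $2^n$ elements, and removing the single empty subset gives the constructive coding of size $2^n-1$. In a proof assistant this amounts to a subtype of \texttt{Finset (Fin n)} cut out by nonemptiness, with cardinality computed as $\lvert\mathcal P([n])\rvert-1$.

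For operator nodes I would code an ordered pair $(A,B)$ of disjoint subsets by the labelling $[n]\to\{L,R,\text{none}\}$. Here $i\mapsto L$ if $i\in A$, $i\mapsto R$ if $i\in B$, and $i\mapsto\text{none}$ otherwise. Disjointness is exactly what makes this well defined, and the map is a bijection onto all $3^n$ labellings.

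Next I remove the pairs with $A=\emptyset$ or $B=\emptyset$ by inclusion--exclusion. Pairs with $A=\emptyset$ correspond to labellings avoiding $L$, of which there are $2^n$. Pairs with $B=\emptyset$ likewise number $2^n$. Pairs with both empty number $1$. The count is therefore $3^n-2\cdot 2^n+1=3^n+1-2^{n+1}$. Constructively, I would present the type as a subtype of $\mathrm{Fin}\,n\to\mathrm{Fin}\,3$ with predicates ``some $i$ maps to $L$'' and ``some $i$ maps to $R$''. I would then compute the complement's cardinality as a union of two subtypes with the stated intersection.

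For the asymptotics, $2^n-1$ lies in $[2^{n-1},2^n]$ for $n\ge1$, which gives $\Theta(2^n)$. For the second count the upper bound is $3^n+1-2^{n+1}\le 3^n$. For the lower bound I would show $3^n+1-2^{n+1}\ge 3^n/3$ for $n\ge2$, using $2^{n+1}\le \tfrac{2}{3}3^n$ whenever $(3/2)^n\ge 3$, i.e.\ $n\ge3$, and checking small cases directly. Alternatively, since the counting is in natural numbers, I would prove $2^{n+1}\le 3^n+1$ by induction so that the truncated subtraction is exact.

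I expect the main obstacle to be the inclusion--exclusion step in a constructive, natural-number setting. Truncated subtraction requires first establishing $2^{n+1}\le 3^n+1$, and the cardinality of the complement must be computed as a union of two subtypes, not merely asserted. I would handle this by proving the additive identity $\#\text{nodes}+2^{n+1}=3^n+1$ directly, and only then deriving the subtraction form.
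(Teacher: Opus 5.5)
Your proposal follows essentially the same route as the paper's proof: you code subsets by binary assignments $[n]\to\{0,1\}$ and ordered disjoint pairs by left/right/absent ternary assignments, discard the empty cases to obtain $2^n-1$ and $3^n+1-2^{n+1}$, and then prove elementary growth bounds. The one slip is that your lower bound $3^n+1-2^{n+1}\ge 3^n/3$ is false at $n=2$ (it reads $2\ge 3$), so state it for $n\ge 3$; that is the threshold your own condition $(3/2)^n\ge 3$ gives, and it is all that $\Theta(3^n)$ requires.
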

\begin{proof}See Appendix~\ref{app:proof-F01}.\end{proof}

\leanref{TSG-FND-004}{TypedSlottedEGraphsPaper.AtomicLedger.FiniteContainerCarrierDefinition}
For a carrier $X$, let
\begin{align*}
\Seq(X)&=\coprod_{k\in\mathbb N}X^k,&
\Bag(X)&=\{\mu:X\to\mathbb N\mid\operatorname{supp}(\mu)\text{ finite}\},\\
\Set(X)&=\{Y\subseteq X\mid |Y|<\infty\},&
K^J(X)&=\{c\in K(X)\mid |c|\in J\}.
\end{align*}
Here bag cardinality counts occurrences, $J\ne\varnothing$, and
$K\in\{\Seq,\Bag,\Set\}$; write $K^+$, $K^0$, and $K^{=k}$ for
$J=\mathbb N_{>0},\mathbb N,\{k\}$.  Sequences observe order and
multiplicity, bags quotient order, and sets quotient order and duplication.

\begin{definition}
\label{res:F02}
\leanref{TSG-DEF-001}{TypedSlottedEGraphsPaper.FiniteTypedContext.f02CanonicalContextEncoding}
Fix an injective rank $\operatorname{rank}:\mathsf{Ty}\to\mathbb N$; types
may, for example, be generated by
\[
\tau::=\alpha\mid\texttt{Int}\mid\texttt{Bool}\mid\tau_1\to\tau_2
\mid\texttt{Rel}(\tau_1,\ldots,\tau_k)\mid T(\tau_1,\ldots,\tau_k).
\]
A \textbf{finite typed context} is a finite list of coordinate types;
positions are distinct slots even when their types agree.  Its canonical
context $\operatorname{Can}(\Gamma)$ is the stable rank-sorted list.  It
preserves every type-fibre cardinality and names each sorted coordinate by
$(\tau,\mathsf{free},i)$, where $i$ is below that fibre's cardinality.  Bound
coordinates are represented separately by a typed de Bruijn scope.
\end{definition}

\begin{definition}
\label{res:F03}
\leanref{TSG-DEF-002}{TypedSlottedEGraphsPaper.paperTypedEmbeddingRenamingDefinition}
For finite typed contexts $S,T$, a \textbf{typed slot embedding} is a
type-preserving injection,
\[
\TEmb(S,T)=\{m:S\hookrightarrow T\mid
\operatorname{ty}(m(x))=\operatorname{ty}(x)\}.
\]
A \textbf{typed slot renaming} is an embedding equipped with a typed inverse
and both inverse laws.  Write $\TRen(S,T)$ for these proof-relevant
isomorphisms; a typed permutation is an element of $\TRen(S,S)$.
\end{definition}

An e-class identifier $a$ has interface $(S_a,\tau_a)$.  Its invocation in
caller context $\Gamma$ is $m*a$ with $m\in\TEmb(S_a,\Gamma)$ and output
$\tau_a$; $a$ abbreviates $\operatorname{id}_{S_a}*a$.  The star is a
constructor, not a map action.

\begin{proposition}
\label{res:F04}
\leanref{TSG-PROP-001}{TypedSlottedEGraphsPaper.typedEmbeddingCategoryLaws}
\leanref{TSG-ATOM-003}{TypedSlottedEGraphsPaper.TypedSlotEmbedding.id}
\leanref{TSG-ATOM-004}{TypedSlottedEGraphsPaper.TypedSlotEmbedding.comp}
For finite typed contexts $S,T,T'$, the identity is injective and composition
preserves typed embeddings:
\[
\operatorname{id}_S\in\TEmb(S,S),\qquad
\frac{m\in\TEmb(S,T)\quad e\in\TEmb(T,T')}{e\circ m\in\TEmb(S,T')}.
\]
\leanref{TSG-ATOM-005}{TypedSlottedEGraphsPaper.TypedSlotRenaming.symm}
For $\rho\in\TRen(S,T)$, its stored typed inverse lies in $\TRen(T,S)$ and
$\rho^{-1}\circ\rho=\operatorname{id}_S$.
\end{proposition}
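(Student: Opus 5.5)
We verify the three claims of Proposition~\ref{res:F04} directly from Definition~\ref{res:F03}, unfolding membership in $\TEmb$ into its two conjuncts, injectivity and type preservation, and checking each conjunct separately.

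\emph{Identity.} The map $\operatorname{id}_S$ is injective, since $\operatorname{id}_S(x)=\operatorname{id}_S(y)$ is literally $x=y$. It preserves types because $\operatorname{ty}(\operatorname{id}_S(x))=\operatorname{ty}(x)$ holds by reflexivity. Hence $\operatorname{id}_S\in\TEmb(S,S)$.

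\emph{Composition.} Given $m\in\TEmb(S,T)$ and $e\in\TEmb(T,T')$, injectivity of $e\circ m$ follows in the standard way. If $e(m(x))=e(m(y))$, then injectivity of $e$ gives $m(x)=m(y)$, and injectivity of $m$ gives $x=y$. For typing we chain the two hypotheses:
\[
\operatorname{ty}(e(m(x)))=\operatorname{ty}(m(x))=\operatorname{ty}(x),
\]
where the first equality is type preservation of $e$ at the point $m(x)\in T$ and the second is type preservation of $m$ at $x$. Positions are distinct slots even when their types agree, so no identification of equal-typed coordinates interferes with this argument.

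\emph{Renamings.} Let $\rho\in\TRen(S,T)$ be given as a tuple consisting of a forward embedding $\rho\in\TEmb(S,T)$, a stored typed map $\rho^{-1}:T\to S$, and the two laws $\rho^{-1}\circ\rho=\operatorname{id}_S$ and $\rho\circ\rho^{-1}=\operatorname{id}_T$. To place $\rho^{-1}$ in $\TRen(T,S)$, we take $\rho^{-1}$ as the forward map, $\rho$ as its inverse, and swap the two inverse laws.

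The one step with content is showing that $\rho^{-1}\in\TEmb(T,S)$.
\begin{itemize}
\item Injectivity follows from the law $\rho\circ\rho^{-1}=\operatorname{id}_T$: applying $\rho$ to both sides of $\rho^{-1}(u)=\rho^{-1}(v)$ yields $u=v$.
\item If the definition records type preservation of the inverse, it is available directly. Otherwise we derive it:
\[
\operatorname{ty}(\rho^{-1}(u))=\operatorname{ty}(\rho(\rho^{-1}(u)))=\operatorname{ty}(u),
\]
where the first equality is type preservation of $\rho$ read backwards, and the second rewrites by $\rho\circ\rho^{-1}=\operatorname{id}_T$.
\end{itemize}
The final identity $\rho^{-1}\circ\rho=\operatorname{id}_S$ is then just the stored left-inverse law.

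The only potential obstacle is bookkeeping rather than mathematics. Because renamings are proof-relevant, we must make sure that the constructed inverse structure reuses the stored data instead of an arbitrary inverse. Function extensionality may be needed to state the inverse laws as equalities of maps.
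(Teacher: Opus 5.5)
Your proposal is correct and follows essentially the same route as the paper. The paper also checks injectivity and type preservation directly for the identity and for composites, then reads $\rho^{-1}\circ\rho=\operatorname{id}_S$ off the stored inverse laws of the proof-relevant renaming, with no finite-cardinality or surjective-injection argument.
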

\begin{proof}Directly from injection, surjection, and type preservation; see
Appendix~\ref{app:proof-F04}.\end{proof}

\leanref{TSG-FND-006}{TypedSlottedEGraphsPaper.ClaimLedger.TSGFND006BinderDescriptorOccurrenceBundle}
Each binder descriptor $\beta$ has a finite typed context $\Delta_\beta$,
domain, multiplicity, disjointness, and dependency data, and a certified subgroup
$\operatorname{Aut}(\beta)\leq\operatorname{Perm}_{\rm ty}(\Delta_\beta)$.
An occurrence in $\Gamma$ is $o=(\Delta_o,a_o)$ with
$\Delta_o\cap\Gamma=\varnothing$ and
$a_o\in\TRen(\Delta_\beta,\Delta_o)$; write
$\operatorname{BOcc}_\beta(\Gamma)$ for all such occurrences.

\begin{definition}
\label{res:F05}
\leanref{TSG-DEF-003}{TypedSlottedEGraphsPaper.TypedPortSchemaAndValueDefinition}
A schema is generated by $\texttt{One}(\tau)$, an arity-licensed tagged
$\Seq/\Bag/\Set$ container, $\texttt{Bind}(\tau,\kappa)$, or
$\texttt{BindBlock}(\beta,\kappa)$.  A checked port value is indexed by its
free context, typed de Bruijn bound scope, and schema.  Its atoms are typed
free coordinates, typed bound indices, or invocations $m*a$.  A container
stores a length-indexed occurrence vector and a proof that its length is
admitted.  Unary and block binders extend the bound-type list by one type or
by the descriptor's type list.  Bag/set quotient behavior is not definitional
syntax equality: the extensional finite-bag and finite-set carriers above are
separate mathematical definitions, and their matching rules enter the
structural relation below.
\end{definition}

\leanref{TSG-FND-007}{TypedSlottedEGraphsPaper.AtomicLedger.SignaturePathLawful}
For every container-valued port path $p$ of an exact operator instance
$f_\theta$, the signature records
$\operatorname{Law}_\Sigma(f_\theta,p)\subseteq
\{\mathsf A,\mathsf C,\mathsf I,\mathsf U\}$.  A $\Seq$, $\Bag$, or $\Set$
port respectively requires $\varnothing$, $\{\mathsf C\}$, or
$\{\mathsf C,\mathsf I\}$.  Set arities and explicitly flat arities satisfy
\leanref{TSG-INV-001}{TypedSlottedEGraphsPaper.AtomicLedger.SetArityDownwardClosed}
\begin{align}
k\in J, k>0, 1\le j\le k&\Longrightarrow j\in J,
\label{eq:set-arity-closure}\\
\leanref{TSG-INV-002}{TypedSlottedEGraphsPaper.AtomicLedger.FlatAritySpliceClosed}
k,l\in J, k>0&\Longrightarrow k+l-1\in J.
\label{eq:flat-arity-closure}
\end{align}
\leanref{TSG-FND-008}{TypedSlottedEGraphsPaper.AtomicLedger.FlatLicense}
The separate license $\operatorname{Flat}_\Sigma(f_\theta,p)$ is admitted
only for a sole root container whose element and result types agree, and
requires $\mathsf A$; if $0\in J$, it also requires a source unit and
$\mathsf U$.  Thus a carrier never licenses flattening by itself.

\leanref{TSG-FND-009}{TypedSlottedEGraphsPaper.AtomicLedger.CIAEndpointCertificateFamilies}
Let $\mathcal B_\Sigma$ be an independently fixed typed source proof theory
for the profile.  For
$\mathfrak L=(\operatorname{profile}(\Sigma),f_\theta,p,K,J)$ and fixed
outside arguments, write $F(\bar u)$ for the declared source realizer.
The labels denote families of inhabitants of $\EqCert_{\mathcal B_\Sigma}$
with the following endpoints:
\begin{align*}
\mathsf C&:\ F(u_1,\ldots,u_k)=F(u_{\varpi(1)},\ldots,u_{\varpi(k)}),\\
\mathsf I&:\ F(v_{s(1)},\ldots,v_{s(k)})=F(v_1,\ldots,v_j),\\
\mathsf A&:\ F(\bar u_{<i},F(\bar v),\bar u_{>i})
=F(\bar u_{<i},\bar v,\bar u_{>i}),
\end{align*}
for every well-typed context and legal permutation, typed surjection, or
\leanref{TSG-FND-010}{TypedSlottedEGraphsPaper.AtomicLedger.UnitEndpointCertificateFamily}
splice.  When $0\in J$, $\mathsf U$ additionally certifies
$F()=e_{\mathfrak L}$ and deletion of an empty nested occurrence.  Every
concrete endpoint is checked; the label is not evidence.

\begin{definition}[Typed Flexible-Arity E-Nodes]
\label{res:F06}
\leanref{TSG-DEF-004}{TypedSlottedEGraphsPaper.TypedENodeAndSupportDefinition}
A concrete typed signature supplies class interfaces and outputs,
descriptor-bound type lists, and each already-instantiated operator's input
schemas and output type.  A typed node in $\Gamma$ is
\[
n=f(q_1,\ldots,q_r),\qquad
q_i\in\operatorname{Port}_\Gamma(\kappa_i),
\quad\operatorname{out}(n)=\tau_f.
\]
Write $\operatorname{Node}_\Gamma(\tau')$ for all such nodes of output
$\tau'$.  Support is defined recursively by
\begin{align*}
\operatorname{slots}(x)&=\{x\},&
\operatorname{slots}(m*a)&=\operatorname{im}(m),\\
\operatorname{slots}(c)&=\bigcup_{q\text{ in }c}\operatorname{slots}(q),&
\operatorname{slots}(f_\theta(\bar q))&=\bigcup_i\operatorname{slots}(q_i),\\
\operatorname{slots}(\text{bound index})&=\varnothing,&
\operatorname{slots}(\operatorname{bind}\ q)&=\operatorname{slots}(q),\\
\operatorname{slots}(\operatorname{bindBlock}_\beta q)&=\operatorname{slots}(q).&&
\end{align*}
\end{definition}

\leanref{TSG-DEF-005}{TypedSlottedEGraphsPaper.ClaimLedger.TSGDEF005ConcreteEmbeddingActionBundle}
Every $e\in\TEmb(\Gamma,\Gamma')$ acts on the free-context index of ports and
nodes: $e\cdot x=e(x)$, $e\cdot(m*a)=(e\circ m)*a$, containers act
componentwise, typed de Bruijn indices are fixed, and both binder constructors
recurse under their extended bound-type lists.  This action chooses no fresh
free coordinate.

\begin{lemma}[Typed Embeddings Preserve Types]
\label{res:F07}
\leanref{TSG-LEM-002}{TypedSlottedEGraphsPaper.concreteTypedEmbeddingsPreserveTypes}
\leanref{TSG-ATOM-006}{TypedSlottedEGraphsPaper.ConcretePort.act}
\leanref{TSG-ATOM-007}{TypedSlottedEGraphsPaper.ConcreteFlexibleArityENode.act}
If $q\in\operatorname{Port}_\Gamma(\kappa)$,
$n\in\operatorname{Node}_\Gamma(\tau)$, and
$e\in\TEmb(\Gamma,\Gamma')$, then
\[
e\cdot q\in\operatorname{Port}_{\Gamma'}(\kappa),\qquad
e\cdot n\in\operatorname{Node}_{\Gamma'}(\tau).
\]
\leanref{TSG-ATOM-008}{TypedSlottedEGraphsPaper.AtomicLedger.embeddedNodeOutputTypeUnchanged}
In particular, $\operatorname{out}(e\cdot n)=\operatorname{out}(n)$.
\end{lemma}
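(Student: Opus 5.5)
The plan is a structural induction on the typing derivation of checked port values. Because a node is an operator applied to ports, the node clause then follows immediately from the port clause. Since ports are indexed by free context, bound scope, and schema, the statement to strengthen is: for every bound scope $\Delta$ and schema $\kappa$, if $q\in\operatorname{Port}_{\Gamma;\Delta}(\kappa)$ and $e\in\TEmb(\Gamma,\Gamma')$, then $e\cdot q\in\operatorname{Port}_{\Gamma';\Delta}(\kappa)$. Quantifying over $\Delta$ is what lets the binder cases go through, because the action recurses under extended bound-type lists while leaving the free context $\Gamma$ unchanged.

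The atomic cases come first. For a free coordinate $x$ of type $\tau$, type preservation in Definition~\ref{res:F03} gives $\operatorname{ty}(e(x))=\operatorname{ty}(x)=\tau$, so $e\cdot x=e(x)$ is a free atom of $\Gamma'$ of the same type. A typed de Bruijn index is fixed by the action, and its typing depends only on $\Delta$, which is unchanged. For an invocation $m*a$ with $m\in\TEmb(S_a,\Gamma)$, Proposition~\ref{res:F04} gives $e\circ m\in\TEmb(S_a,\Gamma')$. Hence $(e\circ m)*a$ is a well-formed invocation with the same output $\tau_a$; the class interface is untouched because the star is a constructor.

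The composite cases follow. For a container port of schema $K^J$, the action is componentwise on the length-indexed occurrence vector, so the length is literally unchanged and the stored proof that it lies in $J$ is reused verbatim. Each component is typed by the induction hypothesis at the element schema. The tag $K$ and the law and flat metadata belong to the schema, not the value, so they persist. For $\texttt{Bind}(\tau,\kappa)$ and $\texttt{BindBlock}(\beta,\kappa)$, apply the induction hypothesis to the body at the extended scope $\tau::\Delta$, respectively $\Delta_\beta$ appended to $\Delta$, with the same $e$. The outer constructor then rebuilds a port of the same schema. Finally, for $n=f(q_1,\ldots,q_r)$, apply the port result to each $q_i\in\operatorname{Port}_\Gamma(\kappa_i)$. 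This gives $e\cdot n=f(e\cdot q_1,\ldots,e\cdot q_r)\in\operatorname{Node}_{\Gamma'}(\tau_f)$. Since $\operatorname{out}$ is read off the operator instance $f_\theta$, we get $\operatorname{out}(e\cdot n)=\tau_f=\operatorname{out}(n)$.

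The main obstacle is the indexing bookkeeping rather than any mathematics. First, the induction must be generalized over the bound scope $\Delta$, as above; otherwise the binder cases fail. Second, the container case needs the transported length proof to typecheck against the mapped vector, which works because mapping over a length-indexed vector preserves its length index definitionally. I would first state the action as a map on vectors and check that this definitional preservation holds. If it does not, I would instead prove a length lemma and rewrite along it.
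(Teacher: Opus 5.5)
Your proposal is correct and is essentially the paper's argument. The paper's proof notes that the action is defined by recursion on ports, heterogeneous argument lists, and nodes whose schema, bound scope, and output type are type indices. Your case-by-case induction is therefore exactly the well-typedness check that this recursive definition must pass, and $\operatorname{out}(e\cdot n)=\operatorname{out}(n)$ holds by reflexivity. With that intrinsic indexing, the generalization over the bound scope and the length-vector bookkeeping you flag are discharged when the action is defined, so no separate induction is needed.
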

\begin{proof}See Appendix~\ref{app:proof-F07}.\end{proof}

\begin{corollary}[Slot Support Invariance]
\label{res:F08}
\leanref{TSG-COR-001}{TypedSlottedEGraphsPaper.concreteSlotSupportInvariance}
\leanref{TSG-ATOM-009}{TypedSlottedEGraphsPaper.concretePortSupportImageEquality}
\leanref{TSG-ATOM-010}{TypedSlottedEGraphsPaper.concreteNodeSupportImageEquality}
Under the hypotheses above,
$\operatorname{slots}(e\cdot q)=e[\operatorname{slots}(q)]$ and
$\operatorname{slots}(e\cdot n)=e[\operatorname{slots}(n)]$.
\end{corollary}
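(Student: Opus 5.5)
We prove both equalities by structural induction on the checked port value $q$, following the clauses that define $\operatorname{slots}$ and the action of Definition~\ref{res:F06}. The node statement then follows immediately. We use Lemma~\ref{res:F07} only so that $e\cdot q$ and $e\cdot n$ are well-typed ports and nodes, which makes the support function applicable to them. The one set-theoretic fact we need is that direct image commutes with unions, $e[\bigcup_i A_i]=\bigcup_i e[A_i]$. This holds for any function, so injectivity of $e$ is never used.

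The base cases are the three kinds of atoms.
\begin{itemize}
\item For a free coordinate, $\operatorname{slots}(e\cdot x)=\{e(x)\}=e[\{x\}]$.
\item For a typed bound index, the action fixes it, so both sides are $\varnothing=e[\varnothing]$.
\item For an invocation, $\operatorname{slots}(e\cdot(m*a))=\operatorname{slots}((e\circ m)*a)=\operatorname{im}(e\circ m)=e[\operatorname{im}(m)]$.
\end{itemize}
The invocation case uses that $e\circ m\in\TEmb(S_a,\Gamma')$, which is Proposition~\ref{res:F04}, so the result is again a legal invocation.

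The inductive cases are containers and binders.
\begin{itemize}
\item For a container $c$, the action is componentwise. By the induction hypothesis and the union fact, $\operatorname{slots}(e\cdot c)=\bigcup_{q\in c}e[\operatorname{slots}(q)]=e[\operatorname{slots}(c)]$.
\item For $\operatorname{bind}\,q$ and $\operatorname{bindBlock}_\beta\,q$, the action recurses under the extended bound-type list without touching the free context. Support passes straight through these constructors, so the induction hypothesis for $q$, taken at the extended bound scope, closes these cases.
\end{itemize}
The node case $n=f_\theta(\bar q)$ is one more union over ports, using the port statement for each $q_i$.

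The main obstacle is the container case, because the relevant notion of membership ``$q$ in $c$'' must be chosen carefully. If we work with the stored length-indexed occurrence vector, membership is entry membership and the componentwise action is a vector map. Then the equality is just image-of-union over a finite index set. If we instead work with the extensional $\Bag$ or $\Set$ carrier, we must check that the image of the support set is independent of multiplicity and order. It is, because support takes a union over the underlying set of elements. We therefore carry out the induction on the syntactic vector representation, as Definition~\ref{res:F05} stores it.

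A second subtlety is the induction principle itself. Ports are indexed by the bound scope, and the binder clauses change that index. The induction must therefore be stated for all bound scopes simultaneously, quantifying over the scope before recursing into binder bodies.
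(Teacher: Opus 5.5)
Your proposal is correct and follows essentially the paper's route: a structural induction along the recursive definitions of the embedding action and of $\operatorname{slots}$, generalized over bound scopes, with the node case obtained from the port case. The only cosmetic difference is that the paper proves the equality pointwise, as membership of each target slot, and then uses function and propositional extensionality to get equality of the support predicates, whereas you argue directly with sets via image-of-union.
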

\begin{proof}By the induction for Lemma~\ref{res:F07}; see
Appendix~\ref{app:proof-F08}.\end{proof}

\begin{corollary}[Type Safety of Invocation]
\label{res:F09}
\leanref{TSG-COR-002}{TypedSlottedEGraphsPaper.concreteTypeSafetyOfInvocation}
\leanref{TSG-ATOM-011}{TypedSlottedEGraphsPaper.concreteInvocation}
If $m\in\TEmb(S_a,\Gamma)$, then $m*a$ has type $\tau_a$ in $\Gamma$.
\leanref{TSG-ATOM-012}{TypedSlottedEGraphsPaper.concreteIdentityInvocation}
\leanref{TSG-ATOM-013}{TypedSlottedEGraphsPaper.AtomicLedger.invocationActionComposition}
Moreover, $\operatorname{id}_{S_a}*a=a$, and for every
$e\in\TEmb(\Gamma,\Gamma')$, $e\cdot(m*a)=(e\circ m)*a$.
\end{corollary}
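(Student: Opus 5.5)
The corollary has three parts, and I would prove them in the order stated, each by unfolding definitions. None of them needs an induction.

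\emph{Type safety.} The invocation clause of Definition~\ref{res:F05} admits $m*a$ as an atom of $\operatorname{Port}_\Gamma(\texttt{One}(\tau_a))$ exactly when $m\in\TEmb(S_a,\Gamma)$, and gives it output $\tau_a$, the declared output of $a$'s interface $(S_a,\tau_a)$. So the first claim is just the typing rule for the star constructor, read off the definition.

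\emph{Identity invocation.} The equation $\operatorname{id}_{S_a}*a=a$ holds by the paper's stated convention that $a$ abbreviates $\operatorname{id}_{S_a}*a$. To make this a statement about a well-typed term rather than a bare convention, I would invoke Proposition~\ref{res:F04}, which gives $\operatorname{id}_{S_a}\in\TEmb(S_a,S_a)$. Type safety then shows the abbreviation denotes a term of type $\tau_a$ in context $S_a$.

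\emph{Action compatibility.} I would take the invocation clause of the embedding action from the paragraph preceding Lemma~\ref{res:F07}, which defines $e\cdot(m*a)=(e\circ m)*a$. This is definitional, but both sides must be well typed. Proposition~\ref{res:F04} gives $e\circ m\in\TEmb(S_a,\Gamma')$, so by type safety the right-hand side is an invocation of type $\tau_a$ in $\Gamma'$. Lemma~\ref{res:F07}, applied to the port $m*a$, gives that $e\cdot(m*a)$ lies in $\operatorname{Port}_{\Gamma'}(\texttt{One}(\tau_a))$. Hence the two sides are equal elements of the same port type.

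The only point needing care is in this last step. In a proof-relevant setting the equality of $e\cdot(m*a)$ and $(e\circ m)*a$ must be stated at a single type index. I would check that the index of $e\cdot(m*a)$ computed by Lemma~\ref{res:F07} agrees definitionally with the index of $(e\circ m)*a$, so that no transport is required. Once that is confirmed, the equation follows by reflexivity.
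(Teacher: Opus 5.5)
Your proposal is correct and follows the paper's own argument. All three parts are read off definitionally: the invocation constructor carries the class output $\tau_a$ by construction, $a$ abbreviates $\operatorname{id}_{S_a}*a$, and the invocation clause of the embedding action composes $e\circ m$, so each equation closes by reflexivity. Your extra appeals to Proposition~\ref{res:F04} and Lemma~\ref{res:F07}, used to confirm that both sides have the same port index, are harmless. In the paper those indices are already fixed by the constructors, so no transport arises.
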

\begin{proof}Immediate from the invocation constructor and action; see
Appendix~\ref{app:proof-F09}.\end{proof}

\subsection{Graph state and graph-relative structure}

\leanref{TSG-DEF-006}{TypedSlottedEGraphsPaper.ClaimLedger.TSGDEF006TypedShapeCarrier}
The raw typed-shape carrier is a well-typed node over a canonical finite free
context.  Its sequence, bag, and set tags all retain finite occurrence vectors;
quotient normalization and set deduplication are separate supplied operations.

\begin{definition}[Flexible Arity Typed Slotted E-Graph]
\label{res:F10}
\leanref{TSG-DEF-007}{TypedSlottedEGraphsPaper.QuiescentFiniteTypedEGraph.f10FiniteQuiescentGraphEncoding}
A flexible-arity typed slotted e-graph environment packages a finitely encoded class carrier and:
\begin{itemize}
\item class outputs and interfaces, finite stored-shape domains, and typed
ambient records $B_a(p)=(T_{a,p},\rho_{a,p})$ satisfying
\[
p\in\operatorname{Shape}_{\tau_a},\quad S_a\subseteq T_{a,p},\quad
\rho_{a,p}\in\TRen(\operatorname{slots}(p),T_{a,p});
\]
\item a depth-decreasing typed parent forest (roots have no parent) and a
total recursive $\operatorname{find}$; and
\item a collision-owner relation and per-class symmetry predicates closed
under identity, inverse, and composition.
\end{itemize}
\leanref{TSG-INV-003}{TypedSlottedEGraphsPaper.QuiescentFiniteTypedEGraph.F10FiniteQuiescentGraphEnvironment.collisionIndexExact}
The checked depth makes $\operatorname{find}$ total.  Representing $H$ by its
membership relation, $a\in H(p)$ exactly when $a$ is a leader and
$p\in\operatorname{dom}(B_a)$.  Thus a bucket may contain several leaders and
asserts only collision candidacy; it has no owners when that set is empty.
\end{definition}

\begin{lemma}[Leaderization Contracts Support]
\label{lem:find-support-contracts}
\leanref{TSG-LEM-003}{TypedSlottedEGraphsPaper.leaderizationContractsSupport}
If the stored parent path for $a$ composes with a caller embedding $m$, then
every slot in the composite image already lies in $\operatorname{im}(m)$.
Thus finding a leader cannot add caller support to an invocation, although a
proper parent embedding may contract that invocation's support.
\end{lemma}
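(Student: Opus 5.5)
The statement reduces to an image inclusion for a composite of typed embeddings, and I would prove it by induction on the parent path. First I fix the representation. A stored parent edge from a class $b$ to its parent $b'$ carries an embedding $\pi_{b}\in\TEmb(S_b,S_{b'})$. This is what makes ``the stored parent path composes with a caller embedding'' meaningful.

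So suppose $a=a_0\to a_1\to\cdots\to a_k=\operatorname{find}(a)$ is the path, with edge embeddings $\pi_i\in\TEmb(S_{a_i},S_{a_{i+1}})$. For the composite to be defined against $m\in\TEmb(S_a,\Gamma)$, the edges must be oriented from leader to source. Concretely, I read the hypothesis as providing embeddings $\sigma_i\in\TEmb(S_{a_{i+1}},S_{a_i})$ of the leader interface back into the source interface. The leader invocation is then $(m\circ\sigma_0\circ\cdots\circ\sigma_{k-1})*\operatorname{find}(a)$. By Proposition~\ref{res:F04}, this composite is again a typed embedding, so it is a legal invocation by Corollary~\ref{res:F09}.

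The core claim is
\[
\operatorname{im}(m\circ\sigma)\subseteq\operatorname{im}(m),\qquad \sigma=\sigma_0\circ\cdots\circ\sigma_{k-1}.
\]
Take any $y$ in the left side, so $y=m(\sigma(x))$ for some $x$. Since $\sigma(x)\in S_a$, we get $y\in\operatorname{im}(m)$. Formally I would argue by induction on $k$, since $\operatorname{find}$ is total and terminates by the depth-decreasing forest of Definition~\ref{res:F10}. In the base case $k=0$ the composite is $m$ itself. In the step case, the image of $m\circ\sigma_0\circ\tau$ lies in the image of $m\circ\sigma_0$, which by the inductive hypothesis for the shorter path lies in $\operatorname{im}(m)$. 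By Definition~\ref{res:F06}, $\operatorname{slots}$ of an invocation is the image of its map, so the leader invocation's support is contained in the original invocation's support. This gives ``cannot add caller support.''

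Contraction is then immediate. When some $\sigma_i$ is not surjective, i.e.\ $|S_{a_{i+1}}|<|S_{a_i}|$, the image under the injective $m$ is strictly smaller, so support genuinely shrinks. A two-node example with $S_{a_1}\subsetneq S_{a_0}$ suffices to witness this. The main obstacle is not the set-theoretic inclusion but fixing the direction and typing of stored parent embeddings so that ``composes with'' typechecks. If the paper's edges instead point source-to-parent with proof-relevant inverses only on the image, I would first restrict each edge to a renaming onto its image and then run the same induction.
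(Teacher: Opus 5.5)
Your proof is correct and matches the paper's argument. The parent path contributes one composite embedding $p$ (your $\sigma$, with edges oriented $\sigma_i\in\TEmb(S_{a_{i+1}},S_{a_i})$ as in your corrected reading), and for $s=(m\circ p)(x)$ the point $p(x)$ witnesses $s\in\operatorname{im}(m)$; the induction on path length is redundant, because the paper defines the path embedding as the composite of the raw parent edges, so this single witness step already suffices.
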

\begin{proof}See Appendix~\ref{app:proof-F11}.\end{proof}

\leanref{TSG-CE-002}{TypedSlottedEGraphsPaper.AtomicLedger.properEmbeddingDoesNotImplyIndependence}
For every Boolean value supplied by an empty-context witness, it is false that
every one-slot Boolean input equals that value; thus no such value realizes the one-slot identity.
\leanref{TSG-INV-004}{TypedSlottedEGraphsPaper.CertifiedParentEdge}
In the graph contract, a proper parent embedding is not treated as a renaming or as evidence of independence.
It is admissible only with the exact parent-edge equation (PC) below.  Changing the stored interface is a
distinct abstract rekey/restriction obligation with checked endpoint bridges; no implementation-level refactoring follows.

\begin{definition}[Typed $\alpha$-Equivalence of Port Values]
\label{def:alpha-port}
\leanref{TSG-DEF-008}{TypedSlottedEGraphsPaper.ConcreteTypedAlphaPort}
For $q\in\operatorname{Port}_{\Gamma}(\kappa)$,
$r\in\operatorname{Port}_{\Gamma'}(\kappa)$, and
$\rho\in\TRen(\Gamma,\Gamma')$, the indexed judgement
$q\equiv_{\alpha;\kappa}^{\rho}r$ is the reflexive--symmetric--transitive
closure, indexed by composition of renamings, of the following direct
structural generator.
\begin{itemize}
\item Slots require $\rho(x)=y$; invocations require the same identifier and
$m'=\rho\circ m$.
\item Sequences relate pointwise at equal length.  Bags relate when a
bijection between expanded occurrence tokens pairs related values, thereby
preserving multiplicity.  Sets relate when every element on either
side has a related mate on the other.
\item A unary binder fixes the new head de Bruijn variable and acts on the
outer tail by the selected typed permutation.
\item A block binder chooses a typed permutation of the descriptor prefix,
certified by the signature's declared automorphism policy, and acts on the
outer tail by the selected typed permutation.
\end{itemize}
\end{definition}

\begin{proposition}[Typed Alpha-Equivalence Laws]
\label{res:F13}
\leanref{TSG-PROP-002}{TypedSlottedEGraphsPaper.concreteTypedAlphaFullEquivalenceLaws}
Let $q\in\operatorname{Port}_\Gamma(\kappa)$,
$r\in\operatorname{Port}_{\Gamma'}(\kappa)$, and
$s\in\operatorname{Port}_{\Gamma''}(\kappa)$, with
$\rho_1\in\TRen(\Gamma,\Gamma')$ and
$\rho_2\in\TRen(\Gamma',\Gamma'')$.  Then
\leanref{TSG-ATOM-014}{TypedSlottedEGraphsPaper.ConcreteAlphaPortClosure.reflexive}
\begin{align*}
q&\equiv_{\alpha;\kappa}^{\operatorname{id}_\Gamma}q,\\
\leanref{TSG-ATOM-015}{TypedSlottedEGraphsPaper.ConcreteAlphaPortClosure.symmetric}
q\equiv_{\alpha;\kappa}^{\rho_1}r
&\Longrightarrow r\equiv_{\alpha;\kappa}^{\rho_1^{-1}}q,\\
\leanref{TSG-ATOM-016}{TypedSlottedEGraphsPaper.ConcreteAlphaPortClosure.transitive}
q\equiv_{\alpha;\kappa}^{\rho_1}r\land
r\equiv_{\alpha;\kappa}^{\rho_2}s
&\Longrightarrow q\equiv_{\alpha;\kappa}^{\rho_2\circ\rho_1}s.
\end{align*}
\leanref{TSG-ATOM-017}{TypedSlottedEGraphsPaper.concreteUnindexedAlphaPortEquivalenceLaws}
Consequently, the dependent-pair unindexed relation
$q\equiv_{\alpha;\kappa}r\iff
\exists\rho.\ q\equiv_{\alpha;\kappa}^{\rho}r$ is an equivalence relation.
\end{proposition}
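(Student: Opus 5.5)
The three laws should follow almost entirely from the way the relation is defined. By Definition~\ref{def:alpha-port}, $\equiv_{\alpha;\kappa}^{\rho}$ is the reflexive--symmetric--transitive closure of the direct generator, indexed by composition of renamings. I would therefore present it as an inductive family with exactly three closure constructors besides the generator: $\mathsf{refl}$ at index $\operatorname{id}_\Gamma$, $\mathsf{symm}$ sending index $\rho$ to $\rho^{-1}$, and $\mathsf{trans}$ sending indices $\rho_1,\rho_2$ to $\rho_2\circ\rho_1$. With this presentation, each displayed law is a single constructor application, and there is no induction over schemas $\kappa$.

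The only real content is type-correctness of the indices. For reflexivity, Proposition~\ref{res:F04} gives $\operatorname{id}_\Gamma\in\TEmb(\Gamma,\Gamma)$. It is its own typed inverse, so it lies in $\TRen(\Gamma,\Gamma)$. For symmetry, Proposition~\ref{res:F04} says the stored inverse $\rho_1^{-1}$ lies in $\TRen(\Gamma',\Gamma)$. For transitivity, Proposition~\ref{res:F04} shows $\rho_2\circ\rho_1$ is a typed embedding. Its inverse $\rho_1^{-1}\circ\rho_2^{-1}$ is a typed embedding by the same proposition, and the two inverse laws follow by associativity and the inverse laws of $\rho_1,\rho_2$. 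So $\rho_2\circ\rho_1\in\TRen(\Gamma,\Gamma'')$.

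This is the main obstacle, though a bookkeeping one. Because $\TRen$ is proof-relevant, the index produced by a constructor must match the stated index on the nose. If renamings are records carrying inverse proofs, I would rely on proof irrelevance of the law fields. Alternatively, I would define $\rho^{-1}$ and composition as the very operations used in the constructors, so that the laws hold definitionally. If the closure were instead defined with $\mathsf{symm}$ indexed by an arbitrary inverse, I would first prove uniqueness of typed inverses from the two inverse laws. I would then transport along that equality.

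For the unindexed relation $q\equiv_{\alpha;\kappa}r\iff\exists\rho.\,q\equiv^{\rho}_{\alpha;\kappa}r$, I treat it as a relation on dependent pairs $(\Gamma,q)$. Reflexivity is witnessed by $\operatorname{id}_\Gamma$. Symmetry is witnessed by $\rho^{-1}$. Transitivity is witnessed by the composite. Each comes from the corresponding indexed law just established, so the unindexed relation is an equivalence relation.
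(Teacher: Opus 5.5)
Your proposal matches the paper's proof. The three indexed laws are exactly the \texttt{reflexive}, \texttt{symmetric}, and \texttt{transitive} constructors of the closure, with the composite renaming kept in the transitive index. The unindexed relation on dependent pairs $(\Gamma,q)$ is an equivalence because it existentially retains the typed renaming. Your extra checks that $\operatorname{id}_\Gamma$, $\rho_1^{-1}$, and $\rho_2\circ\rho_1$ lie in $\TRen$ are sound, and you rightly avoid any induction over schemas, as the paper notes F13 does not show the container and binder generator clauses closed under inversion.
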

\begin{proof}See Appendix~\ref{app:proof-F13}.\end{proof}

\leanref{TSG-DEF-009}{TypedSlottedEGraphsPaper.QuiescentFiniteTypedEGraph.GraphStructuredAlignedBy}
The graph-relative relation $\equiv_{\mathcal G;\alpha}$ changes only the
invocation clause.  For appropriately typed $m*a,m'*b$ and
$\rho\in\TRen(\Gamma,\Gamma')$,
\begin{align*}
&m*a\equiv_{\mathcal G;\alpha;\texttt{One}(\tau)}^{\rho}m'*b\\
&\quad\Longleftrightarrow
\exists\ell,\widehat m,\widehat m',\pi.\ 
\tau_a=\tau_b=\tau\ \land\
\operatorname{find}_\Gamma(m*a)=\widehat m*\ell\ \land\\[-0.2em]
&\hspace{11em}
\operatorname{find}_{\Gamma'}(m'*b)=\widehat m'*\ell\ \land\
\pi\in G_\ell\ \land\
\rho\circ\widehat m=\widehat m'\circ\pi.
\end{align*}
Consistently with this boundary, every enclosing clause remains structural.
\leanref{TSG-CE-001}{TypedSlottedEGraphsPaper.AtomicLedger.parentEqualityDoesNotImplyChildSymmetry}
There exist a Boolean child observation, a swap, a parent observation, and an input for
which the parent observations agree while the corresponding child observations differ; thus parent equality alone does not imply child symmetry.

\begin{lemma}[Transport of Graph-Relative Invocation Alignment]
\label{lem:graph-rel-transport}
\leanref{TSG-LEM-004}{TypedSlottedEGraphsPaper.QuiescentFiniteTypedEGraph.graphInvocationAlignment_transport}
Let two invocation leaves in caller contexts $\Lambda,\Lambda'$ be aligned
under $\rho\in\TRen(\Lambda,\Lambda')$ by a common found leader and a
recorded leader symmetry.  For
$\eta\in\TRen(\Lambda,\Xi)$ and
$\eta'\in\TRen(\Lambda',\Xi')$, put
$\bar\rho=\eta'\circ\rho\circ\eta^{-1}\in\TRen(\Xi,\Xi')$.
Renaming both callers preserves that invocation alignment under $\bar\rho$.
\end{lemma}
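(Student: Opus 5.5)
The plan is to unfold the graph-relative invocation clause and exhibit the same witnesses after renaming. By hypothesis there are $\ell,\widehat m,\widehat m',\pi$ with $\tau_a=\tau_b=\tau$, with $\operatorname{find}_\Lambda(m*a)=\widehat m*\ell$ and $\operatorname{find}_{\Lambda'}(m'*b)=\widehat m'*\ell$, with $\pi\in G_\ell$, and with $\rho\circ\widehat m=\widehat m'\circ\pi$. The renamed leaves are $\eta\cdot(m*a)=(\eta\circ m)*a$ and $\eta'\cdot(m'*b)=(\eta'\circ m')*b$ by Corollary~\ref{res:F09}. Types are unchanged by Lemma~\ref{res:F07}. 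For the new alignment I propose the witnesses $\ell$, $\eta\circ\widehat m$, $\eta'\circ\widehat m'$, and the same $\pi$. These are typed embeddings by Proposition~\ref{res:F04}, since a renaming is in particular an embedding.

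The first key step is equivariance of $\operatorname{find}$ under caller embeddings:
\[
\operatorname{find}_{\Xi}((\eta\circ m)*a)=(\eta\circ\widehat m)*\ell .
\]
I would prove this by induction on the checked parent-forest depth of $a$, which is available because Definition~\ref{res:F10} makes $\operatorname{find}$ total by depth recursion. At a root, $\operatorname{find}$ returns the invocation unchanged, so the equation is immediate. At a non-root, $\operatorname{find}$ composes the stored parent embedding $p_a$ on the right of the caller map and recurses, so it returns $m\circ p_a\circ\cdots$. Post-composing with $\eta$ commutes with this right composition by associativity, and the inductive hypothesis closes the case. The same argument in $\Lambda'$ with $\eta'$ gives the second equation. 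The main obstacle is whether the formal $\operatorname{find}_\Gamma$ is indeed defined by pre-composing stored parent maps independently of $\Gamma$. If it depends on $\Gamma$ only through its index, the equivariance is a naturality lemma proved by the same recursion. Lemma~\ref{lem:find-support-contracts} supports this reading, since it describes the result as the stored path composed with the caller map.

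The second step is the symmetry equation. The membership $\pi\in G_\ell$ is unchanged because the leader is unchanged. The remaining equation is
\[
\bar\rho\circ(\eta\circ\widehat m)=\eta'\circ\rho\circ\eta^{-1}\circ\eta\circ\widehat m=\eta'\circ\rho\circ\widehat m=\eta'\circ\widehat m'\circ\pi .
\]
It uses the inverse law $\eta^{-1}\circ\eta=\operatorname{id}_\Lambda$ from Proposition~\ref{res:F04}, then the hypothesis. Since these are proof-relevant renamings, I would compare underlying functions and then invoke function extensionality. Finally, $\bar\rho\in\TRen(\Xi,\Xi')$ because renamings compose and invert, with stored inverse $\eta\circ\rho^{-1}\circ\eta'^{-1}$. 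This completes the alignment under $\bar\rho$.
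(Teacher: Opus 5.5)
Your proposal is correct and follows essentially the paper's route: keep the same leader $\ell$ and symmetry $\pi$, post-compose the found embeddings with $\eta$ and $\eta'$, and use the stored inverse law to cancel $\eta^{-1}\circ\eta$ in $\bar\rho\circ\eta\circ\widehat m=\eta'\circ\widehat m'\circ\pi$. The one difference is that the paper's alignment already records the two checked parent paths. Your depth induction for find-equivariance therefore reduces there to associativity: the caller map is composed with an unchanged path embedding. Your induction works provided the hypothesis is quantified over all caller maps $m$.
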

\begin{proof}See Appendix~\ref{app:proof-F14}.\end{proof}

\leanref{TSG-DEF-010}{TypedSlottedEGraphsPaper.PaperCertifiedKernelExtraction}
\paragraph{Leader kernels.}
The intended producer replaces invocations by found leaders, composes parent
embeddings, and reapplies only declared sibling quotients and flat smart
constructors.  Its checked interface is instead a supplied extraction record:
an original term over $\Gamma_0$, a kernel over $\Delta_n$, an inclusion
$\iota_n:\Delta_n\hookrightarrow\Gamma_0$, and finite replay provenance.
The provenance is not part of a collision key; constructing this record from
a graph remains a refinement obligation.

\leanref{TSG-DEF-011}{TypedSlottedEGraphsPaper.QuiescentFiniteTypedEGraph.GraphStructuredValue.LeaderNormalized}
For quiescent $\mathcal G$, let
$\operatorname{LPort}_{\mathcal G,\Lambda}(\kappa)$ and
$\operatorname{LNode}_{\mathcal G,\Lambda}(\tau)$ be values whose invocations
all name leaders.  Their support need only be contained in $\Lambda$.

\begin{lemma}[Identity Alignment Preserves Leader-Invocation Support]
\label{lem:leader-id-support}
\leanref{TSG-LEM-005}{TypedSlottedEGraphsPaper.QuiescentFiniteTypedEGraph.identityAlignedLeaderSupport}
If two invocations of the same leader are identity-aligned in $\Lambda$
through an admitted leader symmetry, then for every caller slot their two
image-support predicates are equivalent.
\end{lemma}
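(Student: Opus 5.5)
We unfold the identity-aligned hypothesis through the graph-relative invocation clause and then argue about images of finite maps. Consider two invocations $m*\ell$ and $m'*\ell$ of the same leader $\ell$ in caller context $\Lambda$.

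Because $\ell$ is already a leader, $\operatorname{find}_\Lambda$ returns $\ell$ with the identity parent embedding. We take this as part of the stored forest contract: roots have no parent, so the recursive $\operatorname{find}$ returns them unchanged. The found maps are therefore $\widehat m=m$ and $\widehat m'=m'$. Identity alignment with $\rho=\operatorname{id}_\Lambda$ then gives some $\pi\in G_\ell$ with
\[
m=m'\circ\pi .
\]
Here $\pi$ is a typed permutation of $S_\ell$, i.e.\ an element of $\TRen(S_\ell,S_\ell)$. It is admitted because the per-class symmetry predicate is closed under identity, inverse, and composition (Definition~\ref{res:F10}).

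The key step is a pure image computation. Fix a caller slot $y\in\Lambda$. The image-support predicate of $m*\ell$ at $y$ is $\exists x\in S_\ell.\ m(x)=y$, which by Definition~\ref{res:F06} says $y\in\operatorname{slots}(m*\ell)=\operatorname{im}(m)$. We prove both directions.
\begin{itemize}
\item Forward: if $m(x)=y$, then $m'(\pi(x))=y$, so $y\in\operatorname{im}(m')$.
\item Backward: if $m'(x')=y$, use the stored typed inverse $\pi^{-1}$ from Proposition~\ref{res:F04}. The inverse law gives $m(\pi^{-1}(x'))=m'(\pi(\pi^{-1}(x')))=m'(x')=y$, so $y\in\operatorname{im}(m)$.
\end{itemize}
Equivalently, $\operatorname{im}(m)=\operatorname{im}(m'\circ\pi)=\operatorname{im}(m')$, since $\pi$ is surjective. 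No type reasoning is needed beyond ensuring that $\pi^{-1}$ lands in $S_\ell$.

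The main obstacle is the first step: extracting $\widehat m=m$ from the hypothesis that $\ell$ is a leader. If $\operatorname{find}$ is only specified relationally, we would instead read $\widehat m=m\circ\operatorname{path}_\ell$, where $\operatorname{path}_\ell$ is the composed parent path of $\ell$. For a root this path is the empty composite, hence the identity, and we would discharge it by unfolding the depth-indexed definition of $\operatorname{find}$ at depth zero.

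If the statement instead intends $m,m'$ to be arbitrary invocations that find to a common leader, the same image argument shows $\operatorname{im}(\widehat m)=\operatorname{im}(\widehat m')$. In that case the equivalence concerns the found supports, not the original ones, and Lemma~\ref{lem:find-support-contracts} only gives containment $\operatorname{im}(\widehat m)\subseteq\operatorname{im}(m)$. So we read the statement in the leader-invocation form, where no contraction occurs.
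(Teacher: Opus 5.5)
Your proof is correct and matches the paper's: identity alignment of two leader invocations reduces to $m=m'\circ\pi$; the forward direction pushes the leader-slot witness through $\pi$, and the reverse direction pulls it back through the stored inverse $\pi^{-1}$. One small point: $\pi$ is admitted because the alignment hypothesis supplies $\pi\in G_\ell$, not because the symmetry predicate is closed under identity, inverse, and composition, and admissibility is never actually used in the image computation.
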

\begin{proof}See Appendix~\ref{app:proof-F15}.\end{proof}

\leanref{TSG-DEF-012}{TypedSlottedEGraphsPaper.NormalizationBridge.AmbientQuotientPresentation.quotientNormal}
\paragraph{Quotient-first normal form.}
The concrete design minimizes child and block orbits before bag aggregation
or set deduplication.  The following claims are stated over a finite quotient
presentation.  A \emph{finite quotient presentation} $P$ supplies a linearly
ordered carrier $X$, local generators fixed independently of normalization,
a decision procedure for their reflexive--symmetric--transitive closure
$\mathrel{\sim_P}$, an explicit finite carrier containing every value, and
support, schema, leaf-type, and output observations preserved by every local
generator.  Define $Q_P(x)$ as the least member of the enumerated
$\sim_P$-orbit of $x$.  Establishing that a particular recursive program
constructs such a presentation is a separate refinement obligation.

\begin{lemma}[Ambient Quotient-Normal-Form Exactness]
\label{lem:qnf-exact}
\leanref{TSG-LEM-006}{TypedSlottedEGraphsPaper.NormalizationBridge.AmbientQuotientPresentation.ambientQuotientNormalFormExactness}
\leanref{TSG-ATOM-020}{TypedSlottedEGraphsPaper.NormalizationBridge.AmbientQuotientPresentation.quotientNormal_sound}
For every $q,r\in X$,
\begin{align*}
Q_P(q)&\sim_P q,\\
\leanref{TSG-ATOM-021}{TypedSlottedEGraphsPaper.NormalizationBridge.AmbientQuotientPresentation.quotientNormal_eq_iff}
Q_P(q)=Q_P(r)&\Longleftrightarrow q\sim_P r,\\
\leanref{TSG-ATOM-022}{TypedSlottedEGraphsPaper.NormalizationBridge.AmbientQuotientPresentation.quotientNormal_preserves_support}
\operatorname{support}(Q_P(q))&=\operatorname{support}(q).
\end{align*}
\leanref{TSG-ATOM-023}{TypedSlottedEGraphsPaper.NormalizationBridge.AmbientQuotientPresentation.quotientNormal_preserves_schema_leafTypes_output}
It also preserves the recorded schema, leaf types, and output, and is
\leanref{TSG-ATOM-025}{TypedSlottedEGraphsPaper.NormalizationBridge.AmbientQuotientPresentation.quotientNormal_idempotent}
idempotent: $Q_P(Q_P(q))=Q_P(q)$.
\end{lemma}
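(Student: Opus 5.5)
The plan treats $Q_P$ as a choice function on equivalence classes of $\sim_P$ and derives everything from two facts. First, $\sim_P$ is an equivalence relation by construction, being a reflexive--symmetric--transitive closure. Second, the $\sim_P$-orbit of any $x$ is a nonempty finite subset of the explicit finite carrier, since it contains $x$. Because the carrier $X$ is linearly ordered and the orbit is decidable by filtering the enumerated finite carrier with the supplied decision procedure, the least element $Q_P(x)$ exists and is a member of the orbit. That gives $Q_P(q)\sim_P q$ at once.

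For the biconditional, the forward direction uses the first clause twice: from $Q_P(q)=Q_P(r)$ we get $q\sim_P Q_P(q)=Q_P(r)\sim_P r$, and symmetry and transitivity give $q\sim_P r$. For the converse, $q\sim_P r$ together with transitivity shows that the orbits of $q$ and $r$ are the same subset of the carrier. Their least elements under the fixed linear order then coincide. I will state this once as a lemma: equal finite nonempty subsets have equal minima. Idempotence follows from this and the first clause, since $Q_P(q)\sim_P q$ implies $Q_P(Q_P(q))=Q_P(q)$.

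For preservation of support, schema, leaf types and output, let $o$ be any of these observations. The presentation guarantees that each local generator preserves $o$, so I will show by induction on derivations of the reflexive--symmetric--transitive closure that $x\sim_P y$ implies $o(x)=o(y)$. The reflexive case is trivial. The generator case is exactly the hypothesis on the presentation. The symmetric and transitive cases follow from symmetry and transitivity of equality. Applying this to $Q_P(q)\sim_P q$ gives each preservation equation. In particular it gives $\operatorname{support}(Q_P(q))=\operatorname{support}(q)$, and the corresponding equalities for schema, leaf types and output.

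The main obstacle is not mathematical depth but making the minimum well defined. We need that the orbit filter over the finite carrier really enumerates the whole orbit; this holds because the carrier is required to contain every value. The order used must also be a genuine linear order, so that the minimum is unique and independent of enumeration order. I would first fix $Q_P(x)$ as the minimum of the finite list obtained by filtering the carrier by $\lambda y.\,y\sim_P x$. The uniqueness lemma for minima of equal finite sets then does the remaining work.
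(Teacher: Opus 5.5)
Your proposal is correct and follows essentially the same route as the paper. The paper also takes the least enumerated orbit member as related to its input, characterizes relatedness by equality of minima, lifts generator invariance through the reflexive--symmetric--transitive closure to preserve support, schema, leaf types, and output, and obtains idempotence by applying exactness to the normal form; your explicit arguments for orbit nonemptiness and equal minima of equal finite sets just spell out the finite-presentation lemmas the paper cites.
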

\begin{proof}See Appendix~\ref{app:proof-F16}.\end{proof}

The proposed producer therefore normalizes elements before bag aggregation or
set deduplication; this ordering is a design choice outside the generic
finite-presentation theorem.

\leanref{TSG-DEF-013}{TypedSlottedEGraphsPaper.NormalizationBridge.EffectiveSupportCanonicalizer}
\leanref{TSG-INV-005}{TypedSlottedEGraphsPaper.NormalizationBridge.EffectiveSupportCanonicalizer.canonicalRecordPostconditions}
\paragraph{Canonical typed shape.}
An \emph{effective-support canonicalizer} $C$ packages a finite structural
quotient presentation, a constructive selector for its proof-relevant typed
renaming witness, and node-to-kernel extraction with exact support, schema,
leaf-type, and output observations.  It also packages a generally
nonsurjective kernel-to-ambient inclusion and provenance.  Its canonical shape is
$\operatorname{Shape}_C(n)=Q_C(K_C(n))$.  The returned record keeps the typed
shape witness separate from ambient transport, which is definitionally the
inclusion composed with that witness and is never inverted.  A concrete
canonical-alphabet enumeration may instantiate this interface, but leastness
of a particular witness is not part of the checked theorem.

\begin{corollary}[Effective-Support Canonical-Shape Exactness]
\label{cor:shape-exactness}
\leanref{TSG-COR-003}{TypedSlottedEGraphsPaper.NormalizationBridge.EffectiveSupportCanonicalizer.effectiveSupportCanonicalShapeExactness}
For nodes $n,n'$ in an effective-support canonicalizer $C$, let their kernels
have contexts $\Delta_n,\Delta_{n'}$.  Then
\begin{align*}
\leanref{TSG-ATOM-026}{TypedSlottedEGraphsPaper.NormalizationBridge.EffectiveSupportCanonicalizer.canonicalShape_eq_iff_effectiveKernel_typedRenaming}
\operatorname{Shape}_{C}(n)=\operatorname{Shape}_{C}(n')
\quad\Longleftrightarrow\quad
&\exists\rho\in\TRen(\Delta_n,\Delta_{n'}).\\[-2pt]
&\operatorname{Nonempty}(K_C(n)\Rightarrow_\rho K_C(n')).
\end{align*}
where $\Rightarrow_\rho$ is the proof-relevant generated structural
\leanref{TSG-ATOM-027}{TypedSlottedEGraphsPaper.NormalizationBridge.EffectiveSupportCanonicalizer.canonicalShape_exactness_preserves_output}
derivation.  Either side implies equality of the recorded node outputs.  No
renaming between the larger ambient input contexts is asserted.
\end{corollary}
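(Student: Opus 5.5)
The plan is to reduce Corollary~\ref{cor:shape-exactness} to Lemma~\ref{lem:qnf-exact}, applied to the finite structural quotient presentation $P_C$ that $C$ packages. By definition $\operatorname{Shape}_C(n)=Q_C(K_C(n))$. The exactness clause of Lemma~\ref{lem:qnf-exact} then gives
\[
\operatorname{Shape}_C(n)=\operatorname{Shape}_C(n')\iff K_C(n)\sim_{P_C}K_C(n').
\]
So the statement amounts to characterizing $\sim_{P_C}$ on kernels. The claim to establish is that $K_C(n)\sim_{P_C}K_C(n')$ holds exactly when $\operatorname{Nonempty}(K_C(n)\Rightarrow_\rho K_C(n'))$ for some $\rho\in\TRen(\Delta_n,\Delta_{n'})$.

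\textbf{Forward direction (closure to witness).} The propositional closure $\sim_{P_C}$ is decided by the presentation, so a proof of it carries no data. I will use the constructive selector packaged in $C$ to obtain the proof-relevant typed renaming witness, together with a derivation in the generated structural relation. The key point is that each local generator of $P_C$ preserves support and leaf types. Hence a chain of generators from $K_C(n)$ to $K_C(n')$ composes, exactly as in the indexed closure of Proposition~\ref{res:F13}, into a single renaming between the exact supports. By the exact-support postcondition of the extraction, these supports are $\Delta_n$ and $\Delta_{n'}$, and the composite lies in $\TRen(\Delta_n,\Delta_{n'})$. Reflexivity and symmetry steps contribute the identity and inverses, which are available by Proposition~\ref{res:F04}.

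\textbf{Reverse direction and output equality.} Any derivation $K_C(n)\Rightarrow_\rho K_C(n')$ is built from the same local generators, so induction on the derivation yields $K_C(n)\sim_{P_C}K_C(n')$. For the output claim, I take either side and pass to $\sim_{P_C}$. Output observations are preserved by every generator, so they agree on the two kernels. The extraction record's exact output observation then transfers this equality back to the nodes $n,n'$.

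For the final disclaimer there is nothing to prove. It suffices to observe that the ambient transport is the composite of the inclusion $\iota_n$ with the witness. Since $\iota_n$ is generally nonsurjective and is never inverted, no renaming between the ambient contexts can be built from it.

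\textbf{Main obstacle.} The hardest step is the forward direction. It requires bridging the propositional closure $\sim_{P_C}$ to the proof-relevant $\Rightarrow_\rho$, and showing that the resulting witness is indexed by exactly $\Delta_n,\Delta_{n'}$ rather than by the larger ambient contexts. I would first try to use the canonicalizer's selector field directly, on the assumption that it already returns the pair $(\rho,\text{derivation})$. If it only returns $\rho$, I would instead rebuild the derivation by induction along the decided chain and transport the contexts using the support-equality part of Lemma~\ref{lem:qnf-exact}.
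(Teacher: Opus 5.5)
Your proposal is correct and follows the paper's proof: Lemma~\ref{lem:qnf-exact} reduces shape equality to $K_C(n)\sim_{P_C}K_C(n')$, the canonicalizer's packaged bridge between the unindexed generated relation and the proof-relevant typed derivation gives both directions of the displayed equivalence, and the quotient observation law with the two kernel-output equations gives equality of the node outputs. Using the selector field directly is exactly what the paper does, and the indexing by $\Delta_n,\Delta_{n'}$ comes simply from the kernels being typed over their exact-support contexts; your fallback of composing renamings along a generator chain is therefore unnecessary, and it would not work generically, because the presentation's unindexed generators and their support-preservation law carry no renaming data.
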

\begin{proof}See Appendix~\ref{app:proof-F17}.\end{proof}

\subsection{Canonicalization and rebuilding}
\label{sec:algorithm}

\leanref{TSG-ALG-002}{TypedSlottedEGraphsPaper.ClaimLedger.TSGALG002CompleteFiniteCanonicalizationContract}
The checked canonicalization contract takes a finite candidate list that is
extensionally complete in both directions, selects a present least candidate,
and returns a record whose candidate is exactly that selection.  The selected
candidate is therefore complete, and only the returned shape projection is a
collision key.  Producing that list by recursive typed-bijection and local-
orbit enumeration is a separate refinement obligation.

\leanref{TSG-ACC-002}{TypedSlottedEGraphsPaper.ClaimLedger.structuralComputationBudget}
\leanref{TSG-DEF-014}{TypedSlottedEGraphsPaper.AtomicLedger.canonicalizationWorkFactor}
Let $S_{\rm in}$ be expanded input size, $S_K$ kernel size, $L$ retained
provenance length, $\operatorname{InvOcc}$ the multiset of leader-invocation
occurrences, and $\operatorname{BlockOcc}$ the block occurrences.  To expose
the sources of combinatorial work, define the reporting factor and budget
\begin{align*}
P_Q(n)&=\left(\prod_\tau|\Delta_{n,\tau}|!\right)
\left(\prod_{(\beta,o)\in\operatorname{BlockOcc}}|\operatorname{Aut}(\beta)|\right)\\[-2pt]
&\qquad\cdot\left(1+\sum_{m*\ell\in\operatorname{InvOcc}}|G_\ell|\right),\\
\leanref{TSG-ACC-001}{TypedSlottedEGraphsPaper.ClaimLedger.ceilLog2}
\ell(s)&=\operatorname{ceilLog2}(2+s),\\
B_T(n)&=S_{\rm in}\ell(S_{\rm in})+L+
P_Q(n)S_K\ell(S_K).
\end{align*}
Here the executable natural-number function is zero at inputs $0$ and $1$.
For every $k$, it satisfies
$\ell(k)=\lfloor\log_2(k+1)\rfloor+1$.
\leanref{TSG-ACC-003}{TypedSlottedEGraphsPaper.ClaimLedger.auxiliaryKernelWorkspaceBudget}
\leanref{TSG-ACC-004}{TypedSlottedEGraphsPaper.ClaimLedger.returnedStructuralRecordBudget}
The budget uses comparison sorting and charges map operations to encoded
sizes; its intended auxiliary and returned-record budgets are $S_K$ and
$S_K+L$, excluding input, stored groups, and materialized replay certificates.
These are accounting definitions, not proved asymptotic bounds: the Lean
development contains no operational cost semantics, and no Java runtime claim
is inferred from them.

\leanref{TSG-ALG-003}{TypedSlottedEGraphsPaper.ClaimLedger.TSGALG003CompatibilityGatedCollisionContract}
\leanref{TSG-ALG-004}{TypedSlottedEGraphsPaper.ClaimLedger.TSGALG004CertificatePreservingRebuildContract}
The intended producer allocates $\Delta_n$, gates collision union on the two
source-to-kernel certificates and a directed (PC) certificate, and treats
support shrinkage as an explicit interface restriction.  These are producer
obligations: the trace theorem consumes only certified obligation additions, rekeys,
compositions, and removals, not allocation, hashing, or a dirty-set algorithm.

\begin{proposition}[Finite Rebuild Quiescence]
\label{res:F18}
\leanref{TSG-PROP-003}{TypedSlottedEGraphsPaper.finiteRebuildQuiescence}
Let a rebuild system have a natural-valued measure, a step relation, and a
quiescence predicate.  If every step strictly decreases the measure and every
nonquiescent state admits a step, then every initial state reaches a
quiescent state by a finite rebuild trace.
\end{proposition}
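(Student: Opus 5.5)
The plan is strong induction on the measure, i.e.\ well-founded induction on $\mathbb N$ under $<$. Let the measure be $\mu:\mathsf{State}\to\mathbb N$, the step relation $\to$, and the quiescence predicate $\mathsf Q$. By a \emph{finite rebuild trace} from $s$ we mean a list $s=s_0\to s_1\to\cdots\to s_k$ with $\mathsf Q(s_k)$; equivalently, the reflexive--transitive closure $s\to^* t$ together with $\mathsf Q(t)$. We prove
\[
\forall N\in\mathbb N.\ \forall s.\ \mu(s)=N\Longrightarrow\exists t.\ s\to^* t\land\mathsf Q(t)
\]
by strong induction on $N$.

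For the induction step, fix $s$ with $\mu(s)=N$ and assume the claim for every state of measure below $N$. We split on whether $\mathsf Q(s)$ holds; this is the only place the argument is classical, unless quiescence is assumed decidable.
\begin{itemize}
\item If $\mathsf Q(s)$ holds, the empty trace $s\to^0 s$ witnesses the claim.
\item Otherwise, the progress hypothesis gives some $s'$ with $s\to s'$, and the decrease hypothesis gives $\mu(s')<\mu(s)=N$. The induction hypothesis then yields $t$ with $s'\to^* t$ and $\mathsf Q(t)$. Prepending the step $s\to s'$ gives $s\to^* t$.
\end{itemize}
The trace length is bounded by $\mu(s_0)$, because each step lowers a natural number by at least one. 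This bound can be recorded by strengthening the statement to $k\le\mu(s)$ if desired.

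There is no real mathematical obstacle. The only delicate point is the proof-relevant form of the conclusion. If the trace must be produced \emph{constructively} as data (a list of states with step proofs) rather than as a mere $\exists$, the case split on $\mathsf Q(s)$ needs a decision procedure, and the chosen successor $s'$ must be extracted from the progress hypothesis. The plan is therefore to state progress as a function returning a successor for nonquiescent states, and either to assume $\mathsf Q$ decidable or to use classical choice to justify the case split. In a proof assistant one can also bypass manual induction by noting that the inverse step relation is well-founded, as it is the preimage of $<$ on $\mathbb N$ under $\mu$, and then applying its recursor.
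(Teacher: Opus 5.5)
Your proof is correct and follows the paper's route: the paper also pulls the step relation back to strict order on the natural measure, applies well-founded induction, uses progress at a nonquiescent state to get a decreasing successor, and closes the quiescent case with the reflexive trace. Your strong induction on $N=\mu(s)$ is the same argument, and your classical case split on quiescence fits the permitted \texttt{Classical.choice} profile.
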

\begin{proof}See Appendix~\ref{app:proof-F18}.\end{proof}

\section{Certificate-Based Metatheory}
\label{sec:metatheory}

\leanref{TSG-FND-011}{TypedSlottedEGraphsPaper.StructuralLocalLaws}
The generic local-law interface declares descriptor-local automorphism tokens,
but a token alone proves no equation.  Its block-automorphism rule accepts an
already checked certificate between the selected body endpoints and lifts that
certificate under the matching binder block.  Constructing such endpoints or
their plug/conjugation certificate from concrete descriptor metadata is a
separate refinement obligation.
\leanref{TSG-DEF-015}{TypedSlottedEGraphsPaper.TypedTermLanguage}
A \texttt{TypedTermLanguage} supplies its typed term family and renaming action
together with primitive-equation and forward structural-congruence relations.
The intended source instance may populate those relations with checked alpha,
descriptor, $\mathsf A/\mathsf C/\mathsf I/\mathsf U$, and input-equation
endpoints; the generic Lean theorems assume rather than construct that
instantiation.  Neither a schema annotation nor operator injectivity is assumed.

\begin{definition}[Typed Equational Certificate]
\label{def:eq-derivation}
\leanref{TSG-DEF-016}{TypedSlottedEGraphsPaper.TypedEquationalCertificate}
For $s,t\in\operatorname{Term}_\Gamma(\tau)$,
$\EqCert_{\mathcal T}(\Gamma;s,t)$ is the type of finite well-typed
equational derivation trees concluding $\mathcal T\vdash_\Gamma s=t$.  Its
constructors are reflexivity, a primitive equation, a forward structural-
congruence rule, symmetry, transitivity, and transport along a typed
embedding.  In particular,
\[
d\in\EqCert_{\mathcal T}(\Gamma;s,t),\quad
e\in\TEmb(\Gamma,\Gamma')
\]
induces
\[
e_*d\in\EqCert_{\mathcal T}
(\Gamma';e\cdot s,e\cdot t).
\]
An equality judgement asserts that this type is inhabited.
\end{definition}

\leanref{TSG-DEF-017}{TypedSlottedEGraphsPaper.TypedModel}
A typed model supplies value and environment families, environment
restriction along embeddings, term evaluation, its transport-naturality field,
and local soundness of every primitive equation and structural-congruence rule.
\leanref{TSG-DEF-018}{TypedSlottedEGraphsPaper.TypedModel.evaluate}
The evaluator maps a same-context term and environment to its output value;
the separate naturality field belongs to the preceding model structure.
\leanref{TSG-DEF-019}{TypedSlottedEGraphsPaper.StructuralRealizer.binderBlock}
\leanref{TSG-DEF-020}{TypedSlottedEGraphsPaper.ConcreteStructuralRealizer}
A structural realizer $R$ maps atoms, containers, unary
binders, descriptor blocks, and nodes to source terms; each constructor comes
\leanref{TSG-INV-008}{TypedSlottedEGraphsPaper.StructuralRealizer.binderBlockNatural}
with its one-step naturality equation.  Local laws $Q$ provide only checked
\leanref{TSG-INV-007}{TypedSlottedEGraphsPaper.StructuralLocalLaws.binderBlockAutomorphismCongruence}
atom certificates and forward congruence rules.  In particular, a block
automorphism is admitted through a descriptor-local certificate rule rather
\leanref{TSG-INV-006}{TypedSlottedEGraphsPaper.ClaimLedger.TSGINV006BinderAdmissibilityNaturality}
than by a bare permutation label.  An optional admissibility family may also
record its outer-context restriction law; this specification is separate from
the soundness induction below.

\begin{lemma}[Soundness of Typed $\alpha$-Equivalence Modulo Binder-Block Automorphisms]
\label{lem:alpha-sound}
\leanref{TSG-LEM-007}{TypedSlottedEGraphsPaper.concreteTypedAlphaSemanticSoundness}
\leanref{TSG-ATOM-028}{TypedSlottedEGraphsPaper.paperAtom028AlphaEquivalenceEvaluationTransport}
Given $R$, $Q$, a typed model $\mathfrak M$, and a proof-relevant structural
alpha derivation $q\equiv_{\alpha}^{\rho}r$ under
$\rho\in\TRen(\Gamma,\Gamma')$, every target environment $\eta'$ satisfies
\[
\operatorname{eval}_{\mathfrak M}(\rho\cdot R(q),\eta')=
\operatorname{eval}_{\mathfrak M}(R(r),\eta').
\]
\leanref{TSG-ATOM-029}{TypedSlottedEGraphsPaper.AtomicLedger.binderBlockAutomorphismEvaluationInvariant}
The binder-block case uses exactly the independently supplied
descriptor-local rule in $Q$.
\end{lemma}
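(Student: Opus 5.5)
We argue by induction on the proof-relevant structural alpha derivation $q\equiv_{\alpha}^{\rho}r$ of Definition~\ref{def:alpha-port}, simultaneously for all caller contexts, renamings $\rho$, and target environments $\eta'$. The induction hypothesis is the stated evaluation equality itself, quantified over every environment. Quantifying over all environments matters because the binder cases evaluate bodies in extended environments. The three closure constructors are handled first, using only model-level facts.

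\textbf{Closure constructors.} Reflexivity with $\rho=\operatorname{id}_\Gamma$ needs the identity instance of the model's transport-naturality field, which gives $\operatorname{id}\cdot R(q)$ and $R(q)$ the same evaluation. For symmetry, with $q\equiv^{\rho}r$ giving $r\equiv^{\rho^{-1}}q$, we must show
\[
\operatorname{eval}(\rho^{-1}\cdot R(r),\eta)=\operatorname{eval}(R(q),\eta).
\]
We rewrite the left side via naturality as $\operatorname{eval}(R(r),\eta')$ for the environment $\eta'$ obtained by restriction along $\rho^{-1}$. We then apply the hypothesis at $\eta'$ and use naturality once more together with $\rho^{-1}\circ\rho=\operatorname{id}$ from Proposition~\ref{res:F04}. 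Transitivity, indexed by $\rho_2\circ\rho_1$, chains the two hypotheses. The link between them is naturality for a composite, $(\rho_2\circ\rho_1)\cdot t$ versus $\rho_2\cdot(\rho_1\cdot t)$, which I expect to follow from functoriality of the action in Corollary~\ref{res:F09} and the transport field.

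\textbf{Direct generators.} Each case unfolds $R$ by its one-step naturality equation and then uses a soundness fact supplied by the model.
\begin{itemize}
\item Slots: $\rho(x)=y$ together with the atom realizer's naturality.
\item Invocations: $m'=\rho\circ m$ gives $\rho\cdot R(m*a)=R((\rho\circ m)*a)$ via Corollary~\ref{res:F09}, so the two sides are syntactically equal after realization.
\item Sequences: the container constructor's congruence rule, with the pointwise hypotheses discharged componentwise.
\item Bags: the bijection between expanded tokens is decomposed into a permutation, whose $\mathsf C$ certificate is checked, followed by pointwise related values. Local soundness of the $\mathsf C$ primitive and of the congruence rule then gives evaluation equality.
\item Sets: in addition, pick for each element its related mate, which yields a typed surjection in each direction, and use the $\mathsf I$ certificate family together with $\mathsf C$.
\item Unary binder: the new head is fixed, so we apply the hypothesis to the body under the extended environment and lift through the binder's forward congruence rule.
\end{itemize}

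\textbf{Binder-block case (main obstacle).} The descriptor-prefix permutation is not a renaming of free slots, so it is not covered by naturality. We therefore split the step into two parts. The first is the outer-tail action, handled exactly as in the unary binder case. The second is the prefix automorphism. For this we invoke the independently supplied descriptor-local rule of $Q$, which lifts an already checked body certificate under the block. Its local soundness in $\mathfrak M$ then gives equal evaluation, which is precisely the claim that the binder-block case uses only this rule.

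The delicate point is making the body endpoints match what the rule expects: the body transported by the tail renaming must coincide with the body in the form the automorphism acts on. I would first try to commute the tail action past the bound extension using the binder-block naturality equation (TSG-INV-008). If the two orders disagree, I would instead route the argument through a certificate built by transitivity and evaluate that certificate, rather than argue directly on values.
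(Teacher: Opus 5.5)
\paragraph{Gap: the induction invariant is too weak for $Q$.}
The problem is your induction invariant. You carry only an evaluation equality through the induction, but the laws you are required to use are certificate transformers. $Q$ supplies checked atom certificates and forward congruence rules. The descriptor-local block rule, as you describe it yourself, lifts an \emph{already checked} body certificate under the block. In your binder-block case, however, the body premise exists only as an equality of evaluations, and nothing turns it back into an $\EqCert$. Typed-certificate soundness goes from certificates to evaluations, never the other way. The model's local-soundness fields cover the term language's primitive equations and congruence rules, and nothing in the stated interfaces gives $Q$'s lifted rule a form that accepts evaluation equalities as premises. So the step ``invoke the rule of $Q$, then use its local soundness in $\mathfrak M$'' has nothing to feed the rule. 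The same mismatch affects every direct-generator case you route through $Q$. Your closing fallback, building a certificate and evaluating it, points the right way, but it must govern the whole induction, not just endpoint alignment.

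\paragraph{The paper's route.}
The paper strengthens the invariant to $\EqCert_{\mathcal T}(\Gamma';\rho\cdot R(q),R(r))$ and applies typed-certificate soundness once, at the end.
\begin{itemize}
\item A direct generator, including the container and block clauses, is converted to a certificate by the supplied structural laws. This is where the descriptor-local rule receives its body certificate.
\item Reflexivity uses the term-level rename-identity law.
\item Symmetry transports the inner certificate along $\rho^{-1}$ with the $e_*d$ constructor, then rewrites by rename composition and $\rho^{-1}\circ\rho=\operatorname{id}$ before applying the symmetry constructor.
\item Transitivity transports along $\rho_2$, rewrites by rename composition, and composes.
\end{itemize}

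This route also removes the need to quantify over environments, and it avoids several weak ingredients in your sketch.
\begin{itemize}
\item Naturality at $\operatorname{id}$ only relates $\eta$ to its restriction along $\operatorname{id}$, so reflexivity needs the rename-identity law anyway.
\item Corollary~\ref{res:F09} gives composition only for invocations. It is not the general rename-composition law that transitivity needs.
\item Deriving the bag and set cases from the $\mathsf C$/$\mathsf I$ families uses hypotheses this generic lemma does not carry.
\item The set mate relation does not yield surjections: a one-element side may be related to two syntactically distinct elements on the other side.
\end{itemize}
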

\begin{proof}See Appendix~\ref{app:proof-F20}.\end{proof}

The realizer is a function on the structural syntax.  Its constructor-level
naturality fields are the only hypotheses needed for the following equality.

\begin{lemma}[Naturality of Structural Realization]
\label{lem:realization-natural}
\leanref{TSG-LEM-008}{TypedSlottedEGraphsPaper.StructuralRealizer.structuralRealizationNaturality}
For every structural value $n$ and typed embedding $e$,
\[
R(e\cdot n)=e\cdot R(n).
\]
\end{lemma}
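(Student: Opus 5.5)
We prove $R(e\cdot n)=e\cdot R(n)$ by structural induction on the structural value $n$, generalizing over the free context $\Gamma$, the target context $\Gamma'$, and the embedding $e\in\TEmb(\Gamma,\Gamma')$. The generalization is needed because the binder cases change the bound-type list, even though the free context is unchanged. The induction is mutual over ports, containers, and nodes, and follows the grammar of Definition~\ref{res:F05} and Definition~\ref{res:F06}. In each case we unfold the action $e\cdot(-)$ from the embedding-action definition, rewrite with the induction hypotheses on the immediate subvalues, and close the goal with the constructor's one-step naturality field of $R$. No semantic model, certificate, or quotient enters, which matches the remark that the naturality fields are the only hypotheses.

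The atomic cases come first. For a free coordinate $x$ we have $e\cdot x=e(x)$, and the atom naturality field states $R(e(x))=e\cdot R(x)$. For a bound de Bruijn index the action is the identity, so the field reduces to $R(i)=e\cdot R(i)$, which is exactly what it supplies. For an invocation, Corollary~\ref{res:F09} gives $e\cdot(m*a)=(e\circ m)*a$, and the invocation naturality field matches $R((e\circ m)*a)$ against $e\cdot R(m*a)$. If that field is phrased through $m$ instead, we use functoriality of the term-level renaming action, $e\cdot(m\cdot t)=(e\circ m)\cdot t$, together with Proposition~\ref{res:F04}.

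For a container we use that the action is componentwise on the stored length-indexed occurrence vector. The length and its arity-admission proof are unchanged, so the same container constructor applies on both sides. We push $R$ through the vector using the induction hypothesis pointwise, which is an inner induction on the vector, and then apply the container naturality field. Because the raw carrier keeps occurrence vectors, the $\Bag$ and $\Set$ tags need no quotient reasoning here. Nodes $f(q_1,\ldots,q_r)$ are handled the same way: first the ports, then the node field. Lemma~\ref{res:F07} guarantees that every intermediate expression lives at the correct schema and output index, so the rewrites typecheck.

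The main obstacle is the binder cases, $\operatorname{bind}\,q$ and $\operatorname{bindBlock}_\beta q$. There the action recurses under an extended bound-type list, so the induction hypothesis is used at the same $e$ but at a different scope index. The binder naturality fields must be stated as the lift of $e$ through the new bound types, and we must check that this lift is exactly the recursive action the embedding-action definition performs on the body. Our first step is to state the induction motive uniformly over the bound scope, so that the hypothesis for the body applies directly. We then rewrite with \texttt{binderBlockNatural} (and its unary analogue), with any cast between the two presentations of the extended scope discharged by the length and type-list equalities.
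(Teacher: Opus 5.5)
Your proposal is correct and follows the paper's proof: a mutual structural recursion over complete values and their child forests, in which each atom, container, unary-binder, block, and node case is closed by the realizer's constructor-level naturality field, and the forest cases reduce to list-map composition. The only simplification is that bound coordinates live in a separate de Bruijn scope, so the binder cases reuse the same $e$ with no lift or cast; the scope-uniform motive you already state is all that is needed there.
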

\begin{proof}See Appendix~\ref{app:proof-F21}.\end{proof}

\begin{lemma}[Soundness of Direct Node Congruence]
\label{lem:port-sound}
\leanref{TSG-LEM-009}{TypedSlottedEGraphsPaper.concretePortAndNodeCongruenceCertificate}
\leanref{TSG-ATOM-030}{TypedSlottedEGraphsPaper.paperAtom030CertifiedGraphRelativeNodeEquation}
Every proof-relevant direct structural derivation $d$ under a typed embedding
$e$ whose atom steps and constructor rules carry the local certificates in
$Q$ yields
\[
\EqCert_{\mathcal T}(e\cdot R(n),R(n')).
\]
\leanref{TSG-ATOM-031}{TypedSlottedEGraphsPaper.concretePortAndNodeCongruenceSemanticSoundness}
Applying typed-certificate soundness gives the corresponding evaluation
\leanref{TSG-ATOM-032}{TypedSlottedEGraphsPaper.AtomicLedger.sameContextIdentityCongruence}
equality in every typed model; at identity embedding this is the
untransported equation.
\end{lemma}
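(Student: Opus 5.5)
The plan is to argue by structural induction on the proof-relevant direct structural derivation $d$, producing a certificate in $\EqCert_{\mathcal T}(\Gamma';e\cdot R(n),R(n'))$ at each node of $d$. I treat the statement as one mutual induction over ports and nodes, since ports and nodes recurse into each other through containers and binders. Before starting, I would use Lemma~\ref{lem:realization-natural} to rewrite the left endpoint $e\cdot R(n)$ as $R(e\cdot n)$. That turns the goal into a certificate between two realizations of structural values in the same context $\Gamma'$. This matters because the constructor-level naturality fields of $R$ and the congruence rules in $Q$ are stated on realized structural syntax, not on transported terms.

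The cases, in order:

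\begin{itemize}
\item \textbf{Atom steps.} For slot atoms I use reflexivity in $\Gamma'$, since the derivation requires $e(x)=y$. For invocation atoms, the hypothesis says each atom step carries a local certificate in $Q$, so I take that certificate directly. If its endpoints are stated before transport, I push it forward with the embedding-transport constructor $e_*$ of Definition~\ref{def:eq-derivation}.
\item \textbf{Sequence containers.} Pointwise relatedness gives one certificate per element by the induction hypothesis. I combine them with the forward structural-congruence rule for the container realizer supplied by $Q$.
\item \textbf{Bag and set containers.} The matching bijection, or the mate relation for sets, is a reindexing. So I first apply the checked $\mathsf C$ (and, for sets, $\mathsf I$) endpoint certificate to permute or deduplicate one side. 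Then I apply pointwise congruence, and chain the two with the transitivity constructor.
\item \textbf{Unary binders and blocks.} I apply the induction hypothesis to the body under the extended bound-type list. The embedding acts on the outer tail only and fixes de Bruijn indices, so the induction must be stated for the lifted embedding. I then wrap the result with the binder congruence rule. For a block permutation, I additionally compose with the descriptor-local automorphism certificate from $Q$, exactly as in Lemma~\ref{lem:alpha-sound}.
\item \textbf{Nodes.} I combine the port certificates with the node congruence rule and close with transitivity.
\end{itemize}

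The semantic consequence then follows by applying soundness of typed certificates in a typed model: evaluation respects every constructor, using local soundness of primitive and congruence rules and the transport-naturality field. Specializing $e=\operatorname{id}$ gives the untransported equation, since $\operatorname{id}\cdot R(n)=R(n)$ by the identity case of the action.

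The main obstacle is keeping the endpoints exactly aligned in the binder and container cases. The induction hypothesis gives certificates whose left endpoints are transported by the \emph{lifted} embedding under binders, while $Q$'s congruence rule expects endpoints in a specific realized form. I would first prove a small lemma: realizing under the lifted embedding agrees with the realizer's one-step naturality equation. I would then state the induction generalized over all embeddings and bound scopes, so the binder case can instantiate it at the extended scope. The bag bijection case is the other delicate point. Before applying congruence, I would decompose the token bijection into a legal permutation so that the $\mathsf C$ certificate's endpoints match syntactically.
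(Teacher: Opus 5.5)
Your skeleton is the paper's argument: induction on $d$, the carried leaf certificate at each atom step, a one-constructor law to combine child certificates, node congruence at the root, then typed-certificate soundness and the specialization $e=\operatorname{id}$. The gap is in how you discharge the $\Bag$/$\Set$ clauses and the binder-block clause.

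The lemma's hypothesis is that every constructor rule of $d$ carries its local certificate in $Q$. That covers the bijection-matching bag clause, the mate clause for sets, and the block-automorphism clause. The paper simply applies that supplied one-constructor law to the recursively obtained child certificates.

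You instead rebuild these clauses from the $\mathsf C$ and $\mathsf I$ endpoint families, and those are not hypotheses here. They are certificates in $\mathcal B_\Sigma$ about the operator realizer $F(u_1,\ldots,u_k)$, not about $R$ applied to a container port value. The paper also says explicitly that the generic development assumes, rather than constructs, an instance that populates $\mathcal T$ with $\mathsf A/\mathsf C/\mathsf I/\mathsf U$ endpoints. Likewise, $Q$ supplies no standalone block-automorphism equation for you to compose with (``a token alone proves no equation''). Its block rule lifts a certificate between the selected body endpoints under the block. So the body induction hypothesis should go straight into that rule, not be wrapped in plain binder congruence and then composed.

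Even granting $\mathsf C$ and $\mathsf I$, the set case fails as written, because the mate relation is not a reindexing. Take left occurrences $[a_1,a_2,a_3]$ and right occurrences $[b_1,b_2,b_3]$, all distinct, with $a_1$ related to every $b_j$ and $a_2,a_3$ related only to $b_1$. Every element on either side has a mate. Yet neither side has syntactic duplicates to remove, and no permutation of one side aligns the two positionally with related pairs, since $a_2$ and $a_3$ both need $b_1$. So ``permute or deduplicate one side, then pointwise congruence'' has nothing to act on. You would have to expand both sides along the edge list of the mate relation, using $\mathsf I$ in reverse on each side, before pointwise congruence applies.

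The repair is to invoke $Q$'s constructor laws for these clauses, exactly as the hypothesis permits. The rest of your plan is harmless and consistent with the paper: the preliminary rewrite by Lemma~\ref{lem:realization-natural}, reflexivity at slot atoms, and generalizing the induction over embeddings and bound scopes for the binder cases.
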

\begin{proof}See Appendix~\ref{app:proof-F22}.\end{proof}

Let $\mathbf w=(w_a)_{a\in A}$ with
$w_a\in\operatorname{Term}_{S_a}(\tau_a)$.  A raw obligation package gives
only the typed endpoints for stored cases, parent edges, and recorded
symmetry generators; it contains no endpoint certificate.
\begin{definition}[Coherent Equational Certificate Family]
\label{def:eq-cert}
\leanref{TSG-DEF-021}{TypedSlottedEGraphsPaper.PaperF23CoherentCertificateFamily}
The family $\mathbf w$ is coherent with an obligation package $O$ relative to
$\mathcal T_{\Sigma,\mathcal E}$ when it supplies:
\begin{description}
\item[(EC)] for every stored case $c$ of class $a$, with stored context
$T_c$, stored realization $r_c$, and interface inclusion
$\iota_c:S_a\hookrightarrow T_c$, a certificate
\[
\EqCert_{\mathcal T_{\Sigma,\mathcal E}}(T_c;r_c,\iota_c\cdot w_a);
\]
\item[(PC)] for every parent edge $a\xrightarrow{m}b$, a certificate
$\EqCert_{\mathcal T_{\Sigma,\mathcal E}}(S_a;w_a,m\cdot w_b)$; and
\item[(SC)] for every recorded generator $\pi$ at $a$, a certificate
$\EqCert_{\mathcal T_{\Sigma,\mathcal E}}(S_a;w_a,\pi\cdot w_a)$.
\end{description}
\leanref{TSG-ATOM-045}{TypedSlottedEGraphsPaper.paperF23SymmetryWordCertificate}
Identity, inverse, and composition derive (SC) for every represented word in
the recorded generators.
\end{definition}

\leanref{TSG-DEF-022}{TypedSlottedEGraphsPaper.PaperFiniteUnfolding}
A finite parent path contains only raw parent edges.  Its embedding is their
composition.  A certified finite unfolding of $m*a$ to $t$ chooses an actual
stored case $c$, an extension $\bar m:T_c\hookrightarrow\Gamma$ with
$\bar m\circ\iota_c=m$, and a finite trace from $t$ to
$\bar m\cdot r_c$ whose leaves are primitive equations or forward structural
congruence rules (with symmetry permitted).

\begin{lemma}[Find and Finite Unfolding Preserve Equational Certificates]
\label{lem:find-cert}
\leanref{TSG-LEM-010}{TypedSlottedEGraphsPaper.paperF24FindAndFiniteUnfoldingPreserveEquationalCertificates}
\leanref{TSG-ATOM-033}{TypedSlottedEGraphsPaper.paperAtom033FindWitnessCertificate}
Let $p$ be a raw finite parent path from $a$ to a leader $\ell$, let
$\widehat m=m\circ\operatorname{emb}(p)$, and suppose $\mathbf w$ is
coherent.  Then
\begin{equation}
\label{eq:find-witness}
\mathcal T_{\Sigma,\mathcal E}\vdash_\Gamma
m\cdot w_a=\widehat m\cdot w_\ell.
\end{equation}
\leanref{TSG-ATOM-034}{TypedSlottedEGraphsPaper.paperAtom034FiniteUnfoldingWitnessCertificate}
Moreover, every certified finite unfolding of $m*a$ to $t$ yields
\begin{equation}
\label{eq:rep-witness}
\mathcal T_{\Sigma,\mathcal E}\vdash_\Gamma
t=\widehat m\cdot w_\ell.
\end{equation}
\end{lemma}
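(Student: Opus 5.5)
We argue by induction on the length of the raw parent path $p$, using only the (PC) and (EC) components of coherence from Definition~\ref{def:eq-cert} and the transport and transitivity constructors of Definition~\ref{def:eq-derivation}. We first prove the stronger, caller-independent statement
\[
\mathcal T_{\Sigma,\mathcal E}\vdash_{S_a} w_a=\operatorname{emb}(p)\cdot w_\ell .
\]
Transporting it along $m\in\TEmb(S_a,\Gamma)$ via $m_*$ then gives \eqref{eq:find-witness}. This needs the compatibility $m\cdot(\operatorname{emb}(p)\cdot w_\ell)=(m\circ\operatorname{emb}(p))\cdot w_\ell$, namely functoriality of the term action along composition of typed embeddings. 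We take this from the term language's renaming action, in the same way that Corollary~\ref{res:F09} records it for invocations and Proposition~\ref{res:F04} records closure of composition.

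For the base case, $p$ is empty, so $a=\ell$ and $\operatorname{emb}(p)=\operatorname{id}_{S_a}$. Reflexivity together with the identity law of the action suffices. For the step, write $p=(a\xrightarrow{m_1}b)\cdot p'$ with $\operatorname{emb}(p)=m_1\circ\operatorname{emb}(p')$. (PC) supplies a certificate for $w_a=m_1\cdot w_b$ in $S_a$. The induction hypothesis gives $w_b=\operatorname{emb}(p')\cdot w_\ell$ in $S_b$. Transporting the latter along $m_1$ by $(m_1)_*$, rewriting with functoriality, and composing with the (PC) certificate by transitivity closes the step. The order of composition must match the convention in the definition of $\operatorname{emb}(p)$. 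If the paper composes in the other direction, we instead induct by peeling the last edge.

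For \eqref{eq:rep-witness}, a certified finite unfolding provides three things: a stored case $c$ of class $a$, an extension $\bar m$ with $\bar m\circ\iota_c=m$, and a finite trace from $t$ to $\bar m\cdot r_c$. We first read the trace as an element of $\EqCert_{\mathcal T}(\Gamma;t,\bar m\cdot r_c)$. Its leaves are primitive equations or forward structural congruences, and symmetric uses are handled by the symmetry constructor, so this step is induction on the trace using transitivity. Next, (EC) for $c$ gives $r_c=\iota_c\cdot w_a$ in $T_c$. Transporting along $\bar m$ yields $\bar m\cdot r_c=(\bar m\circ\iota_c)\cdot w_a=m\cdot w_a$. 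Chaining these by transitivity with \eqref{eq:find-witness} gives $t=\widehat m\cdot w_\ell$.

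The mathematical content is routine. The main obstacle is bookkeeping of the action-functoriality equalities: the certificate type is indexed by exact endpoints, so every rewrite $e\cdot(e'\cdot s)=(e\circ e')\cdot s$ must be discharged either definitionally or by an explicit cast of the endpoint indices. I would first try to isolate a single transport lemma of the form
\[
d\in\EqCert(\Gamma;s,t)\ \Longrightarrow\ e_*d\in\EqCert(\Gamma';e\cdot s,e\cdot t),
\]
paired with a composition-coherence rewrite of the endpoints. Both inductions can then reuse it uniformly. A secondary point to check is that the path really ends at a leader, i.e.\ the edges are exactly those of the depth-decreasing forest of Definition~\ref{res:F10}. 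The lemma, however, only needs the finite path as given.
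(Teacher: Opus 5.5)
Your proposal is correct and follows the paper's proof: induction on the raw parent path that transports and composes (PC) certificates and rewrites by rename composition, then induction on the unfolding trace, transport of (EC) along $\bar m$ with the compatibility $\bar m\circ\iota_c=m$, and transitivity with the find certificate. The only difference is cosmetic. You first prove the caller-free equation $w_a=\operatorname{emb}(p)\cdot w_\ell$ in $S_a$ and transport it once along $m$, whereas the paper transports each (PC) certificate along the accumulated preceding caller embedding; both yield the same derivation.
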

\begin{proof}See Appendix~\ref{app:proof-F24}.\end{proof}

\begin{lemma}[Certified Leader-Kernel Extraction]
\label{lem:kernel-cert}
\leanref{TSG-LEM-011}{TypedSlottedEGraphsPaper.paperF25CertifiedLeaderKernelExtraction}
Let a certified extraction record contain an original term
$n\in\operatorname{Term}_{\Gamma_0}(\tau)$, an exact kernel
$K\in\operatorname{Term}_{\Delta}(\tau)$, an inclusion
$\iota:\Delta\hookrightarrow\Gamma_0$, and finite provenance whose leaves
are primitive equations or forward structural-congruence rules.  Replaying
that provenance yields
\begin{equation}
\label{eq:kernel-certificate}
d\in
\EqCert_{\mathcal T_{\Sigma,\mathcal E}}
(\Gamma_0;n,\iota\cdot K).
\end{equation}
The theorem validates such a record; constructing it from a concrete graph is
a refinement obligation.
\end{lemma}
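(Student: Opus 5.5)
The plan is to prove the statement by structural induction on the finite provenance tree carried by the record.  We fix the shape of that provenance first.  The replay trace is a finite sequence (or tree) of steps $t_0=n,t_1,\ldots,t_k=\iota\cdot K$ in $\operatorname{Term}_{\Gamma_0}(\tau)$.  Each step is justified by a leaf that is either a primitive equation or a forward structural-congruence rule, possibly used under symmetry and possibly stated in a smaller context and then transported.  The certificate $d$ is then assembled using only the constructors of Definition~\ref{def:eq-derivation}.

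The key steps are as follows.
\begin{enumerate}
\item Map each leaf justifying a step $t_{i}\to t_{i+1}$ to the corresponding constructor: the primitive-equation constructor or the forward structural-congruence constructor.  Wrap it in the symmetry constructor when the leaf is used backwards.
\item If a leaf is recorded in a context $\Delta'$ with an embedding $e\in\TEmb(\Delta',\Gamma_0)$, apply the transport constructor $e_*$.  This yields a certificate in $\EqCert_{\mathcal T_{\Sigma,\mathcal E}}(\Gamma_0;\,t_i,\,t_{i+1})$.  Here the endpoints $e\cdot s$ and $e\cdot s'$ must be recognized as $t_i$ and $t_{i+1}$ syntactically.  Where the replay is phrased through realizations $R(\cdot)$ of structural values, this identification is the naturality equation $R(e\cdot n)=e\cdot R(n)$ of Lemma~\ref{lem:realization-natural}.
\item Chain the step certificates with the transitivity constructor, starting from reflexivity at $n$ when $k=0$.
\item Check the final endpoint.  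The last provenance term must be exactly $\iota\cdot K$, where $\iota\cdot K$ is computed by the embedding action on terms.  Since $\iota$ is typed, Lemma~\ref{res:F07} keeps $\iota\cdot K$ in $\operatorname{Term}_{\Gamma_0}(\tau)$, so the transitivity chain is well typed.
\end{enumerate}

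The main obstacle is endpoint bookkeeping rather than any mathematical content.  Each transitivity step requires the target of one certificate to be definitionally (or propositionally) equal to the source of the next.  This matters most across transport, where $e\cdot(e'\cdot s)$ must be reconciled with $(e\circ e')\cdot s$, and where the embedding action on terms must agree with the action on structural values.

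I would first make the record's well-formedness condition explicit.  That condition states that consecutive provenance endpoints agree and that the first and last endpoints are $n$ and $\iota\cdot K$.  I would then rewrite along the functoriality of the action, using the composition law of Proposition~\ref{res:F04} at the level of actions.  If functoriality is not definitional, I would transport the certificate with a cast along that propositional equality.

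Nothing beyond the record's contents is used.  In particular, no claim is made that a concrete graph produces such a record, matching the statement's caveat.
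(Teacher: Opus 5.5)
Your proposal is correct and follows the paper's proof: induct on the finite replay and map its reflexive, primitive, congruence, symmetric, and transitive constructors to the corresponding $\EqCert$ constructors. In the paper the replay is an endpoint-indexed tree from $n$ to $\iota\cdot K$ with no transport constructor, so the transport, realization-naturality, and cast bookkeeping you anticipate does not arise; endpoint agreement holds by typing.
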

\begin{proof}See Appendix~\ref{app:proof-F25}.\end{proof}

\begin{corollary}[Certified Effective-Shape Collision]
\label{cor:collision-cert}
\leanref{TSG-COR-004}{TypedSlottedEGraphsPaper.paperF26CertifiedEffectiveShapeCollision}
Let two certified canonical records have the same output type $\tau$, common
canonical context $C$, shape type $P$, and realization
$\operatorname{real}:P\to\operatorname{Term}_C(\tau)$.  Record $i$ contains
an extracted kernel $K_i$ over $\Delta_i$, a renaming
$\sigma_i:C\cong\Delta_i$, a shape $p_i$, and a certified normalization from
\leanref{TSG-ATOM-037}{TypedSlottedEGraphsPaper.paperAtom037EffectiveKernelCollisionCertificate}
$\sigma_i^{-1}\cdot K_i$ to $\operatorname{real}(p_i)$.  If $p_1=p_2$, put
\[
\rho=\sigma_2\circ\sigma_1^{-1}\in\TRen(\Delta_1,\Delta_2).
\]
Then normalization replay, the shape equality, and inverse replay produce
\[
c_{1,2}\in
\EqCert_{\mathcal T_{\Sigma,\mathcal E}}
(\Delta_2;\rho\cdot K_1,K_2).
\]
\leanref{TSG-ATOM-038}{TypedSlottedEGraphsPaper.paperAtom038AmbientCollisionCertificate}
For any $e_i:\Gamma_i\hookrightarrow\Omega$ satisfying
$e_1\circ\iota_1=e_2\circ\iota_2\circ\rho$, composing $c_{1,2}$ with both
certified extraction records also gives
\[
\EqCert_{\mathcal T_{\Sigma,\mathcal E}}
(\Omega;e_1\cdot n_1,e_2\cdot n_2).
\]
Equality of output type and canonical context is a premise of this typed
collision theorem, not a consequence of untagged payload equality.
\end{corollary}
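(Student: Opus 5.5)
The plan is to build $c_{1,2}$ as an explicit chain of certificates in the constructors of Definition~\ref{def:eq-derivation}: transport, symmetry, and transitivity. Each piece has to be placed in the context where it naturally lives. For $i=1,2$, the record supplies a normalization certificate
\[
\nu_i\in\EqCert_{\mathcal T_{\Sigma,\mathcal E}}(C;\sigma_i^{-1}\cdot K_i,\operatorname{real}(p_i)).
\]
Since $p_1=p_2$, we have $\operatorname{real}(p_1)=\operatorname{real}(p_2)$ by congruence of the function $\operatorname{real}$. Because both records share the output type $\tau$ and context $C$, the endpoints of $\nu_1$ and $\nu_2$ are well typed in the same family $\operatorname{Term}_C(\tau)$. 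This is the only place the typing premise is used, and the step after Corollary~\ref{cor:collision-cert} should stress it. Symmetry and transitivity then give
\[
\nu\in\EqCert(C;\sigma_1^{-1}\cdot K_1,\sigma_2^{-1}\cdot K_2).
\]

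Next I would transport $\nu$ along $\sigma_2$, viewed as an element of $\TEmb(C,\Delta_2)$, to get
\[
(\sigma_2)_*\nu\in\EqCert(\Delta_2;\sigma_2\cdot(\sigma_1^{-1}\cdot K_1),\sigma_2\cdot(\sigma_2^{-1}\cdot K_2)).
\]
The main obstacle is rewriting these endpoints to $\rho\cdot K_1$ and $K_2$. This needs the embedding action on terms to be functorial:
\[
e\cdot(e'\cdot t)=(e\circ e')\cdot t,\qquad \operatorname{id}\cdot t=t.
\]
Combined with the inverse laws stored in $\TRen$ (Proposition~\ref{res:F04}), functoriality gives $\sigma_2\circ\sigma_2^{-1}=\operatorname{id}$. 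It also gives $\sigma_2\circ\sigma_1^{-1}=\rho$.

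For the structural syntax, functoriality follows from the recursive definition of the action together with Corollary~\ref{res:F09}. For source terms, I would either take it from the \texttt{TypedTermLanguage} renaming action or derive it via Lemma~\ref{lem:realization-natural} from the realized shapes. If only propositional equalities are available, I would cast the certificate along them. The step I would try first is to avoid that cast: transport $\nu$ along $\rho$-compatible embeddings directly and state the endpoint equalities as rewriting lemmas.

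For the ambient claim, Lemma~\ref{lem:kernel-cert} supplies
\[
d_i\in\EqCert(\Gamma_i;n_i,\iota_i\cdot K_i).
\]
Transport $d_i$ along $e_i$ to obtain $e_i\cdot n_i=(e_i\circ\iota_i)\cdot K_i$ in $\Omega$. Transport $c_{1,2}$ along $e_2\circ\iota_2$ to get
\[
(e_2\circ\iota_2\circ\rho)\cdot K_1=(e_2\circ\iota_2)\cdot K_2,
\]
again using functoriality. The hypothesis $e_1\circ\iota_1=e_2\circ\iota_2\circ\rho$ identifies the left endpoint with $(e_1\circ\iota_1)\cdot K_1$. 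Chaining transported $d_1$, then this certificate, then the symmetric of transported $d_2$ by transitivity yields the $\Omega$-certificate. No inversion of any $\iota_i$ is needed, consistent with the nonsurjective inclusions.
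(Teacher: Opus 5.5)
Your proposal is correct and essentially matches the paper's proof. You chain the left normalization, the shape equality obtained by congruence of $\operatorname{real}$, and the reversed right normalization, transport along $\sigma_2$, and rewrite endpoints by the rename-composition and rename-identity laws; for the ambient claim you transport $c_{1,2}$ and the two certificates of Lemma~\ref{lem:kernel-cert} into $\Omega$ and rewrite by the compatibility equation $e_1\circ\iota_1=e_2\circ\iota_2\circ\rho$. The only difference is cosmetic: you compose in $C$ before a single transport along $\sigma_2$, whereas the paper transports the left normalization first and composes in $\Delta_2$.
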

\begin{proof}See Appendix~\ref{app:proof-F26}.\end{proof}

\begin{theorem}[Certificate-Preserving Reachability and Finite-Unfolding Soundness]
\label{thm:semantic-sound}
\leanref{TSG-THM-001}{TypedSlottedEGraphsPaper.paperF27CertificatePreservingReachabilityAndFiniteUnfoldingSoundness}
\leanref{TSG-ATOM-039}{TypedSlottedEGraphsPaper.paperAtom039CertifiedReachabilityPreservesCoherence}
\leanref{TSG-DEF-023}{TypedSlottedEGraphsPaper.LocalCertifiedTrace}
Let a theorem state expose exactly its live EC, PC, and SC endpoint
obligations.  Suppose a finite proof-relevant trace starts at a state with no
live obligations and ends at the package $O$ of Definition~\ref{def:eq-cert}.
Each step is one of the following nine certified families:
\begin{enumerate}
\leanref{TSG-STEP-001}{TypedSlottedEGraphsPaper.LocalCertifiedStep.insertion}
\leanref{TSG-STEP-008}{TypedSlottedEGraphsPaper.LocalCertifiedStep.rebuild}
\item insertion and rebuilding add obligations with local certificates;
\leanref{TSG-STEP-002}{TypedSlottedEGraphsPaper.LocalCertifiedStep.canonicalization}
\leanref{TSG-STEP-005}{TypedSlottedEGraphsPaper.LocalCertifiedStep.symmetry}
\leanref{TSG-STEP-006}{TypedSlottedEGraphsPaper.LocalCertifiedStep.interfaceRestriction}
\item canonicalization, symmetry update, and interface restriction rekey
obligations using typed endpoint bridges;
\leanref{TSG-STEP-003}{TypedSlottedEGraphsPaper.LocalCertifiedStep.collision}
\leanref{TSG-STEP-004}{TypedSlottedEGraphsPaper.LocalCertifiedStep.union}
\leanref{TSG-STEP-009}{TypedSlottedEGraphsPaper.LocalCertifiedStep.pathCompression}
\item collision, union, and path compression compose two certified
obligations with typed endpoint bridges; and
\leanref{TSG-STEP-007}{TypedSlottedEGraphsPaper.LocalCertifiedStep.noninferentialUnion}
\item noninferential union only removes or inherits obligations.
\end{enumerate}
Then $O$ has a coherent equational certificate family.

\leanref{TSG-ATOM-040}{TypedSlottedEGraphsPaper.paperAtom040FiniteUnfoldingEquationalSoundness}
Moreover, let two raw parent paths end at the same leader, let certified
finite unfoldings in one caller context produce $t$ and $u$, and let a word
in the recorded symmetry generators satisfy the exact equality between the
two composed caller embeddings.  Then
\begin{equation}
\label{eq:derivable-soundness}
\mathcal T_{\Sigma,\mathcal E}\vdash_\Gamma t=u.
\end{equation}
\leanref{TSG-ATOM-041}{TypedSlottedEGraphsPaper.paperAtom041FiniteUnfoldingModelSoundness}
Consequently, for every $\mathfrak M\models\mathcal T_{\Sigma,\mathcal E}$
and $\eta\in\operatorname{Env}_{\mathfrak M}(\Gamma)$,
\begin{equation}
\label{eq:model-soundness}
\operatorname{eval}_{\mathfrak M}(t,\eta)=
\operatorname{eval}_{\mathfrak M}(u,\eta).
\end{equation}
\leanref{TSG-ATOM-042}{TypedSlottedEGraphsPaper.paperAtom042CommonContextWeakenedFiniteUnfoldingSoundness}
The same certificate transports to a common typed context along two equal
embeddings of the caller context.
\end{theorem}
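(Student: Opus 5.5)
The first part goes by induction on the length of the finite certified trace. The invariant is: at every intermediate state, the family $\mathbf w$ supplies a certificate in $\EqCert_{\mathcal T_{\Sigma,\mathcal E}}$ for each live EC, PC, and SC obligation of that state. The base case holds vacuously, since the initial state has no live obligations.

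For the inductive step we case on the nine step families, grouped as in the statement.
\begin{itemize}
\item \emph{Additions} (insertion, rebuilding). The new obligation carries its local certificate, and the old obligations are untouched.
\item \emph{Rekeys} (canonicalization, symmetry update, interface restriction). The old certificate is transported along the supplied typed endpoint bridge. This uses the transport constructor $e_*d$ of Definition~\ref{def:eq-derivation}, glued by symmetry and transitivity on both endpoints. For canonicalization we additionally invoke Lemma~\ref{lem:kernel-cert} and Corollary~\ref{cor:collision-cert} to obtain the bridge certificates.
\item \emph{Compositions} (collision, union, path compression). The two certified obligations are chained by transitivity after transporting one along the bridge embedding. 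For path compression, the composite parent edge $a\to\ell$ gets a (PC) certificate from the two edge certificates via $m\cdot w_b = m\cdot(m'\cdot w_\ell)=(m\circ m')\cdot w_\ell$, using functoriality of the action (Proposition~\ref{res:F04}).
\item \emph{Removals} (noninferential union). This only deletes or inherits obligations, so restriction of the invariant suffices.
\end{itemize}
At the final state, whose live obligations are exactly $O$, the invariant is precisely coherence. The closure under identity, inverse, and composition noted after Definition~\ref{def:eq-cert} then extends (SC) to every word.

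For \eqref{eq:derivable-soundness}, let the two unfoldings be of $m*a$ to $t$ and of $m'*b$ to $u$, with paths $p$ and $p'$ to a common leader $\ell$. Lemma~\ref{lem:find-cert}, equation \eqref{eq:rep-witness}, gives $t=\widehat m\cdot w_\ell$ and $u=\widehat m'\cdot w_\ell$. The recorded word $\pi$ satisfies $\widehat m=\widehat m'\circ\pi$ (in the caller context $\Gamma$, so $\rho=\mathrm{id}$). The word's (SC) certificate $w_\ell=\pi\cdot w_\ell$ is transported along $\widehat m'$ and rewritten by functoriality to give $\widehat m'\cdot w_\ell=\widehat m\cdot w_\ell$. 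Transitivity and symmetry then conclude $t=u$.

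The model statement \eqref{eq:model-soundness} follows by a routine induction on certificates. Reflexivity, symmetry, and transitivity are immediate. Primitive and congruence leaves use the model's local soundness fields. Transport uses the naturality field with environment restriction. The final weakening clause is the transport constructor applied along the two equal embeddings, which coincide and therefore produce the same target terms.

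The main obstacle is the rekey and composition cases. There the endpoints of the stored obligation must match the bridge's endpoints up to definitional equality of composed embeddings, e.g.\ $(e\circ m)\cdot w = e\cdot(m\cdot w)$. My first move is to isolate a small lemma that the action of typed embeddings on terms is a strict functor (identity and composition laws, stated as actual equalities, not just certificates). All bridge reasoning would then be reduced to transitivity chains over these equalities. If only certificate-level functoriality is available, I would instead insert explicit reflexivity-cast certificates at each junction.
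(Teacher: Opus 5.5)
Your proposal is correct and follows the paper's proof essentially step for step: induction on the certified trace (the obligation-free start is coherent by elimination, then the four add/rekey/compose/remove schemas), then Lemma~\ref{lem:find-cert} to send both unfoldings to the common leader, the (SC) word closure transported along the exact embedding equation, transitivity, typed-certificate model soundness, and transport along the equal embeddings. Two small remarks: the canonicalization bridges are supplied by the step constructor itself, so invoking Lemma~\ref{lem:kernel-cert} and Corollary~\ref{cor:collision-cert} there is unnecessary, and the identity $(e\circ m)\cdot w=e\cdot(m\cdot w)$ you use is the term language's rename-composition law, not Proposition~\ref{res:F04}.
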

\begin{proof}See Appendix~\ref{app:proof-F27}.\end{proof}

The nine labels above classify abstract obligation transformations, not Java
transition semantics.  The theorem is a conditional validation envelope: it
proves neither that a concrete implementation emits the local evidence nor
unrestricted symmetry, interface restriction, saturation termination, or
semantic completeness.

\paragraph{Formalization status and trust boundary.}
Every retained formal environment carries a distinct correspondence tag.  The
target development is Lean~4~\cite{deMouraUllrich2021Lean4}, pinned to version 4.33.0, under the
\textsc{standard-lean-explicit-tcb} profile, permitting only
\texttt{propext}, \texttt{Quot.sound}, and \texttt{Classical.choice} and
forbidding project axioms, proof placeholders, \texttt{native\_decide},
unsafe/partial proof dependencies, FFI oracles, and unverified solver results.
The complete statement index, proof dependencies, and artifact-refinement
boundary are recorded in Appendix~\ref{app:formal-details} and the 
formal registry at \texttt{paper-claims/formal} in the artifact.  Abstract-kernel closure, Java refinement, and experimental
replay are reported as separate layers; none follows from proof volume or
bounded tests.

\section{Implementation and Evaluation}
\begin{figure}[t]
\centering
\scriptsize
\setlength{\fboxsep}{4pt}

\fbox{%
\begin{minipage}{0.93\linewidth}
\textbf{Model context}\par
\ttfamily
sig User \{\\
\hspace*{1em}follows: set User,\\
\hspace*{1em}sees: set Photo,\\
\hspace*{1em}posts: set Photo,\\
\hspace*{1em}suggested: set User\\
\}\\
sig Influencer extends User \{\}\hspace{1.5em}%
sig Photo \{ date: one Day \}\\
sig Ad extends Photo \{\}\hspace{1.5em}sig Day \{\}
\end{minipage}%
}

\vspace{4pt}

\begin{minipage}[t]{0.47\linewidth}
\raggedright
\textbf{LEFT}\par
\ttfamily
pred left \{\\
\hspace*{1em}all u: User, p: Photo |\\
\hspace*{2em}p in u.sees implies\\
\hspace*{3em}(p not in Ad) implies\\
\hspace*{4em}p in u.follows.posts\\
\hspace*{5em}or p in Ad\\
\}
\end{minipage}
\hfill
\begin{minipage}[t]{0.47\linewidth}
\raggedright
\textbf{RIGHT}\par
\ttfamily
pred right \{\\
\hspace*{1em}all user: User, photo: Photo |\\
\hspace*{2em}photo in user.sees implies\\
\hspace*{3em}photo in user.follows.posts\\
\hspace*{4em}or photo in Ad\\
\}
\end{minipage}

\vspace{4pt}

\begin{minipage}{0.95\linewidth}
Let
\(A(x,y)\equiv y\in x.\mathsf{sees}\),
\(B(x,y)\equiv y\in x.\mathsf{follows}.\mathsf{posts}\), and
\(C(y)\equiv y\in\mathsf{Ad}\).
The typed quotient context is
\(\Gamma_Q=\{q_0:\mathsf{User},q_1:\mathsf{Photo}\}\).

\[
\begin{aligned}
\mathsf{left}
&\stackrel{\mathrm{Imp},\,\alpha_{\Gamma_Q}}{\Longrightarrow}
  \neg A(q_0,q_1)\lor C(q_1)\lor B(q_0,q_1)\lor C(q_1)\\
&\stackrel{\mathrm{ACI}_{\lor}}{\Longrightarrow} Q,
\\[1mm]
\mathsf{right}
&\stackrel{\mathrm{Imp},\,\alpha_{\Gamma_Q}}{\Longrightarrow}
  \neg A(q_0,q_1)\lor B(q_0,q_1)\lor C(q_1)\\
&\stackrel{\mathrm{ACI}_{\lor}}{\Longrightarrow} Q.
\end{aligned}
\]
\end{minipage}

\vspace{2pt}

\fbox{%
\begin{minipage}{0.93\linewidth}
{\centering\textbf{Shared readable quotient}\par}
\ttfamily
all \_q0: User, \_q1: Photo |\\
\hspace*{1em}\_q1 not in \_q0.sees\\
\hspace*{1em}or \_q1 in \_q0.follows.posts\\
\hspace*{1em}or \_q1 in Ad
\end{minipage}%
}

\caption{Schematic certificate-bound saturation of two Alloy
predicates. $\mathrm{Imp}$ denotes certified implication elimination;
$\alpha_{\Gamma_Q}$ admits only type-preserving binder renamings, so
the \texttt{User} and \texttt{Photo} coordinates cannot be exchanged;
and $\mathrm{ACI}_{\lor}$ flattens, reorders, and deduplicates the
visible disjunction children under the declared ACI certificates.
The final Alloy text is a readable projection of the common canonical
observation, rather than an independent authority for equality.}
\label{fig:alloy-certificate-example}
\end{figure}
\label{sec:evaluation}
Figure~\ref{fig:alloy-certificate-example} uses the social-media model
introduced in Section~\ref{sec:background} to compare two predicates for
the same policy. Predicate \texttt{left}
uses a nested implication; after implication elimination, its body
contains two copies of \texttt{p in Ad}. Predicate \texttt{right}
states directly that every seen photo is either posted by a followed
user or is an ad. The implication and ACI steps are admitted only by
their endpoint-indexed certificates, while the $\alpha$-equivalence
step maps \texttt{u}/\texttt{user} to \texttt{\_q0: User} and
\texttt{p}/\texttt{photo} to \texttt{\_q1: Photo} through
type-preserving slot maps. The disjunction port then flattens
association, forgets child order, and removes the duplicate under
idempotence. Both inputs therefore have the readable quotient shown
below. This equality is relative to the declared certified rewrite
theory, not to a complete decision procedure for Alloy.

We instantiate the representation with Alloy in an environment with a Java backend consisting of an API based on the Kodkod SAT Solver and a parser. The evaluation
asks (1) how much recorded structural variation the implemented observations
consolidate, (2) which transformation families distinguish the evaluated arms,
and (3) at what recorded cost.  Throughout, \emph{canonical} means canonical
only under the implementation's side-conditioned rewrite vocabulary.  The
frozen measurements are manifest-bound observations: the current evidence does
not establish that the Java adapter refines the formal construction, that the
reported equalities are complete for Alloy semantics, or that the corpus has
been replayed by an independently verified checker. 

\paragraph{Corpus and protocol.}
The classified Alloy4Fun snapshot~\cite{alloy4funDataset2023,alloy4fun,novice} contains 66,080
student--oracle pairs.  Removing 4,482 pairs with identical parsed Abstract Syntax Trees (ASTs) leaves
61,598 pairs: 19,212 labeled \texttt{CORRECT} and 42,386 labeled
overconstrained, underconstrained, or both, in 181 invariant groups and 17
problem variants.  Labels come from the dataset, not from a distance.  The
seven arms are controlled variants of the same Java pipeline, not executions
of external e-graph systems.  Each arm ran once in a fresh process on an AMD
Ryzen~9 9950X3D with 16 workers and an 8~GiB heap cap.  Appendix~\ref{app:evaluation-detail}
records the complete arm definitions, tables, and provenance qualifications.

\paragraph{Observation and equality.}
The pipeline separates an editable repair projection from a deterministic
finite-term equality observation.  On all 61,598 pairs, the mean raw AST has
26.787 nodes and the Certificate-Integrated repair observation has 17.990
units.  The Fast Rewrite Intermediate Representation (IR) normal form averages 18.196 units, and the
Certificate-Integrated materialized canonical-representative tree averages
33.199 nodes; these view-specific sizes are not interchangeable edit budgets.
Within invariant groups, 19,393 recorded truths have 4,496 distinct raw ASTs.
The Certificate-Integrated arm records 2,101 distinct exact observations whose
classes contain 11,382 AST-different equal pairs; Fast Rewrite records 2,136
observations and 10,934 such pairs.  On the paired corpus the
Certificate-Integrated arm assigns zero distance to 4,088 of 19,212
labeled-correct pairs and to none of the 42,386 inherited incorrect pairs.
These are zeroes under the frozen implementation's key, not a refinement proof
or an unbounded semantic soundness result.

\paragraph{Repair distance and coverage.}
For a formula $\varphi$, let $T_\varphi$, $B_\varphi$, and $M_\varphi$ be its
temporal skeleton, binding view, and normalized matrix in the repair
projection.  The reported specification is
\begin{equation}
  d_{\mathrm{repair}}(\varphi,\psi)=
  d_{\mathrm{temp}}(T_\varphi,T_\psi)+
  d_{\mathrm{quant}}(B_\varphi,B_\psi)+
  d_{\mathrm{matrix}}(M_\varphi,M_\psi).
\end{equation}
The coordinates respectively use ordered dynamic programming, exact
type-compatible binder-orbit matching, and carrier-specific sequence,
bag-assignment, set-assignment, or direct-child comparison.
The matrix comparison reuses one binder-owner mapping across every inherited
temporal phase. Proof wrappers,
certificates, rebuild metadata, and support plumbing are absent from the
projection.  The Certificate-Integrated paired mean is 14.022, compared with
13.943 for Fast Rewrite.  Over the 42,386 incorrect predicates, their
respective nearest-truth means are 11.563 and 11.451.

As seen in Figure~\ref{fig:distance-coverage}, at inclusive radii 1, 2, 5, and 10, Certificate-Integrated IR coverage
(blue) is respectively 4.0\%, 10.4\%, 31.0\%, and 59.3\%; raw-AST
coverage (orange) is 8.2\%, 11.7\%, 23.8\%, and 43.0\%.  Each curve uses
its own edit units and independently
selects the nearest AST-distinct recorded truth in the same invariant group;
the vertical difference is therefore not an equal-budget improvement.
Unlike mutation-based specification repair~\cite{mutrepair}, this study does not generate or execute the repair candidates. 

\begin{figure}[t]
\centering
\includegraphics[width={0.78\linewidth}]{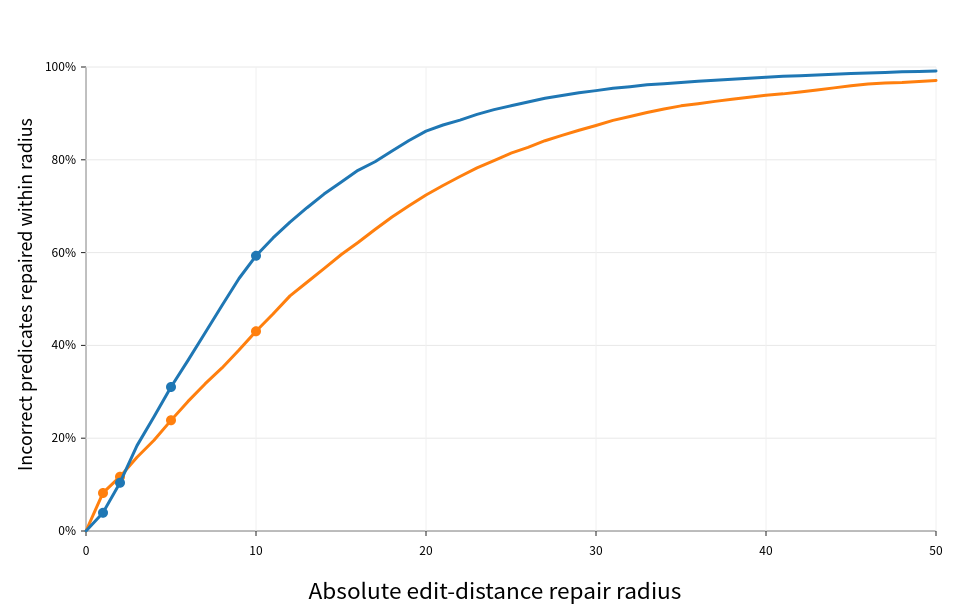}
\caption{Cumulative nearest-reference coverage for the $n=42{,}386$
non-\texttt{CORRECT} predicates.  At inclusive radius $r$, blue is the
Certificate-Integrated IR observation distance and orange is raw-AST tree-edit
distance to the nearest AST-distinct oracle or \texttt{CORRECT} student predicate in
the same invariant group.  The horizontal axis is capped at 50 while all
42,386 predicates remain in the denominator. 
``Repaired'' in the axis label
denotes only falling within this nearest-reference radius.
}
\label{fig:distance-coverage}
\end{figure}

\paragraph{Ablation, capability, and cost.}
Table~\ref{tab:eval-headline} displays the comparisons needed to interpret the
main claims; the full seven-arm ablation and all eleven transformation rows
appear in Appendix~\ref{app:evaluation-detail}.  De Bruijn storage raises the
raw arm's labeled-correct zeroes from 820 to 2,160, whereas the slot-shaped arm
records 2,159.  On the targeted 5,500-pair suite, however, the slot-shaped,
Fast Rewrite, and Certificate-Integrated arms recognize every pair.
Binder-block permutation is the discriminating family: the De Bruijn arm
recognizes 22/500 and each slot-aware arm 500/500.  These are in-vocabulary
recognition results, not completeness or prevalence estimates.

\begin{table}[t]
\centering
\caption{Headline manifest-bound results.  Natural-corpus zeroes and wall time use
61,598 AST-different pairs; targeted recognition uses 5,500 generated,
side-conditioned pairs.  All distances and times are arm-specific.}
\label{tab:eval-headline}
\scriptsize
\setlength{\tabcolsep}{3.3pt}
\begin{tabular}{@{}lrrrr@{}}
\toprule
Arm & Correct $d=0$ & Incorrect $d=0$ & Wall (s) & Targeted\\
\midrule
Raw + De Bruijn & 2,160 & 0 & 19.380 & 3,628/5,500\\
Slot-shaped core & 2,159 & 0 & 19.670 & 5,500/5,500\\
Fast Rewrite IR & 4,074 & 0 & 23.990 & 5,500/5,500\\
Certificate-Integrated & 4,088 & 0 & 2,684.110 & 5,500/5,500\\
\bottomrule
\end{tabular}
\end{table}

The Certificate-Integrated arm behaves as an auditor rather than as a
like-for-like canonicalizer.  For each predicate,  the arm  runs Fast Rewrite IR
and then reconstructs an in-process certificate-checked typed slotted graph.
Each certified mutation validates types and slot contexts,
source-occurrence and binder metadata, and law authority.  Insertions and
rebuild steps invoke whole-state invariant checking, including certificate and
provenance ledgers; congruence merges can re-dirty earlier parent records,
producing a superlinear worst-case work pattern.

The path also enumerates complete bounded finite unfoldings and materializes
proof-heavy construction artifacts.  Several certificate families are
occurrence-bound, while memoization is local to one adaptation, so these
objects cannot be reused wholesale across predicates.  The measured path did
not retain an independent replay trace.  With 16 workers, the certificate-integrated pass  
reaches 7,137.811~MiB peak used heap and 8,943.988~MiB maximum Resident Set Size (RSS).  

Against Fast Rewrite on the same snapshot, this arm took 111.885 times as much
process wall time and used 593.026 times as much engine CPU; on the targeted
suite the process-wall ratio was 86.920. The Certificate-Integrated arm shows the 
average pair-level ablation
representation-unit count slightly smaller, 29.541 versus 30.001.  The observed gap is therefore
consistent with this audit-heavy implementation---global validation, exact
renaming and certificate work, rebuilds, and finite unfolding---rather than
representation growth or a different repair objective.  It is not evidence
of an inherent cost of typed e-graphs or the repair metric, nor an
external-system ranking.

\paragraph{Independent checks and limits.}
A separate verifier reconstructs typed judgments, graph checkpoints,
canonical orbits, and finite unfoldings from closed bundles.  Separately, Lean
checks one fixed nontrivial abstract trace comprising AC reassociation, a
bijective two-slot binder remapping, union insertion, rebuild, and congruence
restoration after operand reordering: its indexed construction enforces exact
adjacency, and replay under its supplied finite-set interpreter proves
endpoint-denotation equality.  This witness establishes neither Java
producer--verifier correspondence nor coverage of other trace shapes.  The
corpus used in-process
checks rather than complete exported-and-replayed traces.  

A manifest-bound
bounded Alloy analysis covered the union of 4,088 current correct-pair
zero-distance claim identities from the Fast Rewrite and
Certificate-Integrated arms. Its supported analyses found no counterexample
or solver error and rejected four deliberately invalid probes. A separate
validation refresh checked one generated example from each of 29
family/subtype combinations, with no counterexample, solver error, or
inconclusive result. Eight checks used bounded temporal solving at object
scope 4 and trace bounds 1--10. This refresh fixes temporal-mode detection
through predicate calls and supersedes the six previously inconclusive
temporal checks; it does not replace the full-corpus measurements.
Newer repository packages also check bounded Java-to-Lean constructor and
certificate-record replays, separately from these corpus observations. 

Overall, full Java--Lean refinement,
full-trace replay, semantic completeness, industrial generalization, and
causal attribution remain future work.


\section{Conclusion}
\label{sec:conclusion}

Typed flexible-arity ports compose certified sibling quotients, separately
licensed flattening, bound-name alpha-renaming, and equality-certified slot
symmetry.  The checked quotient-first theorem is exact for a supplied finite
quotient presentation, and certified records expose the evidence needed for
effective-support extraction and collision.  Abstract local obligation
traces preserve a coherent witness family; two certified unfoldings reaching
one leader and aligned by a recorded symmetry word are equal in every
supplied typed model.  
For our Alloy study, the pipeline recognizes 4,088 AST-distinct correct pairs across 61,598 evaluated pairs with no observed incorrect zero-distance merges and recovered all 5,500 capability cases. 


\label{end:main-text}

\label{start:references}
\bibliographystyle{splncs04}
\bibliography{ref}

\appendix
\clearpage
\section{Formal Statement Index and Proof Details}
\label{app:formal-details}

Table~\ref{tab:formal-index} records the complete named-unit inventory in
dependency order.  The main-text tags map each named unit to its Lean
declaration, while the packaged ledger records the atomic units.



\subsection{Complete formal statement index}
\label{app:formal-statement-index}
The current census contains 146 active formal units with an exact 146-ID/146-declaration source-tag bijection.  Each row names one distinct Lean declaration; the suffix is the first twelve hexadecimal digits of its exact pretty-type SHA-256.
\begingroup\tiny
\setlength{\tabcolsep}{3pt}
%
\endgroup

\subsection{Structural proofs}

\paragraph{F01: complete binary AC closure.}
\label{app:proof-F01}
For distinct generators $X=\{x_1,\ldots,x_n\}$ in the free commutative
semigroup, with no unit or additional equations, every nonempty subset
$Y\subseteq X$ appears as a subproduct in some permutation and
parenthesization, yielding $2^n-1$ product classes.  A binary node is indexed
by an ordered pair of nonempty disjoint subsets $(Y,Z)$.  Giving each
generator one of the states ``left,'' ``right,'' or ``absent'' and excluding
an empty side yields $3^n+1-2^{n+1}$ nodes.  The compiled Lean declaration
constructs bijections to both finite cardinalities from binary and ternary
assignment encodings and proves the two stated natural-number growth
relations.  This is a combinatorial encoding theorem, not a claim about an
operational e-graph saturation procedure.

\paragraph{F04: embedding laws.}
\label{app:proof-F04}
The identity is injective and type preserving; a composite of injective,
type-preserving maps has both properties.  For a proof-relevant typed
renaming, the stored inverse laws give
$\rho^{-1}\circ\rho=\operatorname{id}$ directly.  No finite-cardinality or
surjective-injection argument is used.

\paragraph{F07--F09: typing and support.}
\label{app:proof-F07}
The concrete action is defined recursively on ports, heterogeneous argument
lists, and nodes.  Because schema, bound-scope, and output type are indices,
applying the action immediately supplies inhabitants at the same schema and
output indices; the node-output equality is reflexivity.
\label{app:proof-F08}
The recursive support calculation proves, for every target slot, that support
after action is exactly the direct image of support before action.  Function
extensionality and propositional extensionality turn those pointwise facts
into the two predicate equalities for ports and nodes.  Bound indices are
carried by a separate de Bruijn scope; this theorem introduces no fresh free
coordinate.
\label{app:proof-F09}
F09 unfolds the concrete invocation constructor.  Its result schema contains
the class output by construction, the bare occurrence is definitionally the
identity-embedded invocation, and ambient action definitionally composes the
outer and inner embeddings.

\paragraph{F11: leader support.}
\label{app:proof-F11}
The compiled statement is the range inclusion for two typed embeddings.  If
$s=(m\circ p)(x)$ for caller map $m$ and parent-path embedding $p$, choose
$p(x)$ as the witness that $s\in\operatorname{im}(m)$.  The conclusion is
only invocation-image containment: it neither lifts the result to arbitrary
ports or nodes nor proves independence from a missing coordinate.

\paragraph{F13: alpha groupoid laws.}
\label{app:proof-F13}
The public indexed judgement is the reflexive--symmetric--transitive closure
of the concrete direct structural generators.  Identity, inverse, and
composition are therefore respectively its \texttt{reflexive},
\texttt{symmetric}, and \texttt{transitive} constructors; the composite
renaming is retained in the transitive index.  The unindexed carrier is the
dependent sum of a context and a port value, and its relation existentially
retains the typed renaming, so the same three constructors prove equivalence.
The container and binder clauses belong to the direct generator; F13 does not
independently prove them closed under inversion.

\paragraph{F14: graph-relative transport.}
\label{app:proof-F14}
Unpack the invocation alignment into its two checked parent paths, common
leader, admitted symmetry, and embedding equation.  From
$\rho\circ\widehat m=\widehat m'\circ\pi$ obtain, after independently
renaming the callers,
\[
(\eta'\circ\rho\circ\eta^{-1})\circ\eta\circ\widehat m
=\eta'\circ\widehat m'\circ\pi.
\]
The stored inverse law cancels $\eta^{-1}\circ\eta$.  This is an
invocation-level transport theorem; it asserts no recursive container or
binder transport.

\paragraph{F15: identity-aligned support.}
\label{app:proof-F15}
For leader invocations the identity-alignment equation reduces to
$m=m'\circ\pi$ for an admitted typed permutation $\pi$.  The forward image
direction applies $\pi$ to the leader-slot witness; the reverse direction
applies its stored inverse.  The statement is exactly pointwise equivalence of
the two invocation-image predicates, not a structural-support theorem for
containers or binders.

\paragraph{F16: quotient-normal-form exactness.}
\label{app:proof-F16}
An \texttt{AmbientQuotientPresentation} supplies a complete finite carrier,
fixed local generators, their observation laws, and a linear order.  The
finite-presentation lemmas show that the least enumerated orbit member is
related to its input and that two such least members are equal exactly when
the inputs are in the generated relation.  Generator invariance yields
preservation of support, schema, leaf types, and output; exactness applied to
the normal form yields idempotence.  This proof is generic over the carrier
and does not verify a recursive port/node normalization program.

\paragraph{F17: shape exactness.}
\label{app:proof-F17}
An \texttt{EffectiveSupportCanonicalizer} first extracts each exact kernel and
then applies its structural quotient presentation.  F16's exactness and the
bridge from the unindexed generated relation to its proof-relevant typed
derivation give
\[
\operatorname{Shape}_C(n)=\operatorname{Shape}_C(n')
\quad\Longleftrightarrow\quad
\exists\rho.\ \operatorname{Nonempty}(K_C(n)\Rightarrow_\rho K_C(n')).
\]
The quotient observation law and the two kernel-output equations give output
equality in either direction.  The theorem constructs neither an ambient
input-context renaming nor a least renaming witness.

\paragraph{F18: rebuild quiescence.}
\label{app:proof-F18}
The compiled statement takes an explicit rebuild-step relation, a quiescence
predicate, progress from every nonquiescent state, and a natural-valued strict
decrease witness.  Lean embeds the step relation into strict order on that
natural measure, obtains well-foundedness, and uses well-founded induction.
At a nonquiescent state, progress supplies a decreasing successor; at a
quiescent state, the reflexive trace closes the proof.  No particular graph
worklist or lexicographic implementation measure is part of F18.

\subsection{Certificate and semantic proofs}

\paragraph{F20: semantic alpha-soundness.}
\label{app:proof-F20}
First induct on the concrete alpha-closure derivation.  A direct generator is
converted to an equational certificate by the supplied
\texttt{ConcreteStructuralLaws}; reflexivity uses rename-identity, symmetry
transports the recursively obtained certificate along the inverse renaming,
and transitivity transports and composes certificates using rename
composition.  Typed-certificate semantic soundness then gives the displayed
equality for every target environment.  A descriptor automorphism is usable
only through the policy and local certificate laws supplied to this proof; no
consumer-invariance theorem is inferred from a permutation label.

\paragraph{F21: realization naturality.}
\label{app:proof-F21}
The mutual structural recursor covers complete values and their child
forests.  Each atom, container, unary binder, binder block, and node case
rewrites with the corresponding constructor-level naturality field of the
realizer; the forest cases reduce to list-map composition.  This proves the
syntactic equality $R(e\cdot n)=e\cdot R(n)$ and uses neither fresh-name
choice nor sibling/flat equational laws.

\paragraph{F22: certified congruence.}
\label{app:proof-F22}
The concrete node-alignment structure carries a certified concrete leaf at
every atom step; each leaf contains exactly the required endpoint
certificate.  Recursive port/argument certificate soundness combines those
leaves with the supplied one-constructor laws, and node congruence closes the
outer node.  The model result is obtained by applying typed-certificate
semantic soundness.  Thus F22 assumes the licensed local constructor rules
and per-leaf certificates; it does not manufacture them from graph metadata.

\paragraph{F23: coherent certificate family.}
\label{app:proof-F23}
The raw graph-obligation structure contains only stored-case realizations and
inclusions, parent-edge embeddings, and symmetry-generator renamings.  F23 is
the proposition that every obligation in each of those three endpoint
families has, respectively, an (EC), (PC), or (SC) certificate.  It is a
coherence specification, not an existence theorem; F27 derives an inhabitant
for the final obligation family from a certified trace.

\paragraph{F24: find and finite unfolding.}
\label{app:proof-F24}
For find, induct on the raw finite parent path.  Transport each (PC)
certificate along the preceding embedding, compose it with the recursive
path certificate, and rewrite by rename composition.  A
\texttt{PaperFiniteUnfolding} chooses one actual stored case, a caller
extension compatible with its stored inclusion, and a finite trace whose
local steps are primitive equations or structural-congruence rules (in either
direction).  Induction on that trace gives its endpoint certificate;
transported (EC) and the compatibility equation identify the caller witness.
Composing this result with find proves the two certificates returned by F24.
F24 itself has no redundant-coordinate-independence conclusion.

\paragraph{F25: certified kernel extraction.}
\label{app:proof-F25}
A \texttt{PaperCertifiedKernelExtraction} explicitly records the original
term, exact kernel, its typed inclusion, and a finite
\texttt{CertifiedKernelReplay} from the original to the renamed kernel.  An
induction on that replay maps reflexive, primitive, congruence, symmetric, and
transitive constructors to the corresponding equational-certificate
constructors.  F25 performs this replay only; it does not claim that a
particular extraction algorithm constructs the record.

\paragraph{F26: certified collision.}
\label{app:proof-F26}
Each certified canonical record supplies an exact kernel, a renaming from a
shared canonical context to its support, a shape in one shared shape type,
and normalization replay to one shared function
\texttt{canonicalRealization}.  Shape equality becomes equality of those
canonical terms by congruence.  Transport the left normalization through the
right support renaming, compose with that equality, and reverse the right
normalization to obtain the directed kernel certificate.  Transporting this
certificate and the two F25 certificates to a common context, then rewriting
by the stated embedding-compatibility equation, gives the ambient source
certificate.  F26 assumes shape equality and common output/context indices;
it does not derive them.

\paragraph{F27: trace preservation and finite-unfolding soundness.}
\label{app:proof-F27}
An obligation-free initial state is coherent by elimination.  Induction on
the proof-relevant \texttt{LocalCertifiedTrace} then uses exactly four local
preservation schemas: insertion and rebuild add inherited or locally
certified obligations; canonicalization, symmetry, and interface restriction
rekey an obligation with checked endpoint bridges; collision, union, and path
compression compose two source obligations with checked bridges; and a
noninferential union only removes or inherits obligations.  Converting the
resulting coherence of the exact final endpoint state constructs F23.  These
nine constructors are abstract certificate-transition families, not a Lean
refinement theorem for particular graph code.

For finite-unfolding soundness, F24 sends both unfoldings along their raw
parent paths to one leader.  The recorded (SC) generators close under
identity, inverse, and composition to certify the supplied symmetry word; its
exact embedding equation transports that certificate between the two leader
endpoints.  Transitivity yields the term certificate, typed-model soundness
yields evaluation equality, and transport gives the common-context result
only when the two embeddings of the caller context are equal.

\subsection{Foundation and evidence boundary}

The selected profile is \textsc{standard-lean-explicit-tcb}; the only
permitted foundational axioms are \texttt{propext}, \texttt{Quot.sound}, and
\texttt{Classical.choice}.  Each proof-bearing declaration receives a
per-declaration axiom report.  Java tests, bounded checks, mutation rejection,
and corpus measurements do not establish a Lean refinement.  The final
artifact-refinement and experimental-replay matrices therefore remain
separate from the mathematical statement index.

\subsection{Artifact-refinement boundary}
\label{app:artifact-refinement-boundary}

Table~\ref{tab:artifact-refinement-boundary} renders the archived assessment
at repository snapshot \texttt{6764808}, retained in
\texttt{formal/artifact-refinement-map.json}.  ``Exact Lean specification''
means that the paper-side contract has a distinct compiled declaration; it
does not mean that the listed Java producer refines that contract. No row
in that archived assessment has a Lean refinement proof or a Lean-checked
implementation trace. The newer bounded replay evidence noted in
Appendix~\ref{app:evaluation-detail} is separate from this frozen matrix;
it does not establish the full component refinements listed here.

\begingroup\scriptsize
\setlength{\tabcolsep}{3pt}
\begin{longtable}{@{}p{2.75cm}p{3.0cm}p{3.65cm}p{1.45cm}@{}}
\caption{Artifact-refinement matrix.}\label{tab:artifact-refinement-boundary}\\
\toprule
Bridge row & Paper-side status & Artifact evidence class & Bridge status\\
\midrule\endfirsthead
\toprule
Bridge row & Paper-side status & Artifact evidence class & Bridge status\\
\midrule\endhead
\texttt{ART-TYPED-}\newline\texttt{CARRIERS} & Exact Lean specification & Property/mutation tests & \texttt{OPEN}\\
\texttt{ART-LEADER-}\newline\texttt{KERNEL} & Exact Lean specification & Independent non-Lean checker & \texttt{OPEN}\\
\path{ART-CANONICALIZER} & Exact Lean specification & Property/mutation tests & \texttt{OPEN}\\
\texttt{ART-CERTIFICATE-}\newline\texttt{VERIFIER} & Exact Lean specification & Independent non-Lean checker & \texttt{OPEN}\\
\path{ART-REBUILD} & Exact Lean specification & Property/mutation tests & \texttt{OPEN}\\
\texttt{ART-ALLOY-}\newline\texttt{NORMALIZATION} & Exact Lean rule specifications & Bounded partial repair & \texttt{PARTIAL}\\
\texttt{ART-OPERATOR-LAW-}\newline\texttt{ROUTING} & Exact Lean specification & Independent non-Lean checker & \texttt{OPEN}\\
\texttt{ART-EXPERIMENT-}\newline\texttt{PIPELINE} & Exact scope boundary & Pure Java/run-manifest evidence & \texttt{OPEN}\\
\bottomrule
\end{longtable}
\endgroup

Thus the abstract mathematical layer and the implementation bridge have
different denominators: the former is the 146-unit paper--Lean registry,
whereas the latter consists of these eight Java-facing rows.  Closing the
former cannot upgrade any status in the latter.

\subsection{Edge-case closure summary}
\label{app:edge-case-closure}

Table~\ref{tab:edge-case-closure} renders the layer assessments in
\texttt{formal/edge-case-closure.json}.  ``Proved narrow claim'' records the
exact scoped Lean result, not an end-to-end implementation fact.

\begingroup\scriptsize
\setlength{\tabcolsep}{3pt}
\begin{longtable}{@{}p{3.05cm}p{2.55cm}p{2.15cm}p{3.2cm}@{}}
\caption{Edge-case closure by layer.}\label{tab:edge-case-closure}\\
\toprule
Edge case & Abstract Lean layer & Artifact layer & Scope boundary\\
\midrule\endfirsthead
\toprule
Edge case & Abstract Lean layer & Artifact layer & Scope boundary\\
\midrule\endhead
\texttt{EDGE-PROPER-PARENT-}\newline\texttt{SUPPORT} & Proved narrow claim & \texttt{OPEN} & Parent paths and restriction remain separate obligations.\\
\texttt{EDGE-PARENT-}\newline\texttt{NONINJECTIVITY} & Proved nonimplication & \texttt{OPEN} & Parent equality alone supplies no child symmetry.\\
\texttt{EDGE-QUOTIENT-BEFORE-}\newline\texttt{MINIMUM} & Proved finite-presentation contract & \texttt{OPEN} & The Java orbit producer is not refined.\\
\texttt{EDGE-BINDER-}\newline\texttt{DESCRIPTOR} & Proved certified scope & \texttt{OPEN} & Only descriptor-admitted automorphisms apply.\\
\texttt{EDGE-EMPTY-}\newline\texttt{BINDINGS} & Proved rule specification & \texttt{PARTIAL} & Bounded regressions do not refine the parser/normalizer.\\
\texttt{EDGE-HETEROGENEOUS-}\newline\texttt{JOIN} & Proved operator boundary & Bounded tested & Heterogeneous chains are excluded from homogeneous flattening.\\
\texttt{EDGE-INTEGER-OVERFLOW-}\newline\texttt{PROFILE} & Proved modular-profile scope & \texttt{OPEN} & Adapter/profile correspondence remains unproved.\\
\texttt{EDGE-RAW-SOURCE-}\newline\texttt{REFINEMENT} & Out of scope & \texttt{OPEN} & Raw Alloy parsing and normalization are not Lean-replayed.\\
\texttt{EDGE-BOUNDED-REWRITE-}\newline\texttt{SATURATION} & Out of scope of the finite-rebuild proposition & Bounded tested & A 32-pass rewrite guard is not finite administrative rebuild.\\
\bottomrule
\end{longtable}
\endgroup

\section{Evaluation Detail}
\label{app:evaluation-detail}

This appendix expands the immutable publication measurements summarized in
Section~\ref{sec:evaluation}.  The Certificate-Integrated arm is an in-process
certificate-checked implementation observation; its complete traces were not
exported and independently replayed, and it establishes no Java--theory refinement.

\begingroup
\def\UrlBreaks{\do0\do1\do2\do3\do4\do5\do6\do7\do8\do9%
  \do\a\do\b\do\c\do\d\do\e\do\f}
\Urlmuskip=0mu plus 1mu\relax
The ACGN repository snapshot is
\path{1e2667351532b0c632166fa21ae5fbc7308a8fe7} on branch
\texttt{aislop}. Full-corpus measurements use publication run
\path{db9f89bf-0965-4d74-8080-d9191d5f1aec}, completed on September 8,
2026, from a clean checkout of source commit
\path{8ad5fead39b687d2cadc79b01ac27743c1ece990}.
The natural-corpus dataset SHA-256 is
\path{d6741fbf4c4a9b3714d012d068f84cc918052f1f55211bf4d0443b990736a689},
and the executable JAR SHA-256 is
\path{a053e40e64faa2ecffb0f4999b57aa661eae51933d6daf93a95d973144a71abf}.
The manifest records rule set
\path{canonical-equivalences-v3-explicit-laws} and repair metric
\path{certified-fast-rewrite-repair-distance-v12}.  These identifiers bind the
measurements; they do not certify the implementation against the formal model.
All four experimental stages use that one JAR, 16 workers, an 8~GiB heap,
OpenJDK 17.0.20, and seed 55520260811. The controlled suite has dataset
SHA-256 \path{898d8123ce12ee9a28cb106b801c4d3cb9e1c8aaa2644e0389aedd41e6fb49c3}.
The later assurance-only release retains these full-corpus artifacts.
\endgroup

\subsection{Arms and Representation Boundaries}

\begin{table}[H]
\caption{Evaluation arms and their representation boundaries.}
\label{tab:engine-boundary}
\centering
\scriptsize
\setlength{\tabcolsep}{3pt}
\begin{tabular}{@{}p{2.35cm}p{2.7cm}p{3.05cm}p{3.35cm}@{}}
\toprule
Arm & Stored form & Equality discovery & Boundary\\
\midrule
Raw & ordered fixed-arity keys & bounded concrete rewrite frontier & internal baseline\\
Raw + De Bruijn & fixed keys with DB leaves & same frontier & binding encoding added\\
Java egglog & flattened concrete keys & same frontier & flexible-arity internal baseline\\
Java egglog + De Bruijn & flattened keys with DB leaves & same frontier & combined internal baseline\\
Slotted & ordered invocations and integer slots & frontier plus bounded slot canonicalizer & precursor without the exact contract\\
Fast Rewrite & flexible Alloy repair view & staged normalization and reference metric & high-throughput normalization arm\\
Certificate-Integrated & typed $\Seq/\Bag/\Set/\texttt{One}$ ports & certified graph, orbit, rebuild, finite unfolding & in-process checked observation; no Java--theory refinement\\
\bottomrule
\end{tabular}
\end{table}

The front end constructs a temporal skeleton, moves quantifiers only within a
temporal phase under recorded free-variable and domain conditions, and keeps
non-primitive domain constraints as matrix guards.  It then eliminates
implication, biconditional, and formula-level conditionals, pushes negation,
applies safe prenexing, and performs local relational and logical rewrites.
The certified view stores typed ports, supports, binder descriptors,
witnesses, transition certificates, and bounded finite unfoldings; the repair
projection removes proof and schema plumbing before measuring distance.  The
publication run records the current explicit-law rule set and repair metric.
Its runtime checks and empirical outputs do not prove conformance to the paper
construction.

\subsection{Corpus Views and Distance Summaries}

\begin{table}[H]
\centering
\caption{Mean student-side representation size on the paired corpus
($n=61{,}598$). Counts are view-specific structural units, not bytes,
pair-level ablation units, or directly comparable edit budgets.}
\label{tab:compact}
\small
\begin{tabular}{@{}p{3.1cm}p{5.5cm}r@{}}
\toprule
View & Experimental role & Mean units\\
\midrule
Raw Alloy AST & source-level structural baseline & 26.787\\
Certificate-Integrated repair observation & editable view used by the certified repair metric & 17.990\\
Certificate-Integrated representative tree & deterministic finite tree used for representative TED & 33.199\\
Fast Rewrite normal form & finite term emitted by the Fast Rewrite arm & 18.196\\
\bottomrule
\end{tabular}
\end{table}

\begin{table}[H]
\centering
\caption{Recorded truths within each invariant group. Certificate-Integrated (CI)
classes are equivalence classes of CI equality; equivalent pairs
count AST-different pairs within those classes.}
\label{tab:truth_metrics}
\scriptsize
\setlength{\tabcolsep}{3.4pt}
\begin{tabular}{@{}lrrrr@{}}
\toprule
Problem & Recorded & Unique AST & CI classes & Equivalent pairs\\
\midrule
classroom\_fol & 1,628 & 279 & 88 & 1,303\\
classroom\_rl & 1,133 & 266 & 154 & 252\\
coursesNew & 1,353 & 408 & 196 & 639\\
coursesOld & 2,163 & 527 & 272 & 808\\
cv\_v1 & 110 & 42 & 23 & 29\\
cv\_v2 & 61 & 32 & 25 & 10\\
graphs & 828 & 224 & 116 & 328\\
lts & 256 & 105 & 57 & 110\\
productionLineNew & 703 & 249 & 117 & 530\\
productionLine\_v1 & 149 & 58 & 44 & 17\\
productionLine\_v2 & 1,134 & 354 & 196 & 567\\
socialMedia & 4,558 & 919 & 311 & 4,274\\
trainStationNew & 1,611 & 360 & 171 & 798\\
trainStationOld & 127 & 62 & 54 & 11\\
trash\_fol & 1,677 & 253 & 87 & 1,302\\
trash\_ltl & 883 & 157 & 87 & 126\\
trash\_rl & 1,019 & 201 & 103 & 278\\
\midrule
\textbf{Total} & \textbf{19,393} & \textbf{4,496} & \textbf{2,101} & \textbf{11,382}\\
\bottomrule
\end{tabular}
\end{table}

\begin{table}[t]
\centering
\caption{Distance summaries. Coordinates have different units and are
reported descriptively; a dash means that the augmented nearest-truth run did
not compute that diagnostic.}
\label{tab:distance-summary}
\small
\setlength{\tabcolsep}{4pt}
\begin{tabular}{@{}llcc@{}}
\toprule
Coordinate & Unit & Paired mean & Nearest-truth mean\\
& & $n=61{,}598$ & $n=42{,}386$\\
\midrule
Levenshtein & characters & 39.261 & 28.055\\
Raw tree edit & AST nodes & 22.841 & 15.988\\
Certificate-Integrated repair & projection edits & 14.022 & 11.563\\
Fast Rewrite & projection edits & 13.943 & 11.451\\
Representative TED & finite-tree edits & 32.256 & ---\\
\bottomrule
\end{tabular}
\end{table}

The augmented truth pool contains the oracle and student predicates labeled
\texttt{CORRECT} in the same invariant group, with raw-AST duplicates removed;
comparisons do not cross group boundaries. Each repair coordinate selects its
own nearest reference and has
its own edit unit.  Canonical-representative tree edit distance is a diagnostic
baseline rather than the primary repair metric.

Table~\ref{tab:model-overview} records schema-level context.  Several problem
variants share a base schema, so its ten rows do not correspond one-to-one
with the 17 variants used above.

\begin{table}[H]
\centering
\caption{Base-model size context.  Sig./Abs./Ext./Rel. denote signatures,
abstract signatures, extensions, and relations.}
\label{tab:model-overview}
\scriptsize
\setlength{\tabcolsep}{2.5pt}
\begin{tabular}{@{}lrrrrrrrr@{}}
\toprule
Model & Sig. & Abs. & Ext. & Rel. & Mean arity & Scope & LOC & AST\\
\midrule
classroom & 5 & 0 & 2 & 3 & 2.33 & 4 & 96 & 221\\
courses & 6 & 0 & 2 & 5 & 2.20 & 4 & 88 & 316\\
cv-v1 & 5 & 1 & 2 & 4 & 2.00 & 3 & 36 & 113\\
cv-v2 & 5 & 1 & 2 & 4 & 2.00 & 3 & 36 & 121\\
lts & 3 & 0 & 1 & 1 & 3.00 & 3 & 57 & 165\\
product-line-v1 & 5 & 0 & 2 & 3 & 2.00 & 3 & 36 & 88\\
product-line-v2/v3 & 10 & 1 & 7 & 4 & 2.00 & 3 & 70 & 200\\
social-media & 5 & 0 & 2 & 5 & 2.00 & 3 & 57 & 184\\
train-station-ltl & 7 & 0 & 4 & 3 & 2.00 & 3 & 125 & 570\\
trash & 3 & 0 & 2 & 1 & 2.00 & 3 & 42 & 80\\
\bottomrule
\end{tabular}

\end{table}

\subsection{Full Ablation and Targeted Recognition}

\begin{table}[H]
\centering
\caption{Seven-arm ablation on 61,598 AST-different pairs from the current
manifest-bound run. The final two arms are Fast Rewrite and
Certificate-Integrated; all timings and memory values come from that same run.}
\label{tab:ablation-natural}
\scriptsize
\setlength{\tabcolsep}{3.2pt}
\begin{tabular}{@{}lrrr@{}}
\toprule
Arm & Correct $d=0$ & Incorrect $d=0$ & Mean distance\\
\midrule
Raw & 820 (4.268\%) & 0 & 18.364\\
Raw + De Bruijn & 2,160 (11.243\%) & 0 & 17.911\\
Java egglog & 820 (4.268\%) & 0 & 18.147\\
Java egglog + De Bruijn & 2,160 (11.243\%) & 0 & 17.690\\
Slotted & 2,159 (11.238\%) & 0 & 17.824\\
Fast Rewrite & 4,074 (21.205\%) & 0 & 13.943\\
Certificate-Integrated & 4,088 (21.278\%) & 0 & 14.022\\
\bottomrule
\end{tabular}

\medskip
\setlength{\tabcolsep}{2.8pt}
\begin{tabular}{@{}lrrrr@{}}
\toprule
Arm & Wall (s) & Engine CPU (s) & Max RSS (MiB) & Mean units\\
\midrule
Raw & 19.370 & 4.255 & 1,524.891 & 58.073\\
Raw + De Bruijn & 19.380 & 5.015 & 1,575.672 & 57.752\\
Java egglog & 19.270 & 4.176 & 1,475.656 & 57.055\\
Java egglog + De Bruijn & 18.860 & 4.881 & 1,456.941 & 56.732\\
Slotted & 19.670 & 15.565 & 1,614.004 & 53.168\\
Fast Rewrite & 23.990 & 68.820 & 2,061.391 & 30.001\\
Certificate-Integrated & 2,684.110 & 40,811.847 & 8,943.988 & 29.541\\
\bottomrule
\end{tabular}

\parbox{0.97\linewidth}{\footnotesize\emph{Notes.} Wall is process wall;
engine CPU is the arm's accumulated engine CPU. All arms completed
61,598 pairs without failure; 4,482 raw-AST-identical pairs were skipped.
Percentages use the 19,212 recorded-correct eligible pairs as denominator.
Max RSS is process maximum resident-set size. Distances and pair-level mean
units are arm-specific.}
\end{table}

\begin{table}[H]
\centering
\caption{Targeted transformation-recognition suite. Each row has 500 AST-different pairs
generated by the named, side-conditioned transformation.}
\label{tab:capability-benchmark}
\scriptsize
\setlength{\tabcolsep}{1.55pt}
\begin{tabular}{@{}lrrrrrrr@{}}
\toprule
Transformation & Fix. & F+DB & Var. & V+DB & Slot & Leg. & Exact\\
\midrule
$\alpha$-equivalence & 0 & 500 & 0 & 500 & 500 & 500 & 500\\
ACI & 500 & 500 & 500 & 500 & 500 & 500 & 500\\
Binder-block permutation & 1 & 22 & 1 & 22 & 500 & 500 & 500\\
Safe prenexing & 500 & 500 & 500 & 500 & 500 & 500 & 500\\
Logical normalization & 500 & 500 & 500 & 500 & 500 & 500 & 500\\
Temporal normalization & 500 & 500 & 500 & 500 & 500 & 500 & 500\\
$\alpha$ + AC & 0 & 500 & 0 & 500 & 500 & 500 & 500\\
$\alpha$ + binder permutation & 0 & 22 & 0 & 22 & 500 & 500 & 500\\
Binder permutation + prenex & 22 & 22 & 22 & 22 & 500 & 500 & 500\\
AC + logical normalization & 500 & 500 & 500 & 500 & 500 & 500 & 500\\
Mixed 2--4 transformations & 62 & 62 & 62 & 62 & 500 & 500 & 500\\
\midrule
All 5,500 & 2,585 & 3,628 & 2,585 & 3,628 & 5,500 & 5,500 & 5,500\\
\bottomrule
\end{tabular}

\parbox{0.97\linewidth}{\footnotesize\emph{Scope.} This in-vocabulary
suite tests historical representation and pipeline recognition, not semantic
completeness or prevalence. DB denotes De Bruijn encoding; Var. denotes the
variadic core.}
\end{table}

The compact headers in Table~\ref{tab:capability-benchmark} are preserved
display aliases: Fix., F+DB, Var., V+DB, Slot, Leg., and Exact denote Raw,
Raw + De Bruijn, Java egglog, Java egglog + De Bruijn, Slotted, Fast Rewrite,
and Certificate-Integrated, respectively.

The natural-corpus arms combine different normalization and representation
boundaries, so transitions between them are pipeline-level attributions rather
than estimates for individual rewrite rules.  Absolute timing is compared only
within a run.  Every arm completed the recorded natural-corpus pairs, but the
single fresh-process execution per arm does not support distributional timing
claims or a cross-system performance ranking.

\subsection{Checker Boundary and Threats to Validity}

The separate certificate verifier's adversarial suite covers support
contraction, collision and union, restriction, symmetry, rebuild, path
compression, container laws, binder orbits, and incomplete evidence.  The
standalone Lean module
\path{formal/TypedSlottedEGraphsPaper/TraceEnvelopeWitness.lean} supplies one
fixed nontrivial abstract trace: AC reassociation, a bijective two-slot binder
remapping, union insertion, rebuild, and congruence restoration after operand
reordering.  It constructs the indexed trace with exact adjacency and replays
that trace under an explicit finite-set interpreter to endpoint-denotation
equality, packaging both facts as one checked witness.  This establishes
only inhabitation of that abstract trace shape, not Java producer--verifier
correspondence or coverage of other shapes.  Complete experimental traces were
not exported and replayed.

The bounded Alloy analysis covers the 4,088 claim identities in the current
Certificate-Integrated arm's labeled-correct natural-corpus zero-distance set.  It found no
counterexample or solver error for those identities and rejected four invalid
probes: capture during beta
reduction, comprehension-column permutation, signature shadowing, and a
temporal implication error. Targeted capability validation was refreshed
separately after the checker was repaired to detect temporal operators inside
called predicate bodies. All 29 sampled checks, one per family/subtype
combination, completed without a counterexample, error, or inconclusive
result. Eight use bounded temporal solving at scope 4 with trace bounds
1--10. This includes the six temporal checks that had previously been
inconclusive; the older warned static reduction is superseded by this
bounded temporal evidence.

\begingroup
\def\UrlBreaks{\do0\do1\do2\do3\do4\do5\do6\do7\do8\do9%
  \do\a\do\b\do\c\do\d\do\e\do\f}
\Urlmuskip=0mu plus 1mu\relax
The validation run is \path{659e248c-d3d6-4a2b-8d99-67a0ebcf9eb4},
completed on September 9, 2026, from clean source
\path{8feab00f9190482af6a25334af5b2716653f8ac9}. Its separate JAR has
SHA-256 \path{67e7dd088ef864a9170178e6b6c963a2c339836fa88b25cffe8e20788714f04a}.
The manifest binds the same 5,500 generated models, but the executed solver
sample contains only 29 checks, with one worker and a 1~GiB heap.
The report is \path{capability_validation/soundness.json} within that run.
Neither this validation nor the subsequent bounded constructor and
certificate-record replay packages constitute full-corpus trace replay or
full Java/parser refinement.
\endgroup

The natural corpus consists of short educational specifications with
inherited labels, and excluding identical ASTs conditions the analysis on
syntactic difference.  Distances are not calibrated semantic-error
probabilities.  Natural and targeted artifacts have separate run and dataset
identities, so comparisons remain within their respective manifests.  These
data do not establish Java--theory refinement, full semantic equivalence,
industrial generalization, or operational repair success.


\section{Canonicalization Algorithm and Rewrite Theory}
\label{app:canon-rules}

This appendix makes explicit the proposed canonicalizer and its abstract
rewrite contract.  The Lean development proves finite-orbit normalization and
the Boolean, list, finite-function, and relation equations identified below.
Figure~\ref{alg:canon-g} and the Alloy/Java names are a refinement target for
those abstract objects, not a proved implementation correspondence.  An arrow
below denotes a proposed representative orientation unless the text marks the
operation as structural cleanup.  Publication run \path{db9f89bf-0965-4d74-8080-d9191d5f1aec}
records rule set \path{canonical-equivalences-v3-explicit-laws}.  This binds the
measured configuration but does not prove that Java realizes this appendix's
abstract rules or contracts.

\subsection{Notation and Preconditions}

Let $\mathcal G=(U,M,H)$ be a flexible-arity typed slotted e-graph.  The parent
assignment $U$ maps each e-class identifier to an immediate parent invocation,
with a root mapped to its identity invocation;
$\operatorname{find}_\Gamma$ composes these links to return a leader invocation
in caller context $\Gamma$.  The map
$M(a)=(\tau_a,S_a,B_a,G_a)$ stores the result type, exposed typed slots,
canonical shapes, and justified slot-symmetry group of e-class $a$, and $H$ is
the shape collision index.  A typed e-node has the form
\[
  n=f_\theta(q_1,\ldots,q_r),
  \qquad
  \Sigma(f)=\forall\bar\alpha.(\kappa_1,\ldots,\kappa_r)\Rightarrow\tau,
\]
where $q_i\in\operatorname{Port}_\Gamma(\theta(\kappa_i))$.  The schemas
$\texttt{Seq}$, $\texttt{Bag}$, and $\texttt{Set}$ respectively preserve order
and multiplicity, quotient by permutation while preserving multiplicity, and
quotient by both permutation and idempotence.  A first-class
$\texttt{BindBlock}(\beta,\kappa)$ carries a complete descriptor $\beta$, a
fresh occurrence-local bound context, and precisely the certified finite group
$\operatorname{Aut}(\beta)$.  Fix a total structural order $\triangleleft$ over
complete well-typed port and node keys, and a total order $\prec$ over completed
typed shapes.  The latter compares the operator, type instantiation, atom, slot
interface, and normalized ports.  Fix also the same total map order as in
the main text,
$\triangleleft_{\mathrm{map}}:=\prec_{\mathrm{ren}}$, for deterministic witness selection, and
let $\triangleleft_{\mathrm{lex}}$ compare a shape--map pair first by
$\prec$ and then by $\triangleleft_{\mathrm{map}}$.

At the specification level, leader normalization may reduce support, so the
proposed graph operation distinguishes the input context from the effective
kernel context.  For a node $n$ with
$\Gamma_0=\operatorname{slots}(n)$, its intended structural extraction is
\begin{align*}
\operatorname{leaderKernelTrace}_{\mathcal G}(n)
  &=(K_{\mathcal G}(n),\iota_n,\xi_n),\\
\Delta_n&=\operatorname{slots}(K_{\mathcal G}(n))\subseteq\Gamma_0,
&\iota_n&:\Delta_n\hookrightarrow\Gamma_0.
\end{align*}
Here $K_{\mathcal G}(n)$ and $\xi_n$ denote the proposed kernel and trace.
\leanref{TSG-APP-001}{TypedSlottedEGraphsPaper.ProfileRules.tsgApp001_kernelTraceReplayContract}
The exact formal claim \textsc{app-001} uses an indexed trace interface.  For an
indexed family $\mathsf{Term}:\mathbb N\to\mathsf{Type}$, an extractor returns
an effective size $k$, a kernel in $\mathsf{Term}(k)$, and a structural trace
with dependent endpoints
$\langle n,\mathit{input}\rangle$ and $\langle k,\mathit{kernel}\rangle$.
For any two trace interpreters, each supplying an equality for every trace
label, replay proves equality of the corresponding endpoint
denotations.  This theorem neither identifies those equalities with the
paper's $\operatorname{EqCert}$ type nor proves that Java extraction emits the
required trace; those are separate refinement obligations.

For a finite typed context $\Delta$, $\operatorname{Can}(\Delta)$ is the
context containing the same number of
slots of each type, named from a fixed type-indexed alphabet.  The shape
witness for $n$ is
\[
  \sigma_n\in\operatorname{TRen}(\operatorname{Can}(\Delta_n),\Delta_n),
\]
which maps canonical slots bijectively to the effective source support.  Its
transport into the original ambient context is the typed embedding
\[
  \omega_n=\iota_n\circ\sigma_n
  \in\operatorname{TEmb}(\operatorname{Can}(\Delta_n),\Gamma_0).
\]
The map $\omega_n$ is classified as a typed embedding; it is a renaming in the
support-preserving case $\Delta_n=\Gamma_0$.  This separation retains the
shape's exact slots, its effective support, and its original ambient support,
including coordinates removed by find.  Binder-block automorphisms act on
exactly the bound coordinates admitted by the complete
descriptor: domain, quantifier, cardinality/multiplicity, disjointness class,
dependency constraints, and type.  Each block occurrence is alpha-aligned with
a fresh canonical variant of the descriptor's bound context; the descriptor
group is transported to that occurrence context before it acts.  This local
action is disjoint from the one global free-slot renaming.

As the smallest strict-support case, suppose
$U(a)=m*\ell$ with a proper embedding $m:S_\ell\hookrightarrow S_a$.
Then $\operatorname{wrap}(\operatorname{id}_{S_a}*a)$ has input support $S_a$,
whereas its leader kernel $\operatorname{wrap}(m*\ell)$ has support
$m[S_\ell]$.  The construction below therefore uses
$\operatorname{Can}(m[S_\ell])$ and keeps
the inclusion into $S_a$ separately.  Its shape witness is the bijection onto
the effective support, while its ambient transport is the proper embedding
$\operatorname{Can}(m[S_\ell])\hookrightarrow S_a$.

\subsection{The Graph-Relative Canonicalizer}
\label{app:canon-g}

\leanref{TSG-APP-002}{TypedSlottedEGraphsPaper.ProfileRules.tsgApp002_completeCanonicalRecords}
Figure~\ref{alg:canon-g} gives the proposed graph-level pseudocode.  The exact
formal claim \textsc{app-002} concerns an abstract structural record extending
the extractor result with a shape and an effective-size renaming.  Its ambient
transport is definitionally the inclusion composed with that renaming, and its
collision key is definitionally the shape.  For any trace interpreter, replay
adds only equality of the two endpoint denotations.  No completeness of a
local orbit, graph insertion behavior, or $\operatorname{EqCert}$ construction
is part of \textsc{app-002}.  The paper tuple
\[
  (K_{\mathcal G}(n),\operatorname{Shape}_{\mathcal G}(n),\sigma_n,
   \iota_n,\omega_n,\xi_n)
\]
is the intended concrete realization of that record; establishing that the
Java record realizes it remains an artifact-refinement task.

\leanref{TSG-APP-003}{TypedSlottedEGraphsPaper.ProfileRules.tsgApp003_canonicalBlockOccurrenceContract}
The exact formal block-occurrence claim \textsc{app-003} uses de Bruijn
indices of one fixed finite size.  The canonical occurrence is the identity,
the alignment is the inverse of the source occurrence, and the returned list
is the descriptor's supplied automorphism list mapped by conjugation with that
identity.  Consequently source occurrence followed by alignment is the
identity occurrence.  Fresh-name choice, disjointness from a caller, leastness,
and completeness of any richer paper descriptor are not conclusions of this
claim; the corresponding steps in Figure~\ref{alg:canon-g} are refinement
requirements.

\providecommand{\algline}[2][0pt]{%
  \hangindent=#1\hangafter=1\hspace*{#1}#2\strut\\%
}

\begingroup
\small
\setlength{\LTleft}{0pt}
\setlength{\LTright}{0pt}
\setlength{\LTpre}{0pt}
\setlength{\LTpost}{0pt}

\Needspace{18\baselineskip}
\noindent\captionof{figure}{Proposed graph-relative canonicalization refinement
target (not verified as a producer by Lean).  The design minimizes invocation
and binder-block orbits before bag aggregation and set deduplication.  It would
retain $K_{\mathcal G}(n)$, $\iota_n$, $\omega_n$, and $\xi_n$ in the structural
record, obtain a dependent certificate $d_n^{\mathbf w}$ only by separate
replay, and use only $p_{\min}$ as a hash key.  The exact checked contracts are
stated after the figure.}
\label{alg:canon-g}

\begin{longtable}{@{}>{\raggedright\arraybackslash}p{0.98\linewidth}@{}}
\hline
\endfirsthead

\multicolumn{1}{@{}l@{}}{\small\itshape Figure~\thefigure\ continued}\\
\hline
\endhead

\hline
\endfoot

\endlastfoot

\algline{\textbf{status:} specification-only pseudocode; every producer and
  Java correspondence below is a separate refinement obligation}

\noalign{\vskip 0.35em}

\algline{\textbf{procedure} $\operatorname{leaderKernelTrace}_{\mathcal G}(n)$}
\algline[2.4em]{\textbf{require} $n$ is well typed and
  $\Gamma_0=\operatorname{slots}(n)$ is its exact context}
\algline[2.4em]{$(\widehat n_{\mathcal G},\xi)\leftarrow
  \operatorname{findAllWithTrace}_{\mathcal G}(n)$
  \hfill\textit{// find every invocation; do not unfold leaders}}
\algline[2.4em]{$(\widehat n_{\mathcal G},\xi)\leftarrow
  \operatorname{normalizeSignatureAdmittedPortsWithTrace}
  (\widehat n_{\mathcal G},\xi)$}
\algline[2.4em]{$\Delta\leftarrow\operatorname{slots}(\widehat n_{\mathcal G});
  \iota\leftarrow\operatorname{inclusion}(\Delta,\Gamma_0)$}
\algline[2.4em]{$n^\downarrow\leftarrow
  \operatorname{restrictExactContext}(\widehat n_{\mathcal G},\Delta)$}
\algline[2.4em]{\textbf{return} $(n^\downarrow,\iota,\xi)$}

\noalign{\vskip 0.35em}

\algline{\textbf{procedure}
  $\operatorname{canon}_{\mathcal G}(n=f_\theta(q_1,\ldots,q_r))$}
\algline[2.4em]{\textbf{require} $n$ is well typed and already uses its
  declared sibling quotients and explicit flat licenses}
\algline[2.4em]{$(K,\iota_n,\xi_n)\leftarrow
  \operatorname{leaderKernelTrace}_{\mathcal G}(n)$}
\algline[2.4em]{write $K=f_\theta(\bar q_1,\ldots,\bar q_r)$}
\algline[2.4em]{$\Delta\leftarrow\operatorname{slots}(K)$;
  $C\leftarrow\operatorname{Can}(\Delta)$}
\algline[2.4em]{$b\leftarrow\mathsf{None}$
  \hfill\textit{// option of a shape--renaming pair}}
\algline[2.4em]{\textbf{for each} $\sigma\in\operatorname{TRen}(C,\Delta)$ \textbf{do}}
\algline[4.8em]{$\rho\leftarrow\sigma^{-1}$
  \hfill\textit{// effective source slots to canonical slots}}
\algline[4.8em]{\textbf{for} $i\leftarrow1$ \textbf{to} $r$ \textbf{do}}
\algline[7.2em]{$r_i\leftarrow\operatorname{canonLeaderPort}_{\mathcal G}
  (\theta(\kappa_i),\bar q_i,\rho)$}
\algline[4.8em]{$p\leftarrow f_\theta(r_1,\ldots,r_r)$}
\algline[4.8em]{\textbf{if} $b=\mathsf{None}$ or
  $(p,\sigma)\mathrel{\triangleleft_{\mathrm{lex}}}\operatorname{get}(b)$
  \textbf{then} $b\leftarrow\mathsf{Some}(p,\sigma)$}
\algline[2.4em]{\textbf{require} $b=\mathsf{Some}(p_{\min},\sigma_{\min})$
  \hfill\textit{// $\operatorname{TRen}(C,\Delta)$ contains a bijection}}
\algline[2.4em]{$\omega_{\min}\leftarrow\iota_n\circ\sigma_{\min}$}
\algline[2.4em]{\textbf{return}
  $(K,p_{\min},\sigma_{\min},\iota_n,\omega_{\min},\xi_n)$}

\noalign{\vskip 0.35em}

\algline{\textbf{procedure} $\operatorname{replayKernelCertificate}_{\mathcal G,\mathbf w}
  (n,K,\iota,\xi)$}
\algline[2.4em]{\textbf{require} $\mathbf w$ is a coherent witness family for $\mathcal G$}
\algline[2.4em]{\textbf{require} $(K,\iota,\xi)=\operatorname{leaderKernelTrace}_{\mathcal G}(n)$
  and every certificate named by $\xi$ verifies}
\algline[2.4em]{$\Gamma_0\leftarrow\operatorname{slots}(n)$}
\algline[2.4em]{$d^{\mathbf w}\leftarrow
  \operatorname{replayTrace}_{\mathcal G,\mathbf w}(n,K,\iota,\xi)$}
\algline[2.4em]{\textbf{require}
  $d^{\mathbf w}:\operatorname{EqCert}_{\mathcal T_{\Sigma,\mathcal E}}
  (\Gamma_0;\lfloor n\rfloor_{\mathbf w},
  \iota\mathbin\cdot\lfloor K\rfloor_{\mathbf w})$}
\algline[2.4em]{\textbf{return} $d^{\mathbf w}$}

\noalign{\vskip 0.35em}

\algline{\textbf{procedure} $\operatorname{canon}_{\mathcal G}^{\mathbf w}(n)$}
\algline[2.4em]{$(K,p,\sigma,\iota,\omega,\xi)\leftarrow\operatorname{canon}_{\mathcal G}(n)$}
\algline[2.4em]{$d^{\mathbf w}\leftarrow
  \operatorname{replayKernelCertificate}_{\mathcal G,\mathbf w}
  (n,K,\iota,\xi)$}
\algline[2.4em]{\textbf{return}
  $(K,p,\sigma,\iota,\omega,\xi,d^{\mathbf w})$}

\noalign{\vskip 0.35em}

\algline{\textbf{procedure}
  $\operatorname{canonLeaderPort}_{\mathcal G}(\kappa,q,\rho)$}
\algline[2.4em]{\textbf{require}
  $q\in\operatorname{LPort}_{\mathcal G,\Lambda}(\kappa)$ and
  $\rho\in\operatorname{TRen}(\Lambda,\Lambda')$}
\algline[2.4em]{\textbf{target postcondition} result is
  $Q_{\mathcal G}^{\Lambda'}(\rho\mathbin\cdot q)$}
\algline[2.4em]{\textbf{match} $(\kappa,q)$ \textbf{with}}
\algline[4.8em]{$(\texttt{One}(\tau),\mathord{\$}x)$:\quad
  \textbf{return} $\rho(\mathord{\$}x)$}
\algline[4.8em]{$(\texttt{One}(\tau),m*\ell)$:}
\algline[7.2em]{\textbf{require} $\ell$ is a leader;
  $b\leftarrow\mathsf{None}$}
\algline[7.2em]{\textbf{for each} $\pi\in G_\ell$ \textbf{do}}
\algline[9.6em]{$u\leftarrow((\rho\circ m)\circ\pi)*\ell$}
\algline[9.6em]{$b\leftarrow\operatorname{keepLeast}_{\triangleleft}(b,u)$}
\algline[7.2em]{\textbf{return} $\operatorname{get}(b)$
  \hfill\textit{// $\operatorname{id}\in G_\ell$}}
\algline[4.8em]{$(\texttt{Seq}^{J}(\kappa'),[q_1,\ldots,q_k])$:}
\algline[7.2em]{\textbf{return}
  $[\operatorname{canonLeaderPort}_{\mathcal G}
    (\kappa',q_i,\rho)]_{i=1}^{k}$}
\algline[4.8em]{$(\texttt{Bag}^{J}(\kappa'),b)$:}
\algline[7.2em]{$b'\leftarrow$ the empty finite multiplicity map}
\algline[7.2em]{\textbf{for each} $q'\in\operatorname{supp}(b)$
  \textbf{do}}
\algline[9.6em]{$u\leftarrow\operatorname{canonLeaderPort}_{\mathcal G}
  (\kappa',q',\rho)$;
  $b'(u)\leftarrow b'(u)+b(q')$}
\algline[7.2em]{\textbf{return} $b'$
  \hfill\textit{// serialization order is not part of the bag}}
\algline[4.8em]{$(\texttt{Set}^{J}(\kappa'),Y)$:}
\algline[7.2em]{\textbf{return}
  $\{\operatorname{canonLeaderPort}_{\mathcal G}
    (\kappa',q',\rho)\mid q'\in Y\}$}
\algline[4.8em]{$(\texttt{Bind}(\tau,\kappa'),
  \operatorname{bind}\mathord{\$}x{:}\tau.q)$:}
\algline[7.2em]{$\mathord{\$}y\leftarrow
  \operatorname{freshCanonicalSlot}(\tau,\operatorname{cod}(\rho))$}
\algline[7.2em]{$\rho'\leftarrow
  \rho\mathbin{\uplus}[\mathord{\$}x\mapsto\mathord{\$}y]$}
\algline[7.2em]{\textbf{return}
  $\operatorname{bind}\mathord{\$}y{:}\tau.
  \operatorname{canonLeaderPort}_{\mathcal G}(\kappa',q,\rho')$}
\algline[4.8em]{$(\texttt{BindBlock}(\beta,\kappa'),
  \operatorname{bind}_{\beta}^{o}q)$:}
\algline[7.2em]{$(o_B,\alpha_o,A_B)\leftarrow
  \operatorname{canonicalBlockOccurrence}(\beta,o,
  \operatorname{cod}(\rho))$}
\algline[7.2em]{\hspace{1em}\textit{// $o_B=(B,a_B)$ is fresh;
  $\alpha_o=a_B\circ a_o^{-1}$; $A_B$ is the transported group}}
\algline[7.2em]{$b\leftarrow\mathsf{None}$}
\algline[7.2em]{\textbf{for each} $\pi_B\in A_B$ \textbf{do}}
\algline[9.6em]{$\rho_{\pi}\leftarrow
  \rho\mathbin{\uplus}(\pi_B\circ\alpha_o)$}
\algline[9.6em]{$u\leftarrow\operatorname{bind}_{\beta}^{o_B}
  \operatorname{canonLeaderPort}_{\mathcal G}(\kappa',q,\rho_{\pi})$}
\algline[9.6em]{$b\leftarrow\operatorname{keepLeast}_{\triangleleft}(b,u)$}
\algline[7.2em]{\textbf{return} $\operatorname{get}(b)$
  \hfill\textit{// $\operatorname{id}\in\operatorname{Aut}(\beta)$}}
\hline
\end{longtable}
\endgroup

\paragraph{Exact formal contracts for the figure.}
The algorithm rows are read through the following abstract claims, not as a
verified execution of the displayed graph program.
\leanref{TSG-ALG-005}{TypedSlottedEGraphsPaper.ProfileRules.tsgAlg005_leaderKernelTracePseudocodeContract}
\textsc{alg-005} says that an abstract kernel extractor's returned finite
inclusion is injective and that its trace replays to endpoint-denotation
equality under a supplied interpreter.
\leanref{TSG-ALG-006}{TypedSlottedEGraphsPaper.ClaimLedger.TSGALG006CompleteCanonPseudocodeContract}
\textsc{alg-006} uses an extensionally complete nonempty finite candidate list:
every enumerated candidate is admissible, every admissible candidate is
enumerated, the selected candidate is present and least, the returned record
contains exactly it, and the collision key projects only the returned shape.
These are finite-presentation contracts, not a proof that the concrete
$Q_{\mathcal G}$ enumerator has them.
\leanref{TSG-ALG-007}{TypedSlottedEGraphsPaper.ProfileRules.tsgAlg007_replayKernelCertificatePseudocodeContract}
\textsc{alg-007} says that a structural trace between two dependent sigma
endpoints replays to equality under a supplied interpreter.
\leanref{TSG-ALG-008}{TypedSlottedEGraphsPaper.ProfileRules.tsgAlg008_certifiedCanonWrapperPseudocodeContract}
\textsc{alg-008} says that the certified wrapper preserves its input
structural record and adds that replayed equality.
\leanref{TSG-ALG-009}{TypedSlottedEGraphsPaper.ProfileRules.tsgAlg009_canonLeaderPortPseudocodeContract}
Finally, \textsc{alg-009} proves constructor-by-constructor that the abstract
recursive function realizes its independently defined \emph{untyped}
$\operatorname{PortQuotient}$ relation: slots are mapped, invocation candidates
are passed to the defined $\operatorname{leastByKey}$ function, sequence and bag lists are recursively
mapped, set lists are deduplicated by key, unary binders recurse, and a block
applies the same function to its supplied candidates.  It does not prove typed-schema
admissibility, graph lookup, or completeness of the supplied invocation or
block candidates.

\leanref{TSG-APP-004}{TypedSlottedEGraphsPaper.ProfileRules.tsgApp004_canonLeaderPortContract}
The exact finite-orbit claim \textsc{app-004} is correspondingly generic.  For
any finite normalization presentation $P$, $P.\mathit{normal}(x)$ belongs to
the supplied orbit of $x$ and is $P$-related to $x$, and
\[
  P.\mathit{normal}(x)=P.\mathit{normal}(y)
  \quad\Longleftrightarrow\quad P.\mathit{relation}(x,y).
\]
This is the finite-presentation normal-form theorem used by the paper; identifying $P$ with the
concrete graph-relative relation is an additional refinement obligation.

\leanref{TSG-APP-005}{TypedSlottedEGraphsPaper.ClaimLedger.TSGAPP005CertifiedAbstractRebuildContract}
The exact rebuild claim \textsc{app-005} is an abstract three-part contract.
For a supplied finite-decreasing rebuild system, every initial state reaches
some quiescent state.  A supplied interface-restriction certificate decomposes
the old natural-number size as the retained size plus further removals plus
one, so committing it strictly decreases the size.  Finally, pointwise port
evidence at one shared head and type instantiation constructs the corresponding
node-congruence witness.  Concrete collision buckets, transport of graph
witnesses, pending-certificate behavior, and Java fixed-point scheduling are
not conclusions of this theorem; they remain implementation obligations.

\subsection{Ordered Application of Rules}
\label{app:rule-order}

\leanref{TSG-RULE-001}{TypedSlottedEGraphsPaper.ProfileRules.tsgRule001_orderedRulePipeline}
The formal object for \textsc{rule-001} is the displayed list of nine stage
names: (1) parser cleanup; (2) alpha normalization; (3) beta reduction;
(4) branch elimination; (5) first NNF; (6) phase-local prenexing; (7) guard
elimination and NNF; (8) sibling and flat normalization; and (9) local
saturation and rebuild.  Its only phase theorem is that the formal
$\operatorname{normalizeWithinPhase}$ operation, presently the identity,
preserves the stored phase index.  It does not establish that the Java stages
execute in this order, that temporal children are normalized recursively, or
that arbitrary quantifier motion cannot cross a phase.  Those are artifact
conformance requirements.

\subsection{Parser and Structural Normalization}

\leanref{TSG-RULE-002}{TypedSlottedEGraphsPaper.ProfileRules.tsgRule002_parserHeadNormalization}
For \textsc{rule-002}, each finite parser-head constructor in
Table~\ref{tab:opcode-aliases} maps to the shown canonical-head constructor;
the enumerated operation-meaning label of the result equals the enumerated
label of the source.  This is a datatype equation, not a denotational theorem
about the Java parser or Alloy expressions.

\begin{table}[t]
\centering
\small
\caption{Parser heads mapped to canonical internal heads.}
\label{tab:opcode-aliases}
\setlength{\tabcolsep}{4pt}
\begin{tabular}{ll@{\qquad}ll}
\hline
Source heads & Canonical head & Source heads & Canonical head\\
\hline
\texttt{BF/AND}, \texttt{LF/AND} & \texttt{BOOL/AND} &
\texttt{BF/OR}, \texttt{LF/OR} & \texttt{BOOL/OR}\\
\texttt{BF/IMPLIES} & \texttt{BOOL/IMPLIES} &
\texttt{BF/IFF} & \texttt{BOOL/IFF}\\
\texttt{UF/NOT} & \texttt{BOOL/NOT} &
\texttt{BE/PLUS} & \texttt{REL/PLUS}\\
\texttt{BE/INTERSECT} & \texttt{REL/INTERSECT} &
\texttt{BE/JOIN} & \texttt{REL/JOIN}\\
\texttt{BE/ARROW} & \texttt{REL/ARROW} &
\texttt{BE/MUL} & \texttt{ARITH/MUL}\\
\texttt{BE/IPLUS} & \texttt{ARITH/PLUS} &
\texttt{LE/DISJOINT} & \texttt{LIST/DISJOINT}\\
\hline
\end{tabular}
\end{table}

\leanref{TSG-RULE-003}{TypedSlottedEGraphsPaper.ProfileRules.tsgRule003_parserStructuralCleanup}
Internal no-op and end markers are removed:
\[
  \operatorname{NOOP}(E)\longrightarrow E,
  \qquad
  \langle\mathrm{END}\rangle\longrightarrow\epsilon.
\]
The exact \textsc{rule-003} equations say that a singleton \textsc{noop} list
unwraps, a singleton \textsc{end} list deletes, and a singleton abstract
\textsc{incomplete} list deletes.  They contain no arity validator and no claim
about Java parser rejection.  Lexical alpha-renaming and diagnostic-name
retention are likewise implementation requirements, not conclusions of this
rule.

\leanref{TSG-RULE-004}{TypedSlottedEGraphsPaper.ProfileRules.tsgRule004_captureAvoidingLetBeta}
For intrinsically scoped de Bruijn terms, \textsc{rule-004} defines
\[
  \operatorname{let}\ x=E\mid P\longrightarrow P[E/x].
\]
Precisely, reducing the formal $\mathsf{letE}(E,P)$ returns
$\mathsf{some}(\operatorname{substituteTop}(E,P))$, and lifted substitution
fixes the newly bound coordinate $0$.  The index types enforce scoping; the
theorem does not assert semantic preservation for a concrete Alloy AST or its
Java substitution routine.

\subsection{Branch Connectives and Boolean NNF}

\leanref{TSG-RULE-005}{TypedSlottedEGraphsPaper.ProfileRules.tsgRule005_branchConnectiveElimination}
The third rule below applies to formula-valued conditionals:
\begin{align*}
 A\Rightarrow B
   &\longrightarrow \neg A\lor B,\\
 A\Leftrightarrow B
   &\longrightarrow (\neg A\lor B)\land(\neg B\lor A),\\
 \operatorname{ite}(C,A,B)
   &\longrightarrow (C\land A)\lor(\neg C\land B).
\end{align*}
The same Boolean equations imply
\begin{align*}
 \neg(A\Rightarrow B)&\longrightarrow A\land\neg B,\\
 \neg(A\Leftrightarrow B)&\longrightarrow
   (A\land\neg B)\lor(B\land\neg A).
\end{align*}
These five \textsc{rule-005} equalities are proved only for the formal Boolean
truth functions.  Classifying and rewriting formula-valued versus
expression-valued Alloy conditionals is an artifact-refinement obligation.

\leanref{TSG-RULE-006}{TypedSlottedEGraphsPaper.ProfileRules.tsgRule006_booleanNegationNormalForm}
The Boolean NNF rules are
\begin{align*}
 \neg\top&\longrightarrow\bot,&
 \neg\bot&\longrightarrow\top,&
 \neg\neg A&\longrightarrow A,\\
 \neg\bigwedge_i A_i&\longrightarrow\bigvee_i\neg A_i,&
 \neg\bigvee_i A_i&\longrightarrow\bigwedge_i\neg A_i.
\end{align*}
Here \textsc{rule-006} is exactly the two constant equations, double negation,
and the two De Morgan equations for finite lists of Booleans.

\leanref{TSG-RULE-007}{TypedSlottedEGraphsPaper.ProfileRules.tsgRule007_involutiveAtomicNegationDuals}
Atomic negation uses the following involutive duals:
\[
\begin{array}{c@{\quad\leftrightarrow\quad}c@{\qquad}c@{\quad\leftrightarrow\quad}c}
 = & \ne & > & \le \\
 \ge & < & \in & \notin \\
 \multicolumn{4}{c}{\operatorname{some}R\quad\leftrightarrow\quad\operatorname{no}R}
\end{array}
\]
The exact \textsc{rule-007} theorem is conditional: if the finite
$\operatorname{atomicDual}$ function returns $o^\bot$ for $o$, it returns $o$
for $o^\bot$, and normalizing the negation of $o$ returns the atom
$o^\bot$.  Behavior of parser-specific negative comparison heads and
unsupported operators is not part of this theorem; Java conformance to the
table is checked separately.

\subsection{Temporal Negation}

\leanref{TSG-RULE-008}{TypedSlottedEGraphsPaper.ProfileRules.tsgRule008_temporalNegationDuals}
The supported temporal duals are
\begin{align*}
 \neg\operatorname{always}A
   &\longrightarrow\operatorname{eventually}\neg A,\\
 \neg\operatorname{eventually}A
   &\longrightarrow\operatorname{always}\neg A,\\
 \neg\operatorname{historically}A
   &\longrightarrow\operatorname{once}\neg A,\\
 \neg\operatorname{once}A
   &\longrightarrow\operatorname{historically}\neg A,\\
 \neg(A\ \mathsf{until}\ B)
   &\longrightarrow(\neg A)\ \mathsf{releases}\ (\neg B),\\
 \neg(A\ \mathsf{releases}\ B)
   &\longrightarrow(\neg A)\ \mathsf{until}\ (\neg B),\\
 \neg(A\ \mathsf{since}\ B)
   &\longrightarrow(\neg A)\ \mathsf{triggered}\ (\neg B),\\
 \neg(A\ \mathsf{triggered}\ B)
   &\longrightarrow(\neg A)\ \mathsf{since}\ (\neg B).
\end{align*}
The exact \textsc{rule-008} theorem has the same conditional form: whenever
$\operatorname{temporalDual}(o)=o^\bot$, the reverse lookup returns $o$ and
normalization returns the dualized constructor $o^\bot$.  It does not prove a
temporal semantics, phase separation, or Java handling of \textsf{before} and
\textsf{after}.

\subsection{Quantifier Rewrites and Prenexing}

\leanref{TSG-RULE-009}{TypedSlottedEGraphsPaper.ProfileRules.tsgRule009_quantifierNegation}
Write $\mathcal Qx{:}D.P$ suggestively, but \textsc{rule-009} is formalized as
evaluation of a quantifier over a finite list of Boolean matrix values.
Negation follows exactly
\begin{align*}
 \neg(\forall x{:}D.P)&\longrightarrow\exists x{:}D.\neg P,\\
 \neg(\exists x{:}D.P)&\longrightarrow\forall x{:}D.\neg P,\\
 \neg(\mathsf{no}\ x{:}D.P)&\longrightarrow\exists x{:}D.P,\\
 \neg(\mathsf{one}\ x{:}D.P)&\longrightarrow\mathsf{notone}\ x{:}D.P,\\
 \neg(\mathsf{notone}\ x{:}D.P)&\longrightarrow\mathsf{one}\ x{:}D.P,\\
 \neg(\mathsf{lone}\ x{:}D.P)&\longrightarrow\mathsf{notlone}\ x{:}D.P,\\
 \neg(\mathsf{notlone}\ x{:}D.P)&\longrightarrow\mathsf{lone}\ x{:}D.P.
\end{align*}
The theorem is the corresponding Boolean-list equality for all seven formal
quantifier constructors; it is not an Alloy evaluator correspondence theorem.

\leanref{TSG-RULE-010}{TypedSlottedEGraphsPaper.ProfileRules.tsgRule010_emptyAdmissibleBindings}
For an empty domain, independently of $P$,
\[
\begin{array}{rcl@{\qquad}rcl}
 \forall x{:}\varnothing.P&\longrightarrow&\top &
 \exists x{:}\varnothing.P&\longrightarrow&\bot\\
 \mathsf{no}\ x{:}\varnothing.P&\longrightarrow&\top &
 \mathsf{one}\ x{:}\varnothing.P&\longrightarrow&\bot\\
 \mathsf{lone}\ x{:}\varnothing.P&\longrightarrow&\top &
 \mathsf{notone}\ x{:}\varnothing.P&\longrightarrow&\top\\
 \mathsf{notlone}\ x{:}\varnothing.P&\longrightarrow&\bot.&&
\end{array}
\]
For \textsc{rule-010}, ``empty'' means exactly that the formal admissible-value
list is $[]$; the seven displayed values are then proved by evaluation.
Determining emptiness from Alloy declaration syntax and multiplicity is outside
this theorem and is a required frontend/Java check.

\leanref{TSG-RULE-011}{TypedSlottedEGraphsPaper.ProfileRules.tsgRule011_constantBodyQuantifiers}
The carrier-independent constant-body rules are
\[
\begin{array}{rcl@{\qquad}rcl}
 \forall x{:}D.\top&\longrightarrow&\top &
 \exists x{:}D.\bot&\longrightarrow&\bot\\
 \mathsf{no}\ x{:}D.\bot&\longrightarrow&\top &
 \mathsf{one}\ x{:}D.\bot&\longrightarrow&\bot\\
 \mathsf{lone}\ x{:}D.\bot&\longrightarrow&\top &
 \mathsf{notone}\ x{:}D.\bot&\longrightarrow&\top\\
 \mathsf{notlone}\ x{:}D.\bot&\longrightarrow&\bot.&&
\end{array}
\]
These are exactly the seven \textsc{rule-011} equalities for a list of any
length.  No value is asserted for existential true or universal false, whose
Boolean-list values depend on whether that list is empty.

\leanref{TSG-RULE-012}{TypedSlottedEGraphsPaper.ProfileRules.tsgRule012_phaseLocalPrenexAndDomainGuarding}
The exact \textsc{rule-012} statement is four finite Boolean-list equations.
For a carrier list $C$, predicates $p,d:C\to\mathbb B$, and Boolean $r$,
\begin{align*}
 (\exists_C p)\land r&=\exists_C(\lambda x.\,p(x)\land r),&
 (\forall_C p)\lor r&=\forall_C(\lambda x.\,p(x)\lor r),\\
 \exists_{\operatorname{filter}(d,C)}p
   &=\exists_C(\lambda x.\,d(x)\land p(x)),&
 \forall_{\operatorname{filter}(d,C)}p
   &=\forall_C(\lambda x.\,\neg d(x)\lor p(x)).
\end{align*}
Its binding-tuple operation changes only the primitive carrier and preserves
the quantifier, lower and upper multiplicities, disjointness class, and phase
fields.  Fresh-variable conditions, general AST free-variable preservation,
cardinality lowering, sums, and comprehensions are not conclusions of this
claim.

\leanref{TSG-RULE-013}{TypedSlottedEGraphsPaper.ProfileRules.tsgRule013_descriptorCertifiedBlockPermutation}
For \textsc{rule-013}, an accepted finite permutation is packaged with a proof
that the supplied binder descriptor certifies it.  The theorem proves that it
occurs in the descriptor's supplied automorphism list and that, for every
environment coordinate $i$,
\[
  \mathit{env}\bigl(\pi^{-1}(\pi(i))\bigr)=\mathit{env}(i).
\]
It does not establish semantic preservation of a quantified Alloy body or
preservation of additional descriptor metadata; those are separate refinement
obligations.

\subsection{Flexible-Arity Structural Laws}

\leanref{TSG-RULE-014}{TypedSlottedEGraphsPaper.ProfileRules.tsgRule014_completeOperatorLawBoundary}
For \textsc{rule-014}, Table~\ref{tab:operator-laws} is exactly the finite
$\operatorname{operatorLaw}$ lookup table.  Its three Boolean columns are
policy flags, not semantic certificates.  In particular, the theorem proves
the noninteger lookup equations; the integer row belongs to
\textsc{rule-015}.  Neither theorem derives a source-level law merely from a
flag.

\begin{table}[t]
\centering
\footnotesize
\caption{Formal carrier assignments and policy flags.}
\label{tab:operator-laws}
\setlength{\tabcolsep}{4pt}
\begin{tabular}{p{0.25\linewidth}p{0.20\linewidth}p{0.22\linewidth}p{0.17\linewidth}}
\hline
Formal operator & Carrier & C/I flags & Flat flag\\
\hline
$\mathsf{boolAnd},\mathsf{boolOr}$ & $\mathsf{setPlus}$ & true/true & true\\
$\mathsf{relationalUnion}$,\newline $\mathsf{relationalIntersection}$ & $\mathsf{setPlus}$ & true/true & true\\
$\mathsf{integerAddition}$,\newline $\mathsf{integerMultiplication}$ & $\mathsf{bagPlus}$ & true/false & profile-dependent\\
$\mathsf{relationalJoin}$,\newline $\mathsf{relationalProduct}$ & $\mathsf{fixedBinary}\allowbreak\mathsf{Positional}$ & false/false & false\\
$\mathsf{equality},\mathsf{inequality}$,\newline $\mathsf{biconditional}$ & $\mathsf{fixedBinary}\allowbreak\mathsf{Commutative}$ & true/false & false\\
$\mathsf{disjoint}(k)$ & $\mathsf{nonflatBag}$ & true/false & false\\
$\mathsf{call}(g,k)$ & $\mathsf{positional}(k)$ & false/false & false\\
$\mathsf{totalOrder}(k)$ & $\mathsf{positional}(k)$ & false/false & false\\
\hline
\end{tabular}
\end{table}

The disjoint, call, and total-order equations are quantified over their natural
number arities (and, for calls, the callee identifier).  Whether the Java
frontend enforces source arities and whether a policy flag is backed by the
required endpoint certificates are separate artifact checks.  No additional
role-sensitive-list or dependent-chain row is covered by \textsc{rule-014}.

\leanref{TSG-RULE-015}{TypedSlottedEGraphsPaper.ProfileRules.tsgRule015_modularIntegerACProfile}
The exact \textsc{rule-015} model uses $\operatorname{Fin}(2^w)$.  Its addition
and multiplication are proved associative and commutative; the modular profile
has both flat flags set, and the overflow-forbidding profile has both cleared.
\leanref{TSG-CE-003}{TypedSlottedEGraphsPaper.ProfileRules.tsgCe003_overflowReassociationCounterexample}
Separately, the checked signed four-bit functions prove
$\operatorname{checkedAdd4}(7,1)=\mathsf{none}$,
$\operatorname{checkedAdd4}(1,-1)=\mathsf{some}(0)$, and hence the left
association fails while the right returns $7$.  Connecting either model to an
Alloy or Java integer mode is not part of these Lean results.

In the abstract \textsc{alg-009} procedure, sequence and bag constructors map
their child lists recursively, whereas a set constructor additionally applies
the supplied key-deduplication function.  This statement is about lists in the
formal quotient datatype; it does not prove multiset semantics, Java sorting,
or graph-invocation equality.

\subsection{Local Saturation Rules}

\leanref{TSG-RULE-016}{TypedSlottedEGraphsPaper.ProfileRules.tsgRule016_booleanLocalSaturation}
The Boolean identities, annihilators, and complements are
\begin{align*}
 A\land A&\longrightarrow A,& A\lor A&\longrightarrow A,\\
 A\land\top&\longrightarrow A,& A\lor\bot&\longrightarrow A,\\
 A\land\bot&\longrightarrow\bot,& A\lor\top&\longrightarrow\top,\\
 A\land\neg A&\longrightarrow\bot,&
 A\lor\neg A&\longrightarrow\top.
\end{align*}
These eight \textsc{rule-016} identities are Boolean equations.  The associated
formal smart constructor returns a constant on an empty input list, the child
on a singleton, and a nonempty-port record otherwise; its stored-port arity can
never be $0$.  No theorem here connects that datatype to a frontend
$\texttt{Set}^+$ node or to an emitted unit certificate.

\leanref{TSG-RULE-017}{TypedSlottedEGraphsPaper.ProfileRules.tsgRule017_atomicComplementSaturation}
For \textsc{rule-017}, an atomic literal consists only of an arbitrary atom and
a positive/negative polarity.  Complement flips the polarity while retaining
the same atom, and for any Boolean valuation
\[
  \ell\land\overline\ell=\bot,
  \qquad
  \ell\lor\overline\ell=\top.
\]
Recognizing concrete operator duals and equal canonical e-class invocations is
an implementation premise, not a conclusion of this theorem.

\leanref{TSG-RULE-018}{TypedSlottedEGraphsPaper.ProfileRules.tsgRule018_relationalLocalSaturation}
For \textsc{rule-018}, relations are abstract predicates
$\alpha\to\mathsf{Prop}$.  Given a witness that $x$ is nonempty, the theorem
proves pointwise
\begin{align*}
 x\in\mathsf{none}&\longrightarrow\bot,&
 x\in\mathsf{univ}&\longrightarrow\top,\\
 R+\mathsf{none}&\longrightarrow R,&
 R\mathbin{\&}\mathsf{none}&\longrightarrow\mathsf{none},\\
 R+R&\longrightarrow R,&
 R\mathbin{\&}R&\longrightarrow R.
\end{align*}
and also that the empty predicate is a subset of itself.  This theorem contains
no typed Alloy AST, multiplicity analysis, ACI-port inference, or Java rewrite
claim.

\leanref{TSG-RULE-019}{TypedSlottedEGraphsPaper.ProfileRules.tsgRule019_implicationLocalSaturation}
Finally, \textsc{rule-019} proves the Boolean identities
\begin{align*}
 \bot\Rightarrow A&\longrightarrow\top,&
 \top\Rightarrow A&\longrightarrow A,\\
 A\Rightarrow\top&\longrightarrow\top,&
 A\Rightarrow\bot&\longrightarrow\neg A,
\end{align*}
and the Boolean equality $A\Rightarrow B=\neg A\lor B$.  Stage placement and
Java saturation behavior are outside this equality theorem.

\subsection{Boundary and Implementation Correspondence}

\leanref{TSG-SCOPE-001}{TypedSlottedEGraphsPaper.ProfileRules.tsgScope001_zeroCanonicalDistanceBoundary}
The exact \textsc{scope-001} theorem is about an arbitrary complete finite
normalization presentation $P$.  Define its discrete canonical distance to be
$0$ when $P.\mathit{normal}(x)=P.\mathit{normal}(y)$ and $1$ otherwise.  Then
\[
  \operatorname{canonicalDistance}_P(x,y)=0
  \quad\Longleftrightarrow\quad P.\mathit{relation}(x,y).
\]
This follows from the presentation's normal-form exactness.  It neither proves
that the individual rules in this appendix generate one combined $P$, nor
identifies this $0/1$ function with the dataset's AST edit distance, nor states
a Java-refinement premise or conclusion.

The remaining production statements are artifact-only obligations.  In
particular, graph-relative typed-renaming enumeration, binder-group closure,
the behavior of \texttt{SlotPermutationGroup}, the 720-permutation guard, and
the 32-pass rewrite bound are not Lean conclusions.  The publication run gives
in-process checked observations under its recorded rule set; it does not prove
those correspondences and cannot be cited as realizing \textsc{scope-001} or
the abstract algorithm contracts above.

\end{document}